\pgfplotsset{compat=1.12}
\theoremstyle{definition}
\newtheorem{definition}{Definition}[section]
\newtheorem{example}{Example}
\newtheorem{proposition}[definition]{Proposition}
\theoremstyle{plain}
\newtheorem{theorem}[definition]{Theorem}
\newtheorem{lemma}[definition]{Lemma}
\newtheorem{corollary}[definition]{Corollary}
\newtheorem{conjecture}{Conjecture}
\theoremstyle{remark}
\newtheorem*{remark}{Remark}
\newcommand{\comments}[1]{} 
\newcommand{\tr}{\mathrm{Tr}}
\DeclareMathOperator{\U}{U}
\DeclareMathOperator{\SU}{SU}
\DeclareMathOperator{\Gadget}{IG}
\DeclareMathOperator{\Ima}{Im}
\DeclareMathOperator{\CNOT}{\mathsf{CNOT}}
\DeclareMathOperator{\SQSW}{\mathsf{SQSW}}
\DeclareMathOperator{\QFT}{\mathsf{QFT}}
\DeclareMathOperator{\SWAP}{\mathsf{SWAP}}
\DeclareMathOperator{\B}{\mathsf{B}}
\DeclareMathOperator{\iSWAP}{\mathsf{iSWAP}}
\DeclareMathOperator{\SQiSW}{\mathsf{SQiSW}}
\DeclareMathOperator{\CCZ}{\mathsf{CCZ}}
\DeclareMathOperator{\Toffoli}{\mathsf{Toffoli}}
\DeclareMathOperator{\CSWAP}{\mathsf{CSWAP}}
\DeclareMathOperator{\CiSWAP}{\mathsf{CiSWAP}}
\DeclareMathOperator{\Peres}{\mathsf{Peres}}
\DeclareMathOperator{\Margolus}{\mathsf{Margolus}}
\DeclareMathOperator{\T}{\mathcal{T}}
\newcommand{\F}{\mathbb{F}}
\def\>{\rangle}
\def\<{\langle}
\newcommand{\done}[1]{ { \color{green} done }}
\begin{document}

\title{Convergence efficiency of quantum gates and circuits}

\newcommand{\cC}{\mathcal{C}}

\author[1]{Linghang Kong}
\author[2]{Zimu Li}
\author[2]{Zi-Wen Liu}
\affil[1]{Zhongguancun Laboratory}
\affil[2]{Yau Mathematical Sciences Center, Tsinghua University}

\date{}

\maketitle

\begin{abstract} 
We consider quantum circuit models where the gates are drawn from arbitrary gate ensembles given by probabilistic distributions over certain gate sets and circuit architectures, which we call {stochastic quantum circuits}. Of main interest in this work is the speed of convergence of stochastic circuits with different gate ensembles and circuit architectures to unitary $t$-designs, which we analyze through the spectral gaps of the associated moment operators. A key motivation for this theory is the varying preference for different gates and circuit architectures in different practical scenarios, such as different experimental platforms. In particular, it provides a versatile framework for devising efficient circuits for implementing $t$-designs and relevant applications including random circuit and scrambling experiments, as well as evaluating and optimizing the efficiency of gates and circuit architectures. We examine various important settings in depth, showcasing numerous useful analyses and findings along the way. 

A key aspect of our study is an ``ironed gadget'' model made up of entangling gates and random single-qubit gates, which allows us to systematically evaluate and compare the convergence efficiency of entangling gates and circuit architectures. Particularly notable results include i) gadgets of two-qubit gates with KAK coefficients $\left(\frac{\pi}{4}-\frac{1}{8}\arccos(\frac{1}{5}),\frac{\pi}{8},\frac{1}{8}\arccos(\frac{1}{5})\right)$ (which we call $\chi$ gates) directly form exact 2- and 3-designs and are thus optimal in the sense of autoconvolution; ii) the $\iSWAP$ gate family achieves the best efficiency for convergence to 2-designs under mild conjectures with numerical evidence, even outperforming the Haar-random gate, for generic many-body circuits; iii) $\iSWAP$ + complete graph achieve the best efficiency for convergence to 2-designs among all 2-local random ensembles built on graphs. 
A variety of numerical results are provided to complement our analysis. We also derive robustness guarantees for our analysis against gate perturbations.
Additionally, we provide cursory analysis on gates with higher locality and found, for instance, that the $\Margolus$ gate outperforms several other well-known gates and 2-local gates.
\end{abstract}

\tableofcontents

\section{Introduction}

Quantum circuits represent a fundamental model for quantum computation, in which certain sequences of local quantum gates acting on a limited number of subsystems (like qubits) generate the global evolution of many-body quantum systems. These local gates describe the elementary operations that can be done in a time step in practical setups, which serve as building blocks of global quantum computation. The fact that local quantum circuits are capable of generating arbitrary global evolution of quantum systems underpins the feasibility of quantum computation and other technologies. 
As a framework that naturally captures the locality of physical interactions and thereby induces a time scale, quantum circuits have also emerged as a suitable model for complex many-body quantum systems which plays key roles in recent developments of physics.

Modeling how global quantum randomness is generated from random local dynamics, the random quantum circuits are of particular importance from both practical and theoretical perspectives.
First of all, they have found a broad array of technological applications across areas including quantum device benchmarking~\cite{Randomized2005,knill2008randomized,Bannai2021,Elben_2022}, information theory~\cite{hayden2004randomizing,Hayden_2006,Hayden-Preskill2007,dupuis2010decouplingapproachquantuminformation,Dupuis_2014,HastingsSuperAdd,Haferkamp2024linear}, error correction~\cite{brown2013short,Brown_2015,preskill2021approxeastinknill,kong2022near},  tomography~\cite{Scott2008,Huang2020shadow,Bertoni2024,Elben_2022,Zhu2024Clifford}, machine learning~\cite{biamonte2017quantum,NetKet,Zheng2021SpeedingUL,Liu2022QNTK,Liu2023QNTK,Elben_2022}, as well as the demonstration of quantum computational supremacy~\cite{arute2019quantum,Bertoni2024,decross2024}.
On the other hand, the random circuit models have also attracted major interest and extensive investigations in physics contexts in recent years as canonical models of quantum dynamics, offering critical insights into the nature of complex quantum many-body systems. This has significantly advanced our understanding of various related  dynamical phenomena under the names of information scrambling, thermalization, chaos, complexity growth etc.~\cite{Hayden-Preskill2007,SekinoSusskind2008,Hayden2013FastScrambe,nahum2018operator,Liu_2018,Swingle2022,SUd-k-Design2023Application,khemani2018operator,von2018operator,Huang2019OTOC,Motta_2019,roberts2016chaos,junyu2017chaos,RobertsChaos2017,Brown-Susskind2018,haferkamp2022linear}, which have been a key driving force of recent developments at the frontiers of quantum many-body physics and gravity.
Furthermore, underlying the study of random quantum circuits revolves are profound connections with wide-ranging areas of pure mathematics including probability theory, stochastic processes, dynamical systems, geometry, representation theory etc.~(see e.g.~the discussions on design generation in the following), which have brought forth various problems and directions of mathematical interest and significantly enriched the interaction between quantum information and mathematics.
Conceptually, random ensembles provide us with a powerful yet often tractable lens for understanding the generic features and behaviors of quantum systems which are of fundamental importance from diverse perspectives.

A central problem in the study of random quantum circuits is to understand and optimize the rate, or the requisite circuit depth (time), for local circuit ensembles to converge to unitary $t$-designs (we may simply refer to them as $t$-designs when there is no ambiguity hereafter), i.e., statistically ``pseudorandom'' distributions of unitaries that mimic the Haar (uniform) measure up to $t$-th moments. The study of this design convergence problem has a long history~\cite{Dankert2026PRA,Gross2006,Oliveira2design2007a,Oliveira2design2007b,Scott2008,HarrowTEP08,HarrowTEP09,Harrow2design2009,Arnold2009,Diniz_2011,szehr2013decoupling,webb2015clifford,zhu2016clifford,zhu2017multiqubit,BHH16,BHH16prl,hunter2019unitary}, and has seen extensive progress lately~\cite{brandao2021models,Bannai2021,harrow2023approximate,U(1)Design2023,MarvianSUd,SUd-k-Design2023,Marvian3local,MarvianDesign,mittal2023local,Haah2024Pauli,Metger2024,Haferkamp2024linear,mitsuhashi2024Designs,Schuster2024lowDepth}. Instead of the full Haar measure which requires exponential circuit depths to approximate~\cite{knill1995approximationquantumcircuits}, the technological applications of random ensembles listed before generally only require low-order design properties. Moreover, the  most important physical phenomena of interest to the study of the physics of complex quantum systems, such as quantum information scrambling~\cite{roberts2016chaos,RobertsChaos2017} and global entanglement generation~\cite{OliveiraDahlstenPlenio08,roberts2016chaos,Liu_2018,PhysRevLett.120.130502} already manifest themselves at the level of $t$-designs with the critical landmark being $t=2$.

In random circuit constructions, it is common to take Haar-random local gates as the building blocks. Here, we consider and explore a more general scheme of random circuits, which we call \emph{stochastic quantum circuits}, where the gates can be drawn from generic ensembles. This provides a versatile framework not only for studying the generation of quantum (pseudo)randomness and scrambling, but furthermore, for benchmarking and optimizing the efficiency of quantum gates and circuits. Particularly, this theory has evident motivations from experiment perspectives. First, certain gates may be easier to implement than others for a certain experimental setup or platform, or more formally, different gates are associated with different weights or costs in particular setups, naturally linking to the general stochastic quantum circuit model. Therefore, a systematic framework for evaluating  the efficiency and power of different gates and circuit architectures would be highly desirable for experimental efforts, offering a guideline for optimizing the experimental designs for random circuit experiments and quantum tasks in general. There are several works in the literature that have considered the comparison of different gates and circuit architectures, mostly based on numerical observations~\cite{Znidaric07:2,WBV08,BWV08:cluster}. However, a rigorous and systematic theory remains to be established.  

Here we examine the convergence speed of stochastic quantum circuits with different gate ensembles and circuit architectures to unitary $t$-designs through the spectral gaps of the associated moment operators. To begin with, we  consider various natural gate ensembles such as those composed by Hadamard and phase gates and standard entangling gates, and showcase various basic types of analysis valuable for gate and circuit design that can be done using the stochastic circuit framework. It is found that these simple examples already exhibit surprisingly complicated features even when considering the convergence to low-order designs, signifying the profound behaviors of stochastic circuits.

A major part of this work is based on what we call the ``ironed gadget'' model, motivated by the ``easiness'' of single-qubit gates. In particular, single-qubit gates can be rather easily implemented to high precision and commonly induce negligible cost compared to entangling gates in experiments nowadays.
Specifically, the ingredients of this model are gate gadgets given by a certain entangling gate supplemented by single-qubit gates set to be Haar-random and thus effectively averaged out.
The motivations can be perceived from various perspectives. For instance, to evaluate gate and circuit efficiency, it is desirable to ``mod out'' the effects of single-qubit gates due to their low cost.  As such, this model lays the ground for a rigorous, refined theory for comparing the efficiency of entangling gates and identifying particularly good ones. Moreover, 
the gadgets provide a practically relevant scheme for implementing random entangling gates,  potentially yielding particularly favorable constructions for experiments combining with the efficiency results.
Mathematically, the model also aligns with Cartan's KAK decomposition of 2-qubit gates~\cite{Khaneja2001,Zhang2003KAK} into 1-qubit gates and 2-qubit Pauli rotations through angles referred to as KAK coefficients. Our key findings based on the ironed gadget theory are summarized as follows.
First, we find a new two-qubit gate families with KAK coefficients $(\frac{\pi}{4}-\frac{1}{8}\arccos(\frac{1}{5})$, $\frac{\pi}{8}$, $\frac{1}{8}\arccos(\frac{1}{5}))$, which we call $\chi$ gates, that are solutions for gadgets to form exact 2- and 3-designs on 2-qubit systems. That is, we only need to apply a gate from this family once together with single-qubit Haar-random gates to construct an exact 2- and 3-design. 
That is, they produce exactly the same moment operators as the Haar-random gate and achieve the best efficiency among all gadgets.
The existence of such exact solutions for order 2 and 3 turns out to be a fortuity: we can prove that there does not exist any solution of the KAK coefficients that can form exact designs of higher orders.
Then, for many-body circuits defined on e.g., the typical 1D chain and all-to-all graphs, we find that the $\iSWAP$ gate family is particularly efficient, even outperforming the 2-qubit Haar-random gate,  for convergence to 2-designs. More specifically, we prove that $\iSWAP$  achieves the highest efficiency within a significant portion of the Weyl chamber and provide evidence for its highest efficiency among all 2-qubit gates.  In particular, we show that $\iSWAP$ + complete graph architecture achieves the best efficiency for convergence to 2-designs among all 2-local random circuits defined on arbitrary graphs. 
Moreover, we also provide numerical results for an array of important gates and graphs as important complements to the theoretical analysis.
For higher designs, we provide numerical comparisons and find that neither the local Haar-random gates nor  $\iSWAP$ remains the best choice, which demonstrates the fact that the behaviors of different moments can exhibit significant distinctions. The theoretical exploration of  efficient generation of high-order designs is left as an avenue for future work.
Additionally, we provide cursory analysis on gates with higher locality and found, for instance, that the  $\Margolus$ gate is the most efficient among a set of well-known 3-qubit and outperforms 2-local gates in forming 2-designs. 
We also study the Clifford gates plus phase gates model. Notably, for varying  phase angle and probability distribution, we  derive analytical results for the fastest convergence to 4- and 5-designs on any $n$-qubit system.
{In addition to the main findings outlines here, there are a variety of detailed theoretical and numerical results that can be found in the paper.}

This paper is organized as follows. In Section~\ref{sec:framework}, we define the central concepts and introduce the theory for benchmarking the efficiency of a random ensemble in design generation through the spectral gap of the associated moment operators. In Section~\ref{sec:one-qubit}, as heuristic examples for the stochastic circuit framework, we present numerical results for certain natural ensembles on 1- and 2-qubit systems. Section~\ref{sec:two-qubit} is devoted to the ironed gadget model. We first present relevant definitions and then the multifaceted mathematical theory for evaluating and comparing the convergence efficiency of gadgets based on different entangling gates. Besides circuits defined on graphs, we numerically study other architectures such as the brickwork model in Section \ref{sec:architectures}. Then we verify in Section \ref{sec:Perturbation} the robustness of the gadget models when the KAK coefficients are perturbed in real experiments. We provide a summary of the main results for readers' convenience in Section~\ref{sec:summary}, and leave several conjectures in Section~\ref{sec:conjectures} based on our numerical and theoretical findings. In Section~\ref{sec:multiqubit}, we extend our study to multiqubit gates. Especially, we provide numerical analysis for several typical 3-qubit gates. In Section~\ref{sec:large-n}, we study the Clifford plus diagonal gate set and provide solutions for its fast convergence to 4- and 5-designs.

\section{Framework: stochastic quantum circuits and efficiency of design generation}\label{sec:framework}

We start by formally introducing our framework and key definitions. Consider a general distribution/ensemble $\mathcal{E}$ of unitary quantum gates given by some gate set $\mathcal{U}$ and a probability measure $\nu$ over $\mathcal{U}$. A \emph{stochastic quantum circuit} is defined by applying a gate $U$ drawn randomly from the ensemble $\mathcal{E}$, denoted as $U \sim \nu$, at each time step.  

We are interested in the speed at which certain stochastic quantum circuits converges to unitary $t$-designs. Starting with formal definitions of the central concepts, we shall formally introduce the mathematical foundation of our study in this section.

\begin{definition}[$t$-fold channels and $t$-th moment (super-)operators]
	Let $\mathcal{H} = (\mathbb{C}^2)^{\otimes n}$ be the $n$-qubit Hilbert space. Given an arbitrary operator $M \in \operatorname{End}(\mathcal{H}^{\otimes t})$, the \emph{$t$-fold (twirling) channel} associated with unitary ensemble $\mathcal{E}$ is given by \begin{align}\label{eq:MomentOperator}
		\mathscr{T}_t^{\mathcal{E}}(M) = \int_{\mathcal{E}}  U^{\otimes t} M U^{\dagger \otimes t} dU.
	\end{align}
	As a linear map acting on $\operatorname{End}(\mathcal{H}^{\otimes t})$, it can be reformulated as the \emph{$t$-th moment (super-)operator}:
	\begin{align}
		\T_t^{\mathcal{E}} = \int_{\mathcal{E}} U^{\otimes t } \otimes \bar{U}^{ \otimes t} dU,
	\end{align}
	where $\bar U$ is the complex conjugate of $U$. Particularly, when $\mathcal{E}$ is taken to be the Haar measure of the $n$-qubit unitary group $\U(\mathcal{H}) \cong \U(2^n)$, the corresponding moment operator is simply denoted by $\T_t^{\U(2^n)}$.
\end{definition}

\begin{definition}[Unitary $t$-designs]
    An ensemble $\mathcal{E}$ of $n$-qubit unitaries is an \emph{(exact) unitary $t$-design}  if $\T_t^{\mathcal{E}} = \T_t^{\U(2^n)}$. That is, the $t$-th moment of the ensemble $\mathcal{E}$ matches that of the Haar measure. 
\end{definition}

The bi-invariance of Haar measure implies that $\T_t^{\U(2^n)}$ is a \emph{projector} from $\operatorname{End}(\mathcal{H}^{\otimes t})$ onto the \emph{commutant algebra} 
\begin{align}
	\text{Comm}_t( \U(2^n) ) \vcentcolon = \{M \in \operatorname{End}(\mathcal{H}^{\otimes t}); U^{\otimes t} M = M U^{\otimes t} \}, 
\end{align}
Therefore, eigenvalues of $\T_t^{\U(2^n)}$ are either 0 or 1. Also see Refs.~\cite{Dankert2026PRA,Gross2006,HarrowTEP08,HarrowTEP09} for more details.

Generally, for any $M \in \operatorname{End}( ((\mathbb{C}^2)^{\otimes n})^{\otimes t} ) \cong ((\mathbb{C}^2)^{\otimes n})^{\otimes 2t}$, \emph{Schur--Weyl duality}~\cite{Goodman2009,Tolli2009} indicates that $\T_t^{\U(2^n)}$ projects $M$ into the subspace spanned by the action of the symmetric group $S_t$ permuting tensors from $((\mathbb{C}^2)^{\otimes n})^{\otimes t}$. That is,
\begin{align}
	\text{Comm}_t( \U(2^n) ) = \mathrm{span} \{\sigma \in S_t\} \subset \operatorname{End}( ((\mathbb{C}^2)^{\otimes n})^{\otimes t} ),
\end{align}
where $\sigma$ is a permutation on the $t$-fold Hilbert spaces.
Given any permutation $\sigma \in S_t$ and integers $1 \leq i_1 < \cdots < i_l \leq t$, they form an \emph{increasing subsequence} of $\sigma$ if $\sigma(i_1) < \cdots < \sigma(i_l)$. It is proved in Ref.~\cite{Rains1998} that the number of linearly independent $S_t$ actions equals the number of permutations with no increasing subsequence of length greater than the dimension, which is $2^n$ here. 
Let $D \coloneqq \dim \text{Comm}_t( \U(2^n) )$ denote the dimension of the commutant.
Obviously, if $t \leq 2^n$, all those permutations are linearly independent and hence  $D = t!$. However, for the simplest $n=1$ (single-qubit) system~\cite{Rains1998}, 
\begin{align}\label{eq:1qDimension}
	D = \dim \text{Comm}_t( \text{U}(2) ) = \dim \Ima \T_t^{\U(2)} = \frac{(2t)!}{t!(t+1)!},
\end{align} 
which is strictly less than $t!$ when $t > 2$.

Assume $T_t^{\mathcal{E}}$ is Hermitian, which can be easily fulfilled by incorporating the gate and its inverse with equal probability when defining $\mathcal{E}$. Due to the bi-invariance of Haar measure, $\T_t^{\mathcal{E}}, \T_t^{\U(2^n)}$ are commutative and thus can be simultaneously diagonalized. It is straightforward to see that as long as the infinity norm $\Vert \T_t^{\mathcal{E}} - \T_t^{\U(2^n)} \Vert_{\infty} < 1$, 
\begin{align}
	\Vert (\T_t^{\mathcal{E}})^p - \T_t^{\U(2^n)} \Vert_{\infty} \to 0
\end{align} 
as we take larger and larger $p$ which characterizes the number of steps or depth of the random circuit. In this case,  $\T_t^{\mathcal{E}}$ and $\T_t^{\U(2^n)}$ must share the same unit eigenspace, which is exactly $\operatorname{Comm}_t( \U(2^n) )$. Then $\Vert \T_t^{\mathcal{E}} - \T_t^{\U(2^n)} \Vert_{\infty}$ equals the second largest eigenvalue $\lambda_2(\T_t^{\mathcal{E}})$ or the absolute value of the smallest eigenvalue $\lambda_{\min}(\T_t^{\mathcal{E}})$ of $\T_t^{\mathcal{E}}$, depending on which one is larger and closer to $1$. Therefore, we see that the efficiency or rate at which an ensemble converges to $t$-designs is determined by the spectral gap of the $t$-th moment operators, formally defined as follows.
\begin{definition}[Spectral gap]
	The \emph{spectral gap} of the Hermitian $t$-th moment operator $T_t^{\mathcal{E}}$ is defined as
	\begin{align}\label{eq:Gap}
		\Delta( \T_t^{\mathcal{E}}) \coloneqq 1 - \max \{\lambda_2(\T_t^{\mathcal{E}}), \vert \lambda_{\min}(\T_t^{\mathcal{E}}) \vert \}.
	\end{align}
\end{definition}

To be more precise, consider the conventional way to define approximate $t$-designs based on the closeness of the two moment operators $\T_t^{\mathcal{E}}$, $\T_t^{\U(2^n)}$ in terms of complete positivity. we say $\T_t^{\mathcal{E}}$  
 That is, if
\begin{align}
	(1 - \epsilon) \T_t^{\U(2^n)} \leq_{\mathrm{cp}} \T_t^{\mathcal{E}} \leq_{\mathrm{cp}} (1 + \epsilon) \T_t^{\U(2^n)},
\end{align} 
where $A \leq_{\mathrm{cp}} B$ means $B - A$ is completely positive, then $\T_t^{\mathcal{E}}$ is said to form an \emph{$\epsilon$-approximate $t$-design}~\cite{BHH16,Gao2022,Metger2024,Haferkamp2024linear}). Denoting by $c_{\mathrm{cp}}(\mathcal{E}, t)$ the smallest constant $\epsilon$ achieving the above bound, it is proved in Ref.~\cite{BHH16} that when the circuit depth $p$ is no smaller than
\begin{align}\label{eq:convergence}
	\frac{1}{\Delta( \T_t^{\mathcal{E}})} \log \frac{2^{2nt}}{\epsilon} = \frac{1}{\Delta( \T_t^{\mathcal{E}})} \Big( 2nt \log 2 + \log \frac{1}{\epsilon} \Big),
\end{align}
the generated ensemble forms an $\epsilon$-approximate $t$-designs within precision $\epsilon$ (also see Ref.\cite{Schuster2024lowDepth} for an improvement on the dependence of $t$).


\section{Heuristic examples of convergence analysis}\label{sec:one-qubit}

We start with discussing some basic examples of the application of the stochastic circuit framework. The main purpose is to demonstrate various basic types of analysis that can be done using the stochastic circuit framework and the complexity of its behaviors even for simple cases. 
 
\subsection{Single-qubit gates}

We first consider the simplest quantum system with only one qubit as a warmup.

As an example, we consider the gate set consisting of the Hadamard gate $H$ and a diagonal gate $\operatorname{diag}(1,e^{i\theta})$. A discrete probability distribution is defined as follows: let $p/2$ be the probability for $H$, $(1-p)/4$ for $\operatorname{diag}(1,e^{i\theta})$, $(1-p)/4$ for $\operatorname{diag}(1,e^{-i\theta})$ and $1/2$ for identity. Recall that sampling both $\operatorname{diag}(1,e^{i\theta})$ and $\operatorname{diag}(1,e^{-i\theta})$ with the same probability guarantees that the corresponding $t$-th moment operator $\T_t^{\mathcal{E}}$ is Hermitian. Adding the identity shifts every eigenvalue $x$ to $(x+1)/2$, so that all eigenvalues of $\T_t^{\mathcal{E}}$ become non-negative and that the  convergence rate is purely limited by the gap between 1 and the second largest eigenvalue. We should stress that the negative eigenvalues can play key roles in the convergence theory which will be utilized later, but for now we focus on the simplest case.

For the $T$ gate with $\theta=\pi/4$, a figure of gap as a function of $p$ is show in Fig.~\ref{fig:pi-over-4} with $t=3$. The maximum gap is 0.0433879, which corresponds to $m \ge 107$ in Eq.~\eqref{eq:convergence} for $\epsilon = 0.01$.

\begin{figure}[H]
 \centering
 \includegraphics[width=0.4\textwidth]{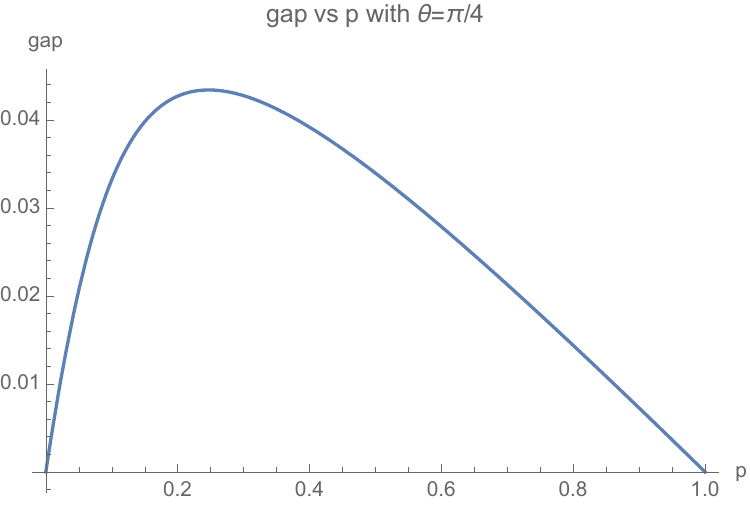}
 \caption{The gap vs $p$ for the gate set $\{H,T\}$. $t$ is set to be 3.}
 \label{fig:pi-over-4}
\end{figure}

Now we treat $\theta$ as a variable and see how the largest gap over all $p$ and the corresponding optimal $p$ changes with $\theta$. This is shown in Fig.~\ref{fig:1-qubit}.

\begin{figure}[H]
 \centering
 \includegraphics[width=0.4\textwidth]{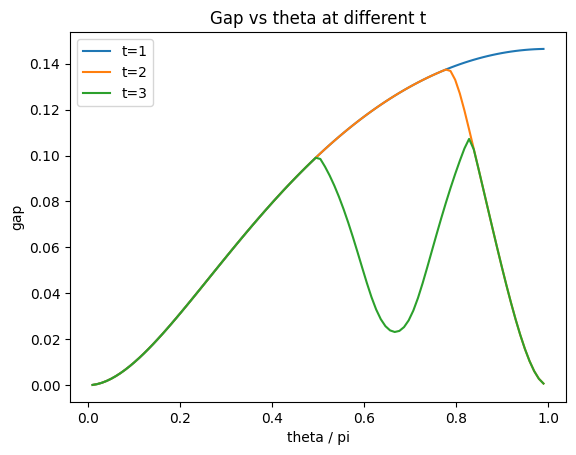}
 \includegraphics[width=0.4\textwidth]{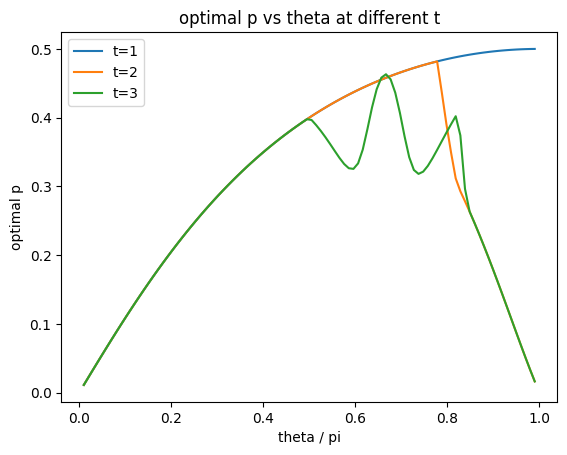}
 \caption{The largest gap over all $p$ and the corresponding optimal $p$ as a function of $\theta$. The data for $t=1,2,3$ are shown.}
 \label{fig:1-qubit}
\end{figure}

It is remarkable that the behaviors associated with the properties of this elementary model already exhibits highly nontrivial behaviors. For example, from Fig.~\ref{fig:1-qubit} we see that the optimal gate parameters can vary in an unintuitive manner for the generation of designs of different orders.

We can also use 1-qubit Haar random unitaries. For each value of $t$ we sample 50 sets of gates, each containing two Haar random unitaries. The average gap and the standard deviation over these samples can be found in Table~\ref{tab:1qubit}. The optimum gap from Hadamard and diagonal gates has also been listed for comparison. {One can see that the Hadamard + diagonal gates has a larger gap than the set of two typical Haar random gates (except for the case of $t=1$), but there still exists better choices than this gate set.}

\begin{table}[H]
 \centering
 \begin{tabular}{|c|c|c|c|c|}
  \hline
  $t$ & Mean Gap & Standard Deviation of Gap & Largest Gap & Previous Optimum \\
  \hline
  1 & 0.177569 & 0.139433 & 0.455504 & 0.146429 \\
  \hline
  2 & 0.0990918 & 0.0673355 & 0.238451 & 0.137656\\
  \hline
  3 & 0.0596353 & 0.0437368 & 0.169486 & 0.107030\\
  \hline
  4 & 0.0598917 & 0.0333851 & 0.118764 & N/A\\
  \hline
 \end{tabular}
 \caption{The mean, standard deviation and the largest gap among the samples. The ``Previous Optimum'' column refers to the largest gap obtained from Hadamard and diagonal gates, maximized over all possible $p$ and $\theta$. 
}
 \label{tab:1qubit}
\end{table}


\subsection{Two-qubit gates}
Now we extend the analysis to two qubits. For practical reason, we also consider the case when sampling one and two-qubit gates independently. Like in Section \ref{sec:one-qubit}, we here consider the gate set consisting of the $\theta$-phase gate, the Hadamard gate plus the $\CNOT$ gate as a natural example. Suppose there is $p_1/4$ probability for Hadamard to act on each of the qubits respectively. Similarly, there is $p_2/8$ probability for $\operatorname{diag}(1,e^{i\theta})$, and $p_2/8$ probability for $\operatorname{diag}(1,e^{-i\theta})$ on each qubit. Then we sample by $(1-p_1-p_2)/4$ probability for $\CNOT$ with each qubit as the control. Finally there is $1/2$ probability for identity. We show the gap and optimizing $p_1$ and $p_2$ in Fig.~\ref{fig:2-qubit}. {Even such a simple case could exhibit highly nontrivial behaviors.}

\begin{figure}[!ht]
	\centering
	\includegraphics[width=0.3\textwidth]{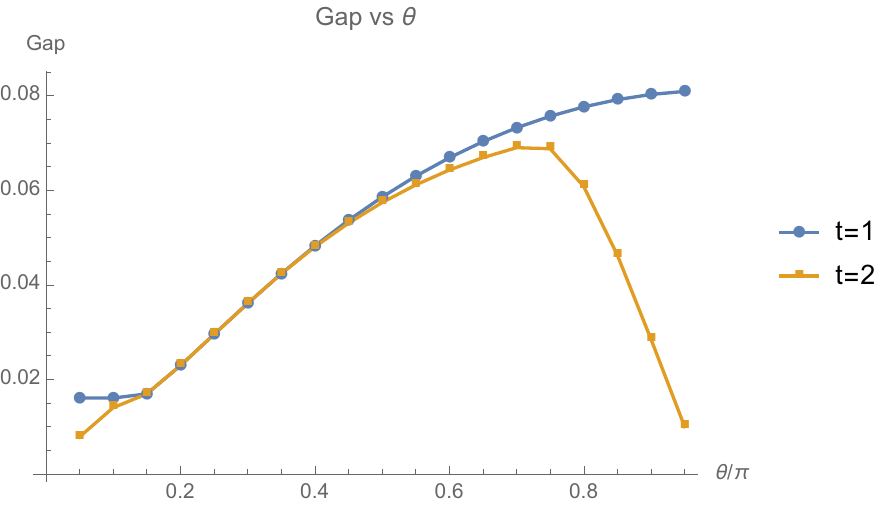}
	\includegraphics[width=0.3\textwidth]{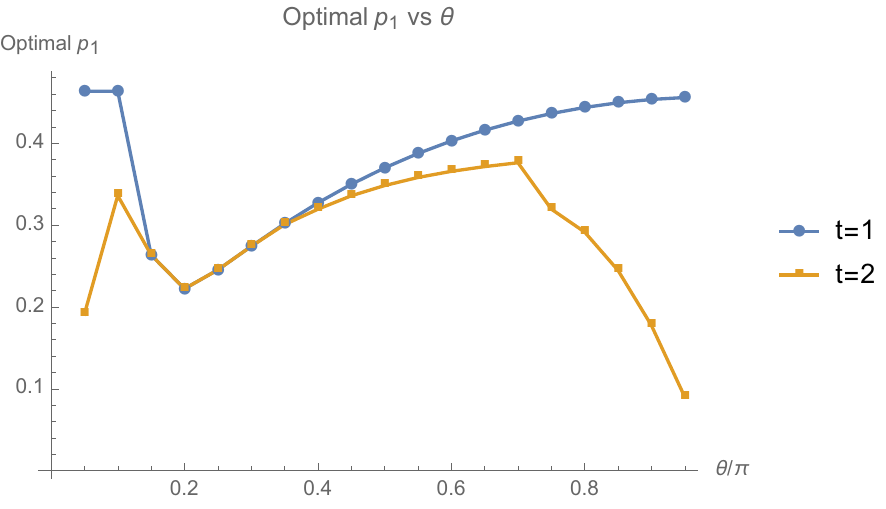}
	\includegraphics[width=0.3\textwidth]{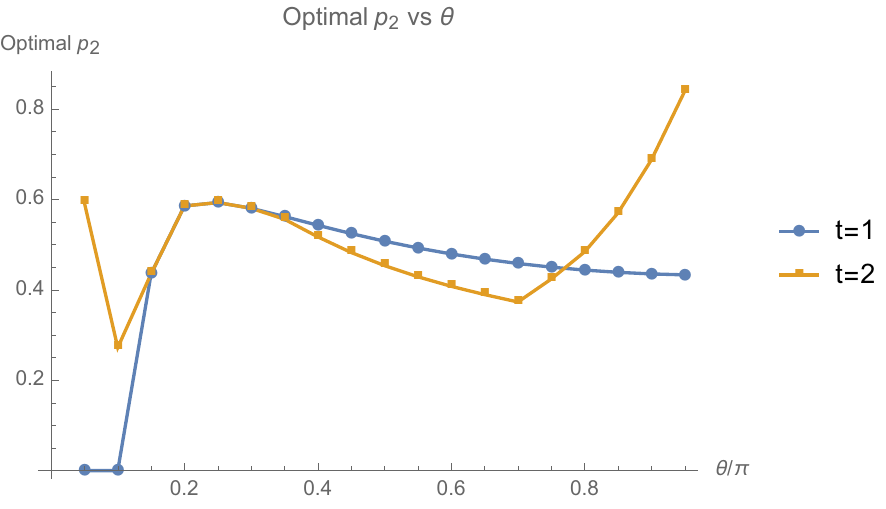}
	\label{fig:2-qubit}
	\caption{The largest gap over all $(p_1,p_2)$ and the corresponding optimal $(p_1,p_2)$ as a function of $\theta$. The data for $t=1,2$ are shown.}
\end{figure}

We could also consider using Haar-random unitaries as the single qubit gate set. 50 samples are run for $t=1$ and $t=2$ respectively. The data can be found in Table~\ref{tab:2qubit}. {We can see that there exists some gate sets that are much better than $\CNOT$ with Hadamard and diagonal gates.}

\begin{table}[!ht]
	\centering
	\begin{tabular}{|c|c|c|c|c|}
		\hline
		$t$ & Mean Gap & Standard Deviation of Gap & Largest Gap & Previous Optimum \\
		\hline
		1 & 0.119186 & 0.0581807 & 0.232504 & 0.0809092 \\
		\hline
		2 & 0.0717063 & 0.0309854 & 0.126088 & 0.0690015\\
		\hline
	\end{tabular}
	\caption{The mean, standard deviation and the largest gap among the samples. The ``Previous Optimum'' column refers to the largest gap obtained from Hadamard and diagonal gates, maximized over all possible $(p_1,p_2,\theta)$.}
	\label{tab:2qubit}
\end{table}

\section{Ironed gadget model and efficiency of 2-qubit gates}\label{sec:two-qubit}

Now we focus on the efficiency of different 2-qubit gates in the stochastic circuit models in forming approximate 2-designs. We are going to introduce the so-called \emph{ironed gadget model} which is defined by sampling 1-qubit Haar random gates with a fixed 2-qubit gate. It reflects the relative easiness of experimentally implementing any single-qubit gates and mathematically enable us to obtain numerous rigorous results summarized in Section \ref{sec:summary}.

We may interchangeably use the terms ``2-qubit'' and ``2-local'' when no ambiguity can arise. Circuits acting on many-body systems composed of multiple qubits are naturally associated with graph structures. We here consider circuit architectures built on a connected graph $\mathcal{G}$ composed of $n$ vertices, each of which represents a qubit, and edges $(i,j) \in E(\mathcal{G})$, each of which represents a 2-qubit gates. Accordingly, the stochastic quantum circuit models considered here are defined by 
\begin{enumerate}
	\item Sampling an edge from the given graph $\mathcal{G}$ uniformly;
	\item Sampling a 2-qubit gate from a prescribed gate ensemble and applying it on this edge.
\end{enumerate}

An illustration on a 5-qubit system with $\mathcal{G}$ be a 1D chain where we only sample nearest-neighbor 2-local gates is given in the following. For conciseness, we call this model a \emph{graph circuit}.  There are variant models, e.g. the brickwork model. In Section \ref{sec:architectures}, we also provide our theoretical and numerical analysis on these models.
\begin{align*}
	\begin{quantikz}
		& \gate[2][2cm]{U \sim \nu} & & & & \\
		&  & & \gate[2][2cm]{U \sim \nu} & & \\
		&  &  \gate[2][2cm]{U \sim \nu}  & & & \\
		&  &  & & \gate[2][2cm]{U \sim \nu}  & \\
		& & & & &
	\end{quantikz}
\end{align*}

As mentioned in Section \ref{sec:framework}, the moment operator of the ensemble $\mathcal{E}$ is supposed to be Hermitian in order to study its spectrum and hence the convergence time with no ambiguity. We emphasize here that there is no further requirements. For instance, the operator $T_t^{\mathcal{E}}$ needs \emph{not} being positive semidefinite. Making use of negative eigenvalues turns out to be one crucial step to find the ensemble converging towards 2-designs with the highest rate. 

\subsection{KAK decomposition and Weyl chamber}\label{sec:KAK}

Prior to the definition of gadgets of 2-local gates, we briefly review Cartan's KAK decomposition of 2-qubit operators~\cite{Zhang2003KAK,PhysRevLett.130.070601}:

\begin{theorem}[Cartan's KAK decomposition]\label{thm:KAK}
	Given any $U \in \U(4)$, there exist $A_1,A_2,B_1,B_2 \in \U(2)$, a triplet $(k_x, k_y, k_z)$ of real numbers such that
	\begin{align}
		U = (A_1 \otimes A_2) \exp[i(k_x X \otimes X + k_y Y \otimes Y + k_z Z \otimes Z)] (B_1 \otimes B_2),
	\end{align}
	where $XX,YY,ZZ$ are 2-qubit Pauli matrices with KAK coefficients $k_x, k_y, k_z \in \mathbb{R}$.  
\end{theorem}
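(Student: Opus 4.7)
The plan is to recognize this statement as the Cartan decomposition for the symmetric pair $(\SU(4), \SU(2)\times\SU(2))$, and to prove it concretely via the so-called magic basis, which turns 2-qubit local unitaries into real orthogonal matrices. First I would absorb the global phase: write $U = e^{i\phi}\tilde U$ with $\tilde U \in \SU(4)$ and $\phi = \arg\det(U)/4$, shifting $e^{i\phi}$ into, say, $A_1$. So it suffices to prove the decomposition for $\tilde U \in \SU(4)$.

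Next I would introduce the magic basis
\begin{equation*}
M = \tfrac{1}{\sqrt{2}}\begin{pmatrix} 1 & 0 & 0 & i \\ 0 & i & 1 & 0 \\ 0 & i & -1 & 0 \\ 1 & 0 & 0 & -i \end{pmatrix},
\end{equation*}
and establish two key computational lemmas: (i) for every $A_1,A_2 \in \SU(2)$, the matrix $M^\dagger (A_1\otimes A_2) M$ is real, hence lies in $\mathrm{SO}(4)$; and (ii) the three operators $M^\dagger (X\otimes X) M$, $M^\dagger (Y\otimes Y) M$, $M^\dagger (Z\otimes Z) M$ are simultaneously diagonal with real entries. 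Step (i) is verified by checking it on a generating set of $\mathfrak{su}(2)\oplus\mathfrak{su}(2)$ (the six single-qubit Paulis, times $i$) and exponentiating. Step (ii) is a direct $4\times 4$ computation. Together they say: in the magic basis, local gates become $\mathrm{SO}(4)$ rotations, and the Cartan generators become a commuting diagonal abelian subalgebra.

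The core step is then a Takagi-type decomposition of a symmetric unitary. Put $V \coloneqq M^\dagger \tilde U M$ and consider $S \coloneqq V V^T$. Because the map $U\mapsto \bar U$ preserves $\SU(2)\otimes \SU(2)$ (being a Cartan involution on $\SU(4)$ in this basis), one checks that $S$ is unitary and symmetric. Any such $S$ admits a spectral form $S = O D^2 O^T$ with $O\in \mathrm{SO}(4)$ and $D$ unitary diagonal; I would prove this by diagonalizing the commuting Hermitian pair $(S+S^\dagger)/2$, $(S-S^\dagger)/(2i)$ over $\mathbb{R}$ and then choosing square roots of the eigenvalues (adjusting an overall sign to land in $\mathrm{SO}(4)$ rather than $\mathrm{O}(4)$ by using that $\det V=1$). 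Defining $O' \coloneqq D^{-1} O^T V$, a direct check gives $O'(O')^T = I$ and $\det O'=1$, so $O'\in\mathrm{SO}(4)$; hence $V = O D O'$. Conjugating back by $M$, and using that $D$ lives in the image under $M^\dagger(\cdot)M$ of the abelian subgroup generated by $XX,YY,ZZ$ (by lemma (ii)), yields exactly the asserted factorization. The coefficients $(k_x,k_y,k_z)$ are read off from the logarithm of $D$ in this basis.

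The main obstacle I expect is the careful bookkeeping in the Takagi step: one must produce $O\in \mathrm{SO}(4)$ (not merely $\mathrm{O}(4)$), choose a consistent square-root branch for $D^2 \to D$, and verify that $O'$ is real orthogonal with the right determinant. This is where signs and the distinction between $\SU(4)$ and $\mathrm{U}(4)$ interact, and where the assumption $\det V=1$ is used crucially. The remaining ingredients — magic-basis identities, commutation of $XX,YY,ZZ$, and absorbing the overall phase — are routine once stated. (Uniqueness of $(k_x,k_y,k_z)$ up to the Weyl-group symmetries of the pair, which one would normally quotient out to obtain the Weyl chamber used later in Section~\ref{sec:KAK}, is not asserted in the statement, so I would not pursue it here.)
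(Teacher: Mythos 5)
The paper does not prove this theorem itself; it simply cites the standard references (Khaneja--Glaser, Zhang--Vala--Sastry--Whaley) for what is a well-known Lie-theoretic fact. Your magic-basis proof is exactly the standard argument those references give, and the structure is sound: conjugating by the magic basis $M$ turns $\SU(2)\otimes\SU(2)$ into real orthogonal matrices and turns $\operatorname{span}\{XX,YY,ZZ\}$ into traceless real diagonal matrices, after which the Autonne--Takagi step for the symmetric unitary $VV^T$ does the rest. Your handling of the determinant bookkeeping (adjusting a square-root branch and a column sign simultaneously in $O$, $D$, $O'$ using $\det V=1$) is correct in substance, though it would be worth writing out the specific move — flipping one diagonal entry of $D$ flips $\det O'$, so one can always arrange $\det O = \det D = \det O' = 1$.

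One step you leave implicit but should state: after producing $O, O' \in \SO(4)$, you must convert them \emph{back} to $A_1\otimes A_2$ and $B_1\otimes B_2$ in $\SU(2)\otimes\SU(2)$. You verify the forward inclusion $M^\dagger(\SU(2)\otimes\SU(2))M\subseteq\SO(4)$, but the decomposition needs surjectivity of this map, i.e., that $\SU(2)\times\SU(2)\to\SO(4)$, $(A_1,A_2)\mapsto M^\dagger(A_1\otimes A_2)M$, is onto (it is the standard two-to-one covering). This is routine, but without it the last sentence ``yields exactly the asserted factorization'' is not justified. Also, your parenthetical about the Cartan involution is conceptually apt but logically unnecessary for symmetry of $S=VV^T$, which follows by inspection; it might be cleaner to separate the motivation (this is the Cartan decomposition for the involution $g\mapsto\bar g$ in the magic basis) from the actual verification.
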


Two unitaries $U, V \in \SU(4)$ are said to be \emph{KAK-equivalent} if $U = (R_1 \otimes R_2) V (S_1 \otimes S_2)$ for some $R_1,R_2,S_1,S_2 \in \U(2)$. Obviously, unitaries with the same KAK coefficients $k_x, k_y, k_z$ are KAK-equivalent. Moreover,

\begin{proposition}
\label{cor:KAK}
	Given arbitrary KAK coefficients $k_x, k_y, k_z$, the following holds:
	\begin{enumerate}	
		\item Shifting any coefficient by $\frac{\pi}{2}$, unitaries corresponding to these coefficients are still equivalent, e.g., $(k_x,k_y,k_z) \sim (k_x + \frac{\pi}{2},k_y,k_z)$.
		
		\item Reversing the sign any pair of the coefficients, the corresponding unitaries are still KAK-equivalent, e.g., $(k_x,k_y,k_z) \sim (-k_x,-k_y,k_z)$.
		
		\item Swapping any pair of the coefficients, the corresponding unitaries are still KAK-equivalent, e.g., $(k_x,k_y,k_z) \sim (k_y,k_x,k_z)$.
	\end{enumerate}
	Particularly, the KAK coefficients $k_x, k_y, k_z$ of any $U \in \SU(4)$ can be uniquely chosen if requiring:
	\begin{itemize}
		\item $\frac{\pi}{2} > k_x \ge k_y \ge k_z \ge 0$,
		\item $k_x + k_y \le \frac{\pi}{2}$,
		\item and if $k_z = 0$, then $k_x \le \frac{\pi}{4}$.
	\end{itemize}
\end{proposition}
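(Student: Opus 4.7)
The plan is to establish parts 1--3 by exhibiting explicit local unitaries realizing each equivalence, and then to use them to reduce constructively any triple of KAK coefficients to the stated fundamental domain. The key observation throughout is that $\{X\otimes X,\,Y\otimes Y,\,Z\otimes Z\}$ mutually commute, so the central exponential in Theorem~\ref{thm:KAK} factors as $\exp(ik_x X\otimes X)\exp(ik_y Y\otimes Y)\exp(ik_z Z\otimes Z)$, and each of the three moves acts on the coefficients multiplicatively.

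For part 1, the identity $\exp(i\tfrac{\pi}{2}X\otimes X)=i\,X\otimes X$ shows that shifting $k_x$ by $\tfrac{\pi}{2}$ multiplies the central exponential by a local unitary, which is then absorbed into the flanking $A_1\otimes A_2$; the analogous identities with $Y\otimes Y$ and $Z\otimes Z$ cover $k_y,k_z$. For part 2, conjugation by the local Pauli $Z\otimes I$ sends $X\otimes X\mapsto -X\otimes X$, $Y\otimes Y\mapsto -Y\otimes Y$, and $Z\otimes Z\mapsto Z\otimes Z$, giving $(k_x,k_y,k_z)\sim(-k_x,-k_y,k_z)$; the other two pair sign-flips follow by conjugation with $X\otimes I$ and $Y\otimes I$. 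For part 3, the local Clifford $H\otimes H$ swaps $X\otimes X\leftrightarrow Z\otimes Z$ and fixes $Y\otimes Y$, while $S\otimes S$ swaps $X\otimes X\leftrightarrow Y\otimes Y$ up to signs absorbed by part 2; composing these realizes all three coordinate transpositions.

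For the uniqueness claim I would proceed by constructive reduction followed by a matching argument. Given an arbitrary triple in $\R^3$, first use $\tfrac{\pi}{2}$-shifts (part 1) to place each coordinate in $[0,\tfrac{\pi}{2})$, and then permute via part 3 to enforce $k_x\ge k_y\ge k_z\ge 0$. If $k_x+k_y>\tfrac{\pi}{2}$, shift both $k_x$ and $k_y$ by $-\tfrac{\pi}{2}$ and flip their signs jointly via part 2, producing $(\tfrac{\pi}{2}-k_x,\tfrac{\pi}{2}-k_y,k_z)$ whose first two entries sum to $\pi-(k_x+k_y)<\tfrac{\pi}{2}$; reorder to restore the descending convention. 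Finally, when $k_z=0$, combining the shift $k_x\mapsto k_x-\tfrac{\pi}{2}$ with the $\{k_x,k_y\}$-sign flip and then the $\{k_y,k_z\}$-sign flip (which is harmless because $-0=0$) yields the extra identification $(k_x,k_y,0)\sim(\tfrac{\pi}{2}-k_x,k_y,0)$; since $k_x+k_y\le\tfrac{\pi}{2}$ forces $k_y\le\tfrac{\pi}{2}-k_x<k_x$ whenever $k_x>\tfrac{\pi}{4}$, reordering the resulting triple drops the leading coordinate to $\tfrac{\pi}{2}-k_x\le\tfrac{\pi}{4}$.

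The main obstacle is the ``no further identifications'' half of uniqueness, i.e.\ showing that two distinct points of the chamber lie in different KAK-equivalence classes. I would handle this by invoking the classical description of the double-coset space $(\SU(2)\otimes\SU(2))\backslash\SU(4)/(\SU(2)\otimes\SU(2))$ underlying Theorem~\ref{thm:KAK}, whose Weyl group is generated exactly by the three operations of parts 1--3. A more hands-on alternative is a volume count: the orbit of the chamber under the group generated by $\tfrac{\pi}{2}$-shifts, pair sign-flips, and coordinate transpositions tiles the torus $(\R/\tfrac{\pi}{2}\Z)^3$ with multiplicity equal to the group order, which together with the boundary identification at $k_z=0$ confirms that the stated chamber is a fundamental domain.
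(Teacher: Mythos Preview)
The paper states this proposition without proof, treating it as a standard consequence of the KAK decomposition literature (cf.\ the cited references Khaneja--Glaser and Zhang et al.). Your explicit constructive argument therefore supplies more than the paper itself does.

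Your verification of parts 1--3 via explicit local unitaries---$\exp(i\tfrac{\pi}{2}X\otimes X)=i\,X\otimes X$, conjugation by single-qubit Paulis, and conjugation by $H\otimes H$ or $S\otimes S$---is correct and is the standard hands-on route. The reduction algorithm to the chamber is also correct; in particular, the move $(k_x,k_y,k_z)\mapsto(\tfrac{\pi}{2}-k_x,\tfrac{\pi}{2}-k_y,k_z)$ when $k_x+k_y>\tfrac{\pi}{2}$ works because $k_z\le k_y$ guarantees that every pairwise sum of the resulting triple stays $\le\tfrac{\pi}{2}$, so a single pass suffices, and your treatment of the boundary identification at $k_z=0$ is clean.

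The only portion that remains a sketch is the ``no further identifications'' half of uniqueness. Both routes you propose---invoking the Weyl group of the symmetric space $\SU(4)/(\SU(2)\times\SU(2))$, or a tiling/volume count on $(\R/\tfrac{\pi}{2}\Z)^3$---are valid but not self-contained. A third option, closer in spirit to your explicit computations, is to use Makhlin's local invariants (equivalently the spectrum of $U(Y\otimes Y)U^{T}(Y\otimes Y)$), which separate KAK-equivalence classes and are explicit trigonometric functions of $(k_x,k_y,k_z)$; one then checks injectivity on the chamber directly. Since the paper itself defers the entire proposition to the literature, your level of detail is already more than sufficient here.
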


Forgetting the third condition by allowing $k_x \le \frac{\pi}{2}$ when $k_z = 0$, these triplets of numbers compose a closed 3-dimensional tetrahedron called \emph{Weyl chamber} (see Fig.~\ref{fig:Weyl1}). As a reminder, despite the fact that the KAK coefficients can be uniquely selected with conditions given in Corollary \ref{cor:KAK}, the 1-qubit gates $A_1,A_2,B_1,B_2 \in \SU(2)$ from Theorem \ref{thm:KAK} are not unique determined in general. For instance,
\begin{align}
	(X \otimes X) \exp[i \frac{\pi}{4} Z \otimes Z] (X \otimes X) = \exp[i \frac{\pi}{4} ZZ].
\end{align}

\begin{figure}[h]
	\centering
	\subfigure[]{%
        \includegraphics[width=0.42\textwidth]{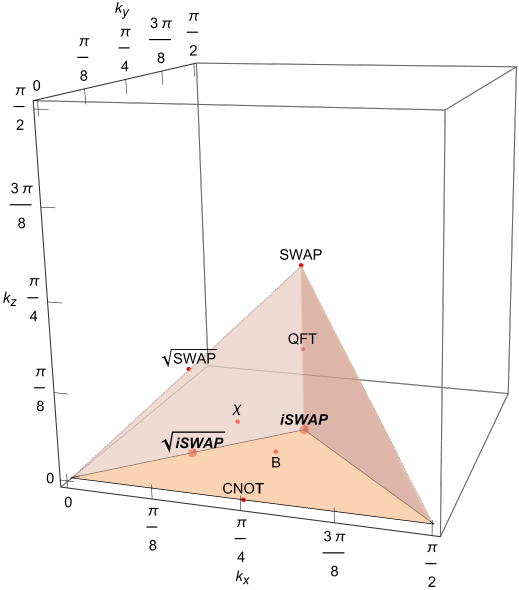} 
    }\hfill
    \subfigure[]{%
        \includegraphics[width=0.45\textwidth]{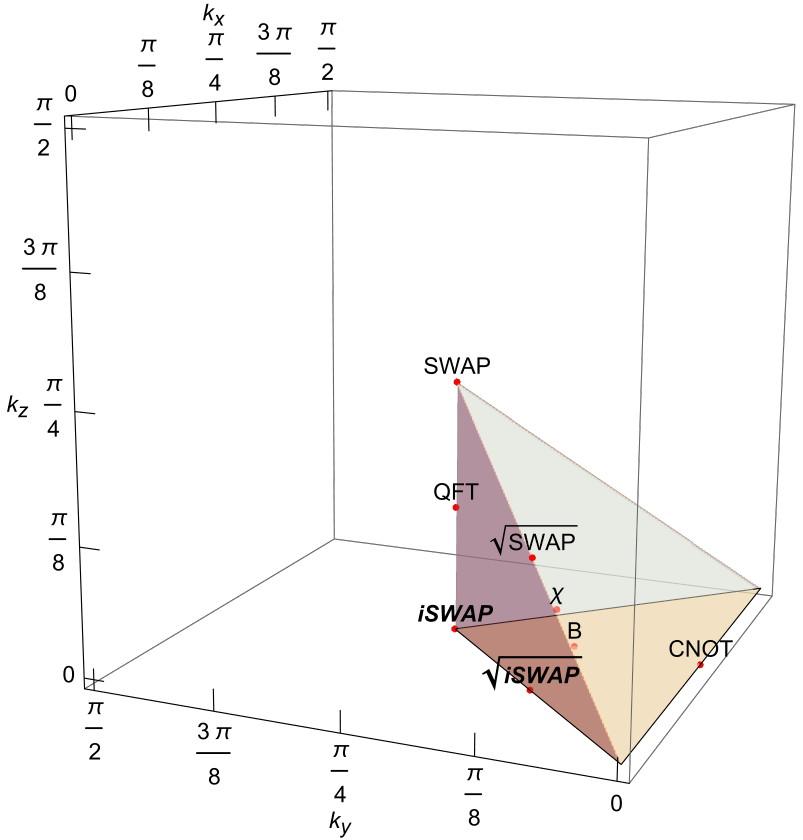}
        }
	\label{fig:Weyl1}
	\caption{Weyl chamber with typical gate families, the $\chi$ gate family is defined in Section \ref{sec:gadget}. Also see Table \ref{tab:Gadgets} for the explicit KAK coefficients and Fig.~\ref{fig:Weyl2}.  }
\end{figure}


\subsection{Ironed gadget model}\label{sec:gadget}

We now introduce the ironed gadget model, which is closely connected to the KAK decomposition. As discussed in the introduction, the conceptual motivation of this model lies with the low price of single-qubit gates.
In more detail, it is reasonable to mod out the effects of single-qubit gates when evalating and comparing gate and circuit efficiency, and furthermore, 
such model represent a rather practical scheme for implementing random entangling gates.

\begin{definition}[Ironed gadgets]\label{def:gadget}
	An \emph{ironed gadget} is a unitary ensemble acting on 2 qubits ($\U(4)$) generated by the following procedure: 
	\begin{enumerate}
		\item Single-qubit Haar-random gates acting independently on each qubit;
		\item A fixed 2-qubit unitary gate $U$;
		\item Single-qubit Haar-random gates acting independently on each qubit again.
	\end{enumerate}
	The following diagram sketches a single gadget with red boxes standing for 1-qubit Haar random gates and $U$ being the 2-qubit gate.
	\begin{align}
		\begin{quantikz}
			& \measure[style={fill=red!20}]{}\gategroup[wires=2,steps=3,style={dashed,rounded corners, inner sep=2pt}]{Ironed Gadget} & \gate[2]{U} & \measure[style={fill=red!20}]{} &  \\
			& \measure[style={fill=red!20}]{} &  & \measure[style={fill=red!20}]{} & 
		\end{quantikz}
	\end{align}
\end{definition}

Intuitively, these single-qubit gates are  ``average out'' so that the properties of the gadgets are solely determined by the 2-qubit gate. Diagrammatically, a stochastic quantum circuit over 1D chain can be depicted as:
\begin{align*}
\begin{quantikz}
	& \measure[style={fill=red!20}]{}\gategroup[wires=2,steps=3,style={dashed,rounded corners, inner sep=2pt}]{} & \gate[2]{U} & \measure[style={fill=red!20}]{}  & & & & & & & & & & \\
	& \measure[style={fill=red!20}]{} &  & \measure[style={fill=red!20}]{} & & & & \measure[style={fill=red!20}]{} \gategroup[wires=2,steps=3,style={dashed,rounded corners, inner sep=2pt}]{} & \gate[2]{U} & \measure[style={fill=red!20}]{} & & & & \\
	& &  & & \measure[style={fill=red!20}]{}\gategroup[wires=2,steps=3,style={dashed,rounded corners, inner sep=2pt}]{}  & \gate[2]{U} & \measure[style={fill=red!20}]{} &  \measure[style={fill=red!20}]{} & & \measure[style={fill=red!20}]{} & & & &  \\
	&  &  & & \measure[style={fill=red!20}]{}  & &  \measure[style={fill=red!20}]{} & & & & \measure[style={fill=red!20}]{}\gategroup[wires=2,steps=3,style={dashed,rounded corners, inner sep=2pt}]{} & \gate[2]{U} & \measure[style={fill=red!20}]{} & \\
	&  &  & &  &  & & & & &  \measure[style={fill=red!20}]{} & & \measure[style={fill=red!20}]{} & 
\end{quantikz}
\end{align*}

In an $n$-qubit system, let us consider a gadget having unitaries acting on the $i$- and $j$-th qubits. By changing the order of integrals, the above definition gives the following $t$-th moment operator of the gadget:
\begin{align}\label{eq:GagdetOperator}
	\begin{aligned}
		\T_{t,(i,j)}^{\Gadget} & = \int \Big( (A_i \otimes A_j) U_{i,j} (B_i \otimes B_j) \Big)^{\otimes t} \otimes \Big( (\bar{A}_i \otimes \bar{A}_j) \bar{U}_{i,j} (\bar{B}_i \otimes \bar{B}_j) \Big)^{\otimes t}  dA_i dA_j dB_i dB_j \\
		& = \T_{t,i}^{\U(2)} \T_{t,j}^{\U(2)} \Big( U_{i,j}^{\otimes t} \otimes \bar{U}_{i,j}^{\otimes t} \Big) \T_{t,i}^{\U(2)} \T_{t,j}^{\U(2)},
	\end{aligned}
\end{align}
where $U_{i,j}$ refers to applying the prescribed 2-local unitary $U$ on the $i$-th and $j$-th qubits while leaving other qubits intact. The operator $\T_{t,i}^{\U(2)}$ represents the 1-qubit Haar projector of the $i$-th qubit. Obviously, $\T_{t,(i,j)}^{\Gadget}, \T_{t,(r,s)}^{\Gadget}$ are similar matrices regardless of the sites they act on. The $t$-moment operator of the ensemble defined by sampling through the gadget acting on edges of a graph $\mathcal{G}$ is
\begin{align}
	\T_{t,\mathcal{G}}^{\Gadget} = \frac{1}{\vert E(\mathcal{G}) \vert} \sum_{(i,j) \in E(\mathcal{G})} \T_{t,(i,j)}^{\Gadget},
\end{align} 
where $\vert E(\mathcal{G}) \vert$ denotes the number of the edges of $\mathcal{G}$. 

\begin{proposition}\label{prop:Gadget}
	The following facts hold:
	\begin{enumerate}
		\item If $U, V$ are KAK-equivalent, the resultant gadgets share the same $t$-th moment operator. Actually, properties of the moment operator are uniquely determined by the KAK coefficients $k_x,k_y,k_z$ of $U$ in the gadget.
		
		\item When $t=2$, the operator $\T_{2,(i,j)}^{\Gadget}$  is always Hermitian.
	\end{enumerate}
\end{proposition}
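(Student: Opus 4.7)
For Part 1, the plan is to substitute the KAK equivalence $U = (R_1 \otimes R_2) V (S_1 \otimes S_2)$ directly into the explicit formula~\eqref{eq:GagdetOperator} for $\T_{t,(i,j)}^{\Gadget}$. After this substitution, each of the Haar-random gates gets multiplied by a fixed single-qubit unitary: $A_i \mapsto A_i R_i$ from the left group of gates and $B_i \mapsto S_i B_i$ from the right group. Invoking the bi-invariance of Haar measure on $\U(2)$, these shifts can be absorbed without altering the measure, so the gadget with $U$ and the gadget with $V$ have the same joint distribution (and hence the same moment operator). This is a routine change-of-variables calculation. The second assertion (that the moment operator depends only on the KAK coefficients) follows immediately, since any two unitaries with the same KAK coefficients are KAK-equivalent by definition.

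For Part 2, by Part 1 we may assume $U$ is in the canonical KAK form $U = \exp[i(k_x X\otimes X + k_y Y\otimes Y + k_z Z\otimes Z)]$. The key initial observation is that $X\otimes X$, $Y\otimes Y$, and $Z\otimes Z$ are all real symmetric matrices (although $Y$ alone is imaginary antisymmetric, $Y\otimes Y$ turns out to be real symmetric). Hence $U = e^{iH}$ with $H$ real symmetric, which makes $U$ complex symmetric: $U^T = U$, equivalently $\bar U = U^\dagger$.

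Hermiticity of the matrix $\T_{2,(i,j)}^{\Gadget}$ is equivalent to self-adjointness of the associated twirl channel $\mathscr{T}_{2}^{\Gadget} = P \circ \operatorname{Ad}_{U}^{(2)} \circ P$ in the Hilbert--Schmidt inner product, where $P = \T_{2,i}^{\U(2)} \T_{2,j}^{\U(2)}$ is the product of two single-qubit 2-fold Haar twirls (a self-adjoint projection since the factors commute and each is self-adjoint) and $\operatorname{Ad}_{U}^{(2)}(M) = U^{\otimes 2} M U^{\dagger \otimes 2}$. Since $(\operatorname{Ad}_{U}^{(2)})^* = \operatorname{Ad}_{U^\dagger}^{(2)}$, self-adjointness reduces to the identity $P(\operatorname{Ad}_{U}^{(2)}(N)) = P(\operatorname{Ad}_{U^\dagger}^{(2)}(N))$ for every $N$ in the image of $P$. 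That image is spanned by $\{I, S_i, S_j, S_i S_j\}$, where $S_q$ denotes the SWAP between the two copies of qubit $q$. Both $I$ and $S_i S_j$ commute with $U^{\otimes 2}$, so are trivially fixed; the remaining case $N = S_i$ (and symmetrically $S_j$) is the key step. The plan here is a ``transpose trick'': using $S_i^T = S_i$ together with $U^T = U$, one verifies $(U^{\otimes 2} S_i U^{\dagger \otimes 2})^T = U^{\dagger \otimes 2} S_i U^{\otimes 2}$. Combining this with (i) the fact that $P$ commutes with transposition on operators (each single-qubit Haar twirl does, which can be seen by substituting $U \mapsto U^T$ in the defining integral and invoking invariance of Haar measure) and (ii) the fact that every element of $\operatorname{im}(P)$ is symmetric, one obtains $P(U^{\dagger \otimes 2} S_i U^{\otimes 2}) = P(U^{\otimes 2} S_i U^{\dagger \otimes 2})^T = P(U^{\otimes 2} S_i U^{\dagger \otimes 2})$, as desired.

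The main conceptual point---and what makes the argument specific to $t = 2$---is that both the complex symmetric form of $U$ (achievable by Part 1) \emph{and} the symmetric structure of every element of $\operatorname{im}(P)$ are required for the transpose trick to close. At $t \geq 3$, the commutant spanned by $S_t$ contains permutation operators (such as 3-cycles) that are \emph{not} symmetric matrices, and so the argument does not directly extend, consistent with the expectation that Hermiticity is a feature special to the 2-design case.
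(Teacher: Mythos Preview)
Your argument for both parts is correct. Part~1 is the spelled-out version of what the paper calls ``immediate by definition'': the single-qubit unitaries from the KAK equivalence are absorbed into the Haar integrals by translation invariance.

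For Part~2, your route is genuinely different from the paper's. The paper defers this to Example~\ref{example:LocalOperator}, where the $4\times 4$ matrix representation of $T_2^{\Gadget}$ is computed explicitly in the basis $\{u_r\otimes u_s\}$ (Eq.~\eqref{eq:Gadget4x4}) and Hermiticity is read off by inspection---the matrix is manifestly real symmetric. Your approach instead exploits the structural fact that the canonical KAK generator $k_x XX + k_y YY + k_z ZZ$ is real symmetric (hence $U^T=U$, $\bar U = U^\dagger$), and combines this with a transpose trick on the image of the single-qubit projector $P$. This is more conceptual: it avoids any entrywise computation and makes transparent \emph{why} the argument is special to $t=2$ (the $S_t$ commutant at $t\geq 3$ contains non-symmetric permutations, so step~(ii) fails). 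On the other hand, the paper's explicit computation is not wasted effort---the resulting matrix~\eqref{eq:Gadget4x4} with its parameters $a,b,c$ is the workhorse for everything that follows (Table~\ref{tab:Gadgets}, Theorems~\ref{thm:iSWAP1}--\ref{thm:iSWAP2}, etc.), so in context the computational route is doing double duty.
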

\begin{proof}
	It is immediate to confirm the first statement by definition. The second one is proved later in Example \ref{example:LocalOperator}.
\end{proof}

When $t \geq 3$, $\T_{t,(i,j)}^{\Gadget}$ may not be Hermitian in general and we present an example in Example \ref{example:LocalOperator2}. In that case, we just adjust the second step in Definition \ref{def:gadget} by sampling from a 2-local unitary and its inverse with equal probability. Then Eq.~\eqref{eq:GagdetOperator} is replaced by
\begin{align}\label{eq:GagdetOperator2}
	\T_{t,(i,j)}^{\Gadget} & = \T_{t,i}^{\U(2)} \T_{t,j}^{\U(2)} \frac{1}{2}\Big( U_{i,j}^{\otimes t} \otimes \bar{U}_{i,j}^{\otimes t} + U_{i,j}^{\dagger \otimes t} \otimes \bar{U}_{i,j}^{\dagger \otimes t}\Big) \T_{t,i}^{\U(2)} \T_{t,j}^{\U(2)}.
\end{align}

In the following context, when we say a gadget of a certain gate, the gate usually means a gate family up to 1q gates and KAK equivalence. The operator $\T_{t,(i,j)}^{\Gadget}$ is referred to as a \emph{2-local $t$-th moment operator} of the gadget defined on site $i$ and $j$. If the 2-local unitary of the gadget or its KAK coefficients is known, the superscript of $\T_{t,(i,j)}^{\Gadget}$ will be replaced accordingly.


\subsection{Summary of key findings}\label{sec:summary}

With the framework formally defined, we first outline the key findings in this subsection, which provides a guide through the extensive body of detailed results and techniques that will be subsequently presented. As claimed in Proposition \ref{prop:Gadget}, the moment operators of gadgets are completely determined by the KAK coefficients, or the corresponding gate families. In what follows, we employ notations like $\T_{2,\mathcal{G}}^{\iSWAP}$, which means the second moment operator defined on the graph $\mathcal{G}$ using gadgets of $\iSWAP$ as gates family.

\begin{itemize}
    \item There exist specific 2-qubit gates whose associated ironed gadgets achieve \emph{exactly} the same second and third moment operators as the 2-qubit Haar measure. A solution is given by what we call the \emph{$\chi$ gate} family with KAK coefficients $(\frac{\pi}{4}-\arccos(\frac{1}{5})$, $\frac{\pi}{8}$, $\frac{1}{8}\arccos(\frac{1}{5}))$. 
    That is, exact 2-qubit 2- and 3-designs can be generated by applying this gate once, which is  evidently the most efficient. However, we also prove that no such solutions exists in the formation of 4-th or higher order exact designs (see Theorem \ref{prop:chi}).
    
    \item The $\iSWAP$ gate is particularly efficient among 2-qubit gates on many-body graph circuits. Specifically, we prove: 
    \begin{itemize}
    \item With respect to any connected graph $\mathcal{G}$, among 2-qubit gate families that occupy a large fraction of the Weyl chamber (satisfying a mild condition on the KAK coefficients illustrated in Fig.~\ref{fig:Weyl2}),  $\T_{2,\mathcal{G}}^{\iSWAP}$ attains the largest spectral gap. (see Theorem \ref{thm:iSWAP1} and Eq.~\eqref{eq:Region}).  We expect this fact to hold for all 2-qubit gates (see Section \ref{sec:conjectures}).

   \item Especially, the circuit given by $\iSWAP$ gadget + complete graph $K_n$ (all-to-all model) achieves the fastest convergence to unitary 2-designs among all possible graph circuits. Namely,
	\begin{align}
		\Delta( \T_{2,K_n}^{\iSWAP}) \geq \Delta( \T_{2,\mathcal{G}}^{\mathcal{E}}),
	\end{align}
	for any ensembles of 2-local unitaries $\mathcal{E}$ (not just gadgets) such that $\T_{2,\mathcal{G}}^{\mathcal{E}}$ is Hermitian (see Theorem \ref{thm:iSWAP2}).
	
	\item Numerical results for various standard graph topologies for small systems are provided (Table \ref{tab:differentGraphs} and \ref{tab:brickwork}), which enrich our theoretical results and reveal interesting finite-size effects. 
\end{itemize}

\item About different many-body circuit architectures (graphs):
\begin{itemize}
	\item Given a fixed ensemble of 2-local unitaries, when defined on the complete graph $K_n$, its moment operator achieves the largest spectral gap among all possible graphs. Particularly, for sufficient large $n$
	\begin{align}
		\Delta( \T_{t,P_n}^{\mathcal{E}} ) \geq \Delta( \T_{t,C_n}^{\mathcal{E}} )  \geq \Delta( \T_{t,K_n}^{\mathcal{E}} ),
	\end{align}
	where $P_n,C_n$ denote path (1D chain) and ring of $n$ vertices respectively (see Proposition \ref{prop:Kn}). 
	
	\item Given different gadgets, with respect to the same graph $\mathcal{G}$, their convergence speed can only differ by at most a constant. For example, the $t$-th moment operators of gadgets defined by 2-local gates: $\iSWAP$, $\B$, $\CNOT$, $\SQSW$, $\SQiSW$ and $\QFT$, can reach the spectral gap $\Theta(1/n)$ scaling with respect to $n$ on 1D chains $P_n$, rings $C_n$ as well as the complete graph $K_n$ (see the remark after Theorem \ref{thm:iSWAP1} and Section \ref{sec:graphs}). 
\end{itemize}

\item Assume the 1-qubit Haar random gates are implemented with negligible error, but there are deviations from the target 2-qubit gates, or simply the KAK coefficients. Let $\T_{t,\mathcal{G}}^{\Gadget}, \tilde{\T}_{t,\mathcal{G}}^{\Gadget}$ be the desired moment operator and the one obtained after deviations. Then
\begin{align}
	(1 - \epsilon) \Delta( \T_{t,\mathcal{G}}^{\Gadget}) \leq  \Delta( \tilde{\T}_{t,\mathcal{G}}^{\Gadget})
\end{align} 
where $\epsilon = O(\delta)$ with $\delta > 0$ stands for the largest possible derivation of the KAK coefficients. The error term $\epsilon$ is also independent of the choice of the graph $\mathcal{G}$ and the number $n$ of qubits in the system, but may depend on $t$ (Theorem \ref{prop:Perturb}). This ensures the robustness of our gadget models in converging to $t$-designs. Particularly, for 2-designs, we can improve the results with $\epsilon = O(\delta^2)$ for gadgets associated with $\CNOT$ or $\iSWAP$.

\item The 3-qubit $\Margolus$ gate outperforms several other well-known 3-qubit gates and 2-qubit gates in the generation of 2-designs (see Section \ref{sec:multiqubit}).

\item There is an upper bound on the spectral gap corresponding to generic $r$-local unitary ensembles built on hypergraphs (see Section \ref{sec:multiqubit}):
\begin{align}
	\Delta( \T_{t,\mathcal{G}}^{\mathcal{E}} ) \leq \frac{2r}{n}.
\end{align}
which can be obtained by using Weyl's inequality~\cite{Horn2017} in a straightforward manner.  
\end{itemize}


\subsection{Theoretical foundation: matrix representation of moment superoperators of gadgets}\label{sec:foundation}

To verify our results rigorously, we first illustrate how to represent the 2-local moment operator $\T_{t,(i,j)}^{\Gadget}$ as a matrix. This step is essential to conducting further theoretical computations. It should be noted that matrices representing $\T_{t,(i,j)}^{\Gadget}$ on different sites $i,j$ are similar to each other. For simplicity, we denote by $T_t^{\Gadget}$ and $T_t^{\mathcal{E}}$ the moment operators of a gadget or a generic ensemble respectively, just on a 2-qubit system. Only when we study the global (super-)moment operator $\T_{t,\mathcal{G}}^{\Gadget}$, we shall explicitly emphasize the interplay of different sites.

By definition, both $\T_{t,(i,j)}^{\Gadget}$ and $\T_{t,\mathcal{G}}^{\Gadget}$ are operators acting on 
\begin{align}
	\operatorname{End}(\mathcal{H}^{\otimes t}) \cong \mathcal{H}^{\otimes 2t} = ((\mathbb{C}^2)^{\otimes n})^{\otimes 2t}) \cong (\mathbb{C}^2)^{\otimes 2t})^{\otimes n} & \cong \Big( \operatorname{End}( (\mathbb{C}^2)^{\otimes t} ) \Big)^{\otimes n}.
\end{align}
Regardless of the large dimension of the space, the action of $\T_{t,(i,j)}^{\Gadget}$ is drastically simplified due to the underlying 1-qubit Haar projectors. According to our discussion in Section \ref{sec:framework}, let $u_{0,i},...,u_{D-1,i}$, with $D$ being given by Eq.~\eqref{eq:1qDimension}, denote an orthonormal basis spanning 
\begin{align}
	\text{Comm}_t( \U(2)) \subset (\mathbb{C}^2)^{\otimes 2t} \cong  \operatorname{End}( (\mathbb{C}^2)^{\otimes t} )
\end{align}
embedded in the $i$-th fold of $( \operatorname{End}( (\mathbb{C}^2)^{\otimes t} ) )^{\otimes n}$. Explicit forms of $u_{r,i}$ can be obtained by the permutation action of the symmetric group $S_t$ and Gram--Schmidt process or Weingarten calculus in general~\cite{Collins2003,RobertsChaos2017,Meckes2019}. Given the $i$- and $j$-th qubits of concern, vectors orthogonal to
\begin{align}\label{eq:basis}
	\mathrm{span}\{ u_{0,i},...,u_{D-1,i} \} \otimes \mathrm{span}\{ u_{0,j},...,u_{D-1,j} \}
\end{align}
are projected to zero under the action of $\T_{t,i}^{\U(2)} \T_{t,j}^{\U(2)}$. Therefore, by Eq.~\eqref{eq:GagdetOperator} and omitting the orders of the sites $i$ and $j$, we see
\begin{align}
	\T_{t,(i,j)}^{\Gadget} = T_t^{\Gadget} \otimes I_{\operatorname{End}( (\mathbb{C}^2)^{\otimes t})^{\otimes n-2}}
\end{align}
with $T_t^{\Gadget}$ been completely determined by its action on vectors from \eqref{eq:basis} tensored with the identity operator acting on $(\operatorname{End}( (\mathbb{C}^2)^{\otimes t})^{\otimes n-2}$ corresponding to left $n-2$ sites.

\begin{example}\label{example:LocalOperator}
	When $t = 2$, $D = 2$ and $\text{Comm}_2( \U(2))$ can be spanned by the identity map and the transposition from $S_2$:
	\begin{align}
		I = \sum_{a,b} E\indices{^a_a} E\indices{^b_b}, \quad S = \sum_{a,b} E\indices{^a_b} E\indices{^b_a} = \frac{1}{2} \sum_P P \otimes P^\dagger,
	\end{align}
	where $E\indices{^a_b} \in \operatorname{End}( \mathbb{C}^2 )$ is a matrix unit and $P = I,X,Y,Z$ are Pauli matrices. As a result, we can construct
	\begin{align}\label{eq:1qBasis2}
		u_0 = \frac{1}{2} I,  \quad  u_1 = \frac{1}{\sqrt{3}}(S - \frac{1}{2} I)
	\end{align}
	by Gram--Schmidt orthogonalization with respect to the Hilbert-Schmidt inner product defined by trace. As discussed earlier, the matrix representation of $\T_{2,(i,j)}^{\Gadget}$ is given by
	\begin{align}
		\T_{2,(i,j)}^{\Gadget} = T_2^{\Gadget} \otimes I \vert_{( \operatorname{End}( (\mathbb{C}^2)^{\otimes 2} ) )^{\otimes n-2} }
	\end{align}
	up to the orders of sites $i$ and $j$. The operator $T_2^{\Gadget}$ can be represented by a $4 \times 4$ matrix under the following basis:
	\begin{align}
		& u_0 \otimes u_0 = \frac{1}{4} II, \label{eq:2qBasisA} \\
		& u_0 \otimes u_1 = \frac{1}{2\sqrt{3}} \Big( IS - \frac{1}{2}II \Big), \label{eq:2qBasisB} \\
		& u_1 \otimes u_0 = \frac{1}{2\sqrt{3}} \Big( SI - \frac{1}{2}II \Big), \label{eq:2qBasisC} \\
		& u_1 \otimes u_1 = \frac{1}{3} \Big( SS - \frac{1}{2}(SI + IS) + \frac{1}{4}II \Big), \label{eq:2qBasisD}
	\end{align}
	where the coefficients appear to normalize the basis elements.

	Specifically, suppose the gadget is defined with
	\begin{align}\label{eq:KAK}
		\begin{aligned}
			U & \sim \exp[i(k_x XX + k_y YY + k_z ZZ)] \\
			& = (\cos k_x II + i \sin k_x XX) (\cos k_y II + i \sin k_y YY)  (\cos k_z II + i \sin k_z ZZ).
		\end{aligned}
	\end{align}
	 Then 
	\begin{align} \label{eq:MatrixEntries}
		\langle u_{s_1} \otimes u_{r_1},  T_2^{\Gadget} (u_{s_2} \otimes u_{r_2})  \rangle = \tr \Big( (u_{s_1} \otimes u_{r_1})^\dagger U_{i,j}^{\otimes 2} (u_{s_2} \otimes u_{r_2}) U_{i,j}^{\dagger \otimes 2} \Big) 
	\end{align}
	can be computed analytically, which yields:
	\begin{align}\label{eq:Gadget4x4}
		T_2^{\Gadget} = \begin{pmatrix}
			1 & 0 & 0 & 0 \\
			0 & 1 - c -3b & c & \sqrt{3}b \\
			0 & c & 1 - c -3b & \sqrt{3}b \\
			0 & \sqrt{3}b & \sqrt{3}b & a
		\end{pmatrix},
	\end{align}
	with
	\begin{align}\label{eq:abc}
		a = & \frac{1}{9} \Big(6 + \cos(4 k_x) \cos(4 k_y) + \cos(4 k_x)\cos(4 k_z) + \cos(4 k_y)\cos(4 k_z) \Big), \\
		b = & \frac{1}{18} \Big( 3 - \cos(4 k_x) \cos(4 k_y) - \cos(4 k_y) \cos(4 k_z) - \cos(4 k_z) \cos(4 k_x) \Big) = \frac{1}{2}(1-a), \\
		c = & \frac{1}{12} \Big( 3 + \cos(4 k_x) \cos(4 k_y) + \cos(4 k_y) \cos(4 k_z) + \cos(4 k_z) \cos(4 k_x) \notag \\
		& \hspace{6cm} - 2 (\cos(4 k_x) + \cos(4 k_y) + \cos(4 k_z) )  \Big). 
	\end{align} 
	 It is straightforward to see that $u_0 \otimes u_0$ is just the identity matrix (with normalization). It is thus a unit eigenvector of both the 1-qubit Haar projector and $U_{i,j} \otimes \bar{U}_{i,j}$. However, other basis elements including cross terms between $I$ and $S$ and cannot be an unit eigenvector. 
	    
	As a side note, there is another canonical way to construct an orthonormal basis for $\text{Comm}_2(\U(2))$:
	\begin{align}\label{eq:1qBasis1}
		u_0' = \Pi_+ = \frac{I + S}{2}, \quad u_1' = \Pi_- = \frac{I - S}{2}.
	\end{align}
	However, none of the four tensor products $u_r' \otimes u_s'$ is a unit eigenvector of the operator $T_2^{\Gadget}$. It will gradually become clear that this choice of basis is inconvenient, comparing to Eq.~\eqref{eq:1qBasis2}, when studying the spectral gap of $\T_{t,\mathcal{G}}^{\Gadget}$ as a summation of $\T_{t,(i,j)}^{\Gadget}$ on edges.
\end{example}

\begin{example}\label{example:LocalOperator2}
	When $t = 3$, according to Eq.~\eqref{eq:1qDimension} $D = 5$. The 5 orthonormal basis elements $\{ u_0,...,u_4 \}$ of $\text{Comm}_2( \U(2))$ can be obtained by the Gram--Schmidt orthogonalization of the $S_3$ actions on $\operatorname{End}( (\mathbb{C}^2)^{\otimes 3} )$. By taking tensor product, $\T_{3,(i,j)}^{\Gadget}$ is completely determined by its action on the 25-dimensional subspace $\mathrm{span}\{ u_0,...,u_4 \}^{\otimes 2}$. Again, we suppose $U_{i,j}$ is given by \eqref{eq:KAK} and then compute
	\begin{align}
		\tr \Big( (u_{s_1} \otimes u_{r_1})^\dagger U_{i,j}^{\otimes 3} (u_{s_2} \otimes u_{r_2}) U_{i,j}^{\dagger \otimes 3} \Big)
	\end{align}
	numerically. Not like the case when $t = 2$, the corresponding matrix is not always Hermitian in general. For instance, when $k_x = k_y = k_z = \frac{\pi}{8}$ which corresponds to $\SQSW$, the $25 \times 25$ matrix representing $T_3^{\SQSW}$ is not Hermitian. Therefore, when $t\geq 3$ we consider the model in which we sample the 2-qubit unitary and its inverse with equal probability.
\end{example}


\subsection{Independent gadgets {and the $\chi$ gate family}}\label{sec:autoconvolution}

Interestingly, there is a solution $(k_x, k_y, k_z) = (\frac{\pi}{4} - \frac{1}{8}\arccos(\frac{1}{5}), \frac{\pi}{8}, \frac{1}{8}\arccos(\frac{1}{5}))$ to the KAK coefficients, corresponding to what we call the \emph{$\chi$ gate family}, for which Eq.~\eqref{eq:Gadget4x4} becomes
\begin{align}\label{eq:G-U(4)}
	T_2^{(\frac{\pi}{4} - \frac{1}{8}\arccos(\frac{1}{5}), \frac{\pi}{8}, \frac{1}{8}\arccos(\frac{1}{5}))} = T_2^\chi = \begin{pmatrix}
		1 & 0 & 0 & 0 \\
		0 & \frac{1}{5} & \frac{1}{5} & \frac{\sqrt{3}}{5} \\
		0 & \frac{1}{5} & \frac{1}{5} & \frac{\sqrt{3}}{5} \\
		0 & \frac{\sqrt{3}}{5}  & \frac{\sqrt{3}}{5}  & \frac{3}{5} 
	\end{pmatrix}
\end{align}
with eigenvalues being $(1,1,0,0)$. By definition, the corresponding second moment operator $\T_{2,(i,j)}^{\chi}$ matches the 2-local Haar projector $\T_{2,(i,j)}^{\U(4)}$ exactly on any $n$-qubit system, even though we do not sample over the whole group $\U(4)$ in defining the gadget. Numerical computation indicates that their third moments also matches, but it ceases to hold for higher moments. In fact, we can show that there is no such solution  for 4-designs and above:
\begin{theorem}
\label{prop:chi}
	When $t \geq 4$, there is no solution of the KAK coefficients such that $T_{t}^{\Gadget} = T_{t}^{\U(4)}$. As a result, $\T_{t,(i,j)}^{\Gadget} \neq \T_{t,(i,j)}^{\U(4)}$ for any pair of sites in a general $n$-qubit system. 
\end{theorem}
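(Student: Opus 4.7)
The first step is to reduce the statement to the single case $t=4$. Indeed, if $T_t^{\Gadget} = T_t^{\U(4)}$ held for some $t \geq 5$ and some KAK triple $(k_x,k_y,k_z)$, the gadget ensemble would be an exact $t$-design and hence, by the standard partial-tracing argument, also an exact $4$-design with the same coefficients, giving $T_4^{\Gadget} = T_4^{\U(4)}$. So it suffices to rule out the case $t = 4$.

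For $t = 4$, I would follow the matrix setup of Section~\ref{sec:foundation} and Example~\ref{example:LocalOperator}. The symmetrized operator $T_4^{\Gadget}$ of Eq.~\eqref{eq:GagdetOperator2} is fully determined by its action on $\operatorname{Comm}_4(\U(2))^{\otimes 2}$, a space of dimension $D^2 = C_4^2 = 14^2 = 196$, where $C_4 = 14$ by Eq.~\eqref{eq:1qDimension}. An orthonormal basis is obtained by Gram--Schmidt orthonormalization of the $S_4$ actions on each single-qubit commutant, then tensoring across the two qubits. The target operator $T_4^{\U(4)}$ is an orthogonal projector of rank $\dim \operatorname{Comm}_4(\U(4)) = 4! = 24$ onto the diagonally embedded copy $\sigma \mapsto \sigma \otimes \sigma$ of $\mathbb{C}[S_4]$ inside this $196$-dimensional space. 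Hence the equality in question forces $T_4^{\Gadget}$ to be a rank-$24$ orthogonal projector onto this diagonal copy, whose orthogonal complement has dimension $172$.

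Two particularly clean necessary scalar conditions for the projector property are the trace identities $\tr(T_4^{\Gadget}) = \tr((T_4^{\Gadget})^2) = 24$, since a Hermitian $M$ is a rank-$r$ orthogonal projector iff $\tr(M) = \tr(M^2) = r$. By unitary invariance these reduce to the Haar integrals
\begin{align*}
\tr(T_4^{\Gadget}) &= \int_{\U(2)^{2}} \bigl\vert \tr((A_1 \otimes A_2)\, U)\bigr\vert^{8} \, dA_1\, dA_2, \\
\tr((T_4^{\Gadget})^2) &= \int_{\U(2)^{4}} \bigl\vert \tr((A_1 \otimes A_2)\, U\, (B_1 \otimes B_2)\, U)\bigr\vert^{8} \, dA\, dB,
\end{align*}
which, by the KAK invariance of Theorem~\ref{thm:KAK}, depend only on $(k_x,k_y,k_z)$ through $U = \exp[i(k_x XX + k_y YY + k_z ZZ)]$. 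Expanding this exponential in Pauli strings yields both integrals as explicit trigonometric polynomials in $4k_x, 4k_y, 4k_z$ of the same flavor as Eq.~\eqref{eq:abc}. My strategy is then to show that the joint system $\tr(T_4^{\Gadget}) = \tr((T_4^{\Gadget})^2) = 24$, combined with higher trace identities $\tr((T_4^{\Gadget})^k) = 24$ where needed, admits no solution in the canonical Weyl chamber of Proposition~\ref{cor:KAK}.

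The principal obstacle is the sheer size of the underlying $196 \times 196$ matrix: a direct calculation is impractical. I would mitigate this by exploiting (i) the $S_3$-symmetry on $(k_x,k_y,k_z)$ from Proposition~\ref{cor:KAK} to restrict to the canonical Weyl chamber; (ii) the commutation of $T_4^{\Gadget}$ with the diagonal $S_4$ action that simultaneously permutes the four time copies on each of the two qubits, block-diagonalizing the space into isotypic components indexed by partitions $\lambda \vdash 4$ with $\ell(\lambda) \leq 2$; and (iii) the Bell-basis representation of $\exp[i(k_x XX + k_y YY + k_z ZZ)]$, in which it is diagonal with phases $e^{\pm i(k_x \pm k_y \pm k_z)}$, reducing the trace integrals to finite character-like sums. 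A cleaner parallel route I would pursue is to exhibit a single explicit element $w \in \operatorname{Comm}_4(\U(2))^{\otimes 2} \ominus \operatorname{Comm}_4(\U(4))$, for instance an antisymmetric combination $\sigma \otimes \tau - \tau \otimes \sigma$ for well-chosen $\sigma \neq \tau \in S_4$, and verify directly that $T_4^{\Gadget} w \neq 0$ for every KAK triple, which alone would preclude the projector condition.
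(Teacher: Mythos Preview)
Your reduction to $t=4$ and the reformulation as ``$T_4^{\Gadget}$ must be the rank-$24$ orthogonal projector onto the diagonal copy of $\mathbb{C}[S_4]$ inside $\operatorname{Comm}_4(\U(2))^{\otimes 2}$'' exactly match the paper's setup. Your ``cleaner parallel route'' (B) --- find an explicit $w$ in the $172$-dimensional orthogonal complement and show $T_4^{\Gadget} w \neq 0$ for all KAK triples --- is in spirit precisely what the paper does.

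The gap is in the specifics of route (B). A single witness of the form $\sigma\otimes\tau - \tau\otimes\sigma$ will not obviously suffice, and the paper explicitly remarks that with fewer than three extra basis vectors one can always find KAK coefficients making the relevant (diagonal) matrix entries vanish. What the paper actually does is pick three concrete off-diagonal permutation tensors, namely $(14)(23)\otimes(1423)$, $(14)(23)\otimes(132)$, and $(132)\otimes(124)$, Gram--Schmidt them against the $24$ unit eigenvectors $\{\sigma\otimes\sigma\}$, and compute the resulting $3\times 3$ block of $T_4^{\Gadget}$ analytically in terms of five auxiliary symmetric functions $f_1,\dots,f_5$ of $\cos(4k_x),\cos(4k_y),\cos(4k_z)$. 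The system $T_{11}=T_{22}=T_{23}=0$ then forces $f_1=-3/5$, $f_2=f_3=0$, which pins down the cosines uniquely (up to permutation), and at that point $T_{33}$ is checked to be nonzero. This is the step your plan leaves out: you would need to exhibit the vectors and actually solve the resulting trigonometric system, and one vector is not expected to be enough.

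Your route (A) via the trace identities $\tr(T_4^{\Gadget})=\tr((T_4^{\Gadget})^2)=24$ is a genuinely different and in principle viable strategy --- your integral expressions are correct --- but the eighth-moment integral over $\U(2)^4$ is at least as heavy as the paper's three-vector computation, and you have not carried it out. The paper's approach buys you a small, explicitly solvable polynomial system at the cost of having to guess good witness vectors; your trace approach would avoid the guessing but trades it for a much larger symbolic expansion.
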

\begin{proof}
	We only need to prove that there is no such solutions when $t = 4$. In order to match the Haar projector $T_{4}^{\U(4)}$, $T_{4}^{\Gadget}$ itself has to be a projector of rank $4! = 24$. Our strategy is to prove that for any KAK coefficients, the rank of $T_{4}^{\Gadget}$ is always larger than $24$.
	
	As introduced in Section \ref{sec:foundation}, the matrix representation of $T_4^{\Gadget}$ can be determined by the orthonormal bases $\{u_{0,1},...,u_{D-1,1}\}, \{u_{0,2},...,u_{D-1,2}\}$ ($1,2$ are labels of the two qubits where we will omit later) that span the unit eigenspaces of $\T_{4,1}^{\U(2)}, \T_{4,2}^{\U(2)}$ respectively. By Eq.~\eqref{eq:1qDimension}, $D = 14$ here. However, we do not work with these orthonormal bases because their tensor products $u_{r} \otimes u_{s}$ are generally not unit eigenvectors of $T_4^{\Gadget}$, except the trivial case when $r = s = 0$ corresponding to the identity matrix.
	
	Instead, we take the complete spanning set $\{v_0,...,v_{23}\}, \{v_{0},...,v_{23}\}$ given by permutations from $S_4$ acting on $(\mathbb{C}^2)^{\otimes 4}$. Then by definition, $v_{r} \otimes v_{r}; r=0,...,23$ are still permutations from $S_4$ but acting on $((\mathbb{C}^2)^{\otimes 2})^{\otimes 4}$ and hence unit eigenvectors of $T_4^{\Gadget}$. After orthogonalize all these tensor products 
	\begin{align}
		\{ v_r \otimes v_r \}_{r = 0}^{23} \cup \{ v_r \otimes v_s \}_{r \neq s},
	\end{align}
	we are left with $14^2 = 196$ orthonormal basis elements, denoted by $\{b_\alpha\}_{\alpha = 0}^{195}$, which completely determine the matrix representation of $\T_4^{\Gadget}$.
	
	It is infeasible to compute this $196 \times 196$ matrix analytically (cf. Eq.~\eqref{eq:Gadget4x4}). However, we only need to find a few number of vectors $b_\alpha$ with $\alpha \geq 23$ and prove that $T_{ij} = \langle b_{\alpha_i}, T_4^{\Gadget} b_{\alpha_j} \rangle \neq 0$, it would be sufficient to reach the conclusion. To see the reason, suppose we take $b_{\alpha_1}, b_{\alpha_2},  b_{\alpha_3}$ for some $\alpha_1,\alpha_2,\alpha_3 \geq 23$, the sub-matrix representation of $T_4^{\Gadget}$ restricted to these vectors and the unit eigenvectors is
	\begin{align}
		\begin{pmatrix}
			1 & 0 & \cdots & 0 & 0 & 0 & 0  \\
			0 & 1 &  \cdots & 0 & 0 & 0 & 0  \\
			\vdots & \vdots & \ddots & \vdots & \vdots & \vdots  & \vdots \\
			0 & 0 &  \cdots & 1 & 0 & 0 & 0 \\
			0 & 0 & \cdots & 0 & T_{11} & T_{12} & T_{13} \\
			0 & 0 & \cdots & 0 & T_{21} & T_{22} & T_{23} \\
			0 & 0 & \cdots & 0 & T_{31} & T_{32} & T_{33}		
		\end{pmatrix}.
	\end{align}
	As long as the right bottom three diagonal elements do not vanish simultaneously, the rank of this restricted matrix, and thus $T_4^{\Gadget}$, must be larger than $24$. Note that with fewer extra basis elements, one can always find certain KAK coefficients making the diagonal entries vanish simultaneously.
	
	Specifically, let
	\begin{align}
		b_{\alpha_1}' = (14)(23) \otimes (1423), \quad b_{\alpha_2}’ = (14)(23) \otimes (132), \quad b_{\alpha_3}' = (132) \otimes (124),  
	\end{align}
	where each permutation acts on the 4-fold tensor product $(\mathbb{C}^2)^{\otimes 4}$ of the 1-qubit Hilbert space. By a Gram--Schmidt process, we orthogonalize them with the first 24 orthonormal basis elements to define $b_{\alpha_1},  b_{\alpha_2},  b_{\alpha_3}$. Obviously, 
	\begin{align}
		\langle b_{\alpha_k}, \T_{4,(i,j)}^{\Gadget} b_{\alpha_k} \rangle = \frac{1}{2} \langle b_{\alpha_k},  U^{\otimes 4} \otimes \bar{U}^{\otimes 4}  + U^{\dagger \otimes 4} \otimes \bar{U}^{\dagger \otimes 4}  b_{\alpha_k} \rangle.
	\end{align}
	Let
	\begin{align}
		f_1(k_x,k_y,k_z) = & \cos(4 k_x)(2 + \cos(4 k_y)) + \cos(4 k_y)(2 + \cos(4 k_z)) + \cos(4 k_z)(2 + \cos(4 k_x) ), \\
		f_2(k_x,k_y,k_z) = & \cos(4 k_x) \cos(4 k_y) \cos(4 k_z) - ( \cos(4 k_x) + \cos(4 k_y) + \cos(4 k_z) ), \\
		f_3(k_x,k_y,k_z) = & \cos(4 k_x) + \cos(4 k_y) + \cos(4 k_z), \\
		f_4(k_x,k_y,k_z) = & \cos^2(4 k_x) \cos^2(4 k_y) + \cos^2(4 k_y) \cos^2(4 k_z) + \cos^2(4 k_z) \cos^2(4 k_x), \\
		f_5(k_x,k_y,k_z) = & \cos^2(4 k_x) (1 + \cos(4 k_y) + \cos(4 k_z)) +  \cos^2(4 k_y) (\cos(4 k_x) + 1 + \cos(4 k_z) ) \notag \\
		& +  \cos^2(4 k_z) (\cos(4 k_x) + \cos(4 k_y) + 1).
	\end{align}
	Then analytically, we can compute
		Then analytically, we can compute
	\begin{align}
		T_{11} = & \frac{1}{48} (3 + 5 f_1), \\
		T_{12} = & T_{21} = T_{13} = T_{31} = 0, \\ 
		T_{22} = & \frac{1}{43} \Big( 3 + 5 f_1 + \frac{5}{2} f_2 \Big), \\
		T_{23} = & T_{32} = \frac{55}{129} \sqrt{\frac{7}{53337}} \Big( 3 + 5 f_1 + \frac{6}{11} f_2 - \frac{172}{11} f_3  \Big), \\
		T_{33} = & \frac{1}{110087568} \Big( 11609360 f_1 + 6226080 f_2 - 2841440 f_3 + 1035440 f_2 f_3 + 1035440 f_3^2 \notag  \\
		& + 3282408 + 517720 f_4 + 2070880 f_5  \Big).
	\end{align}
	We are going to show that it is impossible to find $k_x,k_y,k_y$ making all these entries being zero. Despite the first glance, it is quite simple to see that 
	\begin{align}
		T_{11} = T_{22} = T_{23} = 0 \Leftrightarrow f_1 = -\frac{3}{5}, f_2 = f_3 = 0
	\end{align}
	which gives 
	\begin{align}
		\cos(4 k_x) = \sqrt{\frac{3}{5}}, \quad \cos(4 k_y) = -\sqrt{\frac{3}{5}}, \quad \cos(4 k_z) = 0
	\end{align}
	and there are five extra ways to permute the solutions. Substituting them into $T_{33}$, it is nonzero and hence the rank of $\T_{4,(i,j)}^{\Gadget}$ should always be larger than $24$, which verifies the claim. 
\end{proof}

Substituting different KAK coefficients into Eq.~\eqref{eq:Gadget4x4}, we can write down the corresponding matrix representations of $T_2^{\Gadget}$ with eigenvalues as in Tab.~\ref{tab:Gadgets}. 

\renewcommand\arraystretch{1.3}
\newcommand\scalemath[2]{\scalebox{#1}{\mbox{\ensuremath{\displaystyle #2}}}} 
\begin{table}[]
	\centering
	\resizebox{\columnwidth}{!}{%
		\begin{tabular}{c|c|c|c|c}
			\hline\hline
			Gates family & KAK coefficients & Matrix representation & Spectrum & Spectral gap \\ 
			\hline\hline
			$\SWAP$ & $(\frac{\pi}{4},\frac{\pi}{4},\frac{\pi}{4})$ &  $T_2^{\text{SWAP}} = \resizebox{7em}{3em}{ $\begin{pmatrix}
					1 & 0 & 0 & 0 \\
					0 & 0 & 1 & 0 \\
					0 & 1 & 0 & 0 \\
					0 & 0 & 0 & 1
				\end{pmatrix}$ }$  & $(1,1,1,-1)$ & $0$ \\
			\hline
			$\chi$ & $(\frac{\pi}{4} - \frac{1}{8}\arccos(\frac{1}{5}), \frac{\pi}{8}, \frac{1}{8}\arccos(\frac{1}{5}))$ & $T_2^\chi = \resizebox{7em}{3em}{ $\begin{pmatrix}
				1 & 0 & 0 & 0 \\
				0 & \frac{1}{5} & \frac{1}{5} & \frac{\sqrt{3}}{5} \\
				0 & \frac{1}{5} & \frac{1}{5} & \frac{\sqrt{3}}{5} \\
				0 & \frac{\sqrt{3}}{5}  & \frac{\sqrt{3}}{5}  & \frac{3}{5} 
			\end{pmatrix}$ }$ & $(1,1,0,0)$ & $1$ \\
			\hline
			$\QFT$ & $(\frac{\pi}{4},\frac{\pi}{4},\frac{\pi}{8})$ & $T_2^{\QFT} = \resizebox{7em}{3em}{ $\begin{pmatrix}
				1 & 0 & 0 & 0 \\
				0 & 0 & \frac{2}{3} & \frac{\sqrt{3}}{9} \\
				0 & \frac{2}{3} & 0 & \frac{\sqrt{3}}{9} \\
				0 & \frac{\sqrt{3}}{9}  & \frac{\sqrt{3}}{9}  & \frac{7}{9} 
			\end{pmatrix}$ }$ &  $(1,1,\frac{4}{9},-\frac{2}{3})$ & $\frac{1}{3}$ \\
			\hline 
			$\SQSW$ & $(\frac{\pi}{8},\frac{\pi}{8},\frac{\pi}{8})$ & $T_2^{\SQSW} = \resizebox{7em}{3em}{ $\begin{pmatrix}
				1 & 0 & 0 & 0 \\
				0 & \frac{1}{4} & \frac{1}{4} & \frac{\sqrt{3}}{6} \\
				0 & \frac{1}{4} & \frac{1}{4} & \frac{\sqrt{3}}{6} \\
				0 & \frac{\sqrt{3}}{6}  & \frac{\sqrt{3}}{6}  & \frac{2}{3} 
			\end{pmatrix}$ }$ & $(1,1,\frac{1}{6},0)$ & $\frac{5}{6}$ \\
			\hline
			$\iSWAP$ & $(\frac{\pi}{4},\frac{\pi}{4},0)$ & $T_2^{\iSWAP} = \resizebox{7em}{3em}{ $\begin{pmatrix}
				1 & 0 & 0 & 0 \\
				0 & 0 & \frac{1}{3} & \frac{2\sqrt{3}}{9} \\
				0 & \frac{1}{3} & 0 & \frac{2\sqrt{3}}{9} \\
				0 & \frac{2\sqrt{3}}{9}  & \frac{2\sqrt{3}}{9}  & \frac{5}{9} 
			\end{pmatrix}$ }$ & $(1,1,-\frac{1}{9},-\frac{1}{3})$ & $\frac{2}{3}$ \\
			\hline
			$\B$ & $(\frac{\pi}{4},\frac{\pi}{8},0)$ & $T_2^{\B} = \resizebox{7em}{3em}{ $\begin{pmatrix}
				1 & 0 & 0 & 0 \\
				0 & \frac{1}{6} & \frac{1}{6} & \frac{2\sqrt{3}}{9} \\
				0 & \frac{1}{6} & \frac{1}{6} & \frac{2\sqrt{3}}{9} \\
				0 & \frac{2\sqrt{3}}{9}  & \frac{2\sqrt{3}}{9}  & \frac{5}{9} 
			\end{pmatrix}$ }$ & $(1,1,0,-\frac{1}{9})$ & $\frac{8}{9}$ \\
			\hline 
			$\SQiSW$ & $(\frac{\pi}{8},\frac{\pi}{8},0)$ & $T_2^{\SQiSW} = \resizebox{7em}{3em}{ $\begin{pmatrix}
				1 & 0 & 0 & 0 \\
				0 & \frac{5}{12} & \frac{1}{12} & \frac{\sqrt{3}}{6} \\
				0 & \frac{1}{12} & \frac{5}{12} & \frac{\sqrt{3}}{6} \\
				0 & \frac{\sqrt{3}}{6}  & \frac{\sqrt{3}}{6}  & \frac{2}{3} 
			\end{pmatrix}$ }$ & $(1,1,\frac{1}{3},\frac{1}{6})$ & $\frac{2}{3}$ \\
			\hline
			$\CNOT$ & $(\frac{\pi}{4},0,0)$ & $T_2^{\CNOT} = \resizebox{7em}{3em}{ $\begin{pmatrix}
				1 & 0 & 0 & 0 \\
				0 & \frac{1}{3} & 0 & \frac{2\sqrt{3}}{9} \\
				0 & 0 & \frac{1}{3} & \frac{2\sqrt{3}}{9} \\
				0 & \frac{2\sqrt{3}}{9}  & \frac{2\sqrt{3}}{9}  & \frac{5}{9} 
			\end{pmatrix}$ }$ & $(1,1,\frac{1}{3},-\frac{1}{9})$ & $\frac{2}{3}$ \\
			\hline\hline
		\end{tabular}
	}
	\caption{Matrix representations of $T_2^{\Gadget}$ for a variety of notable 2-local gates. The spectral gap is defined by $\Delta( T_t^{\Gadget}) = 1 - \max \{\lambda_2(\T_t^{\Gadget}), \vert \lambda_{\min}(\T_t^{\Gadget}) \vert \} $ (also see Eq.~\eqref{eq:Gap} for more details).}
	\label{tab:Gadgets}
\end{table}

Note that among all 2-local gates families, only the $\SWAP$ gate is not universal together with single-qubit gates. As shown in Table \ref{tab:Gadgets}, $T_2^{\SWAP}$ has three unit eigenvalues and can never converge to 2-designs. 


\subsection{Efficiency of convergence to 2-designs on many-body circuits}\label{sec:efficiency}

We now systematically study the convergence efficiency of gadget models on general $n$-body many-body systems under arbitrary graphs using a variety of methods.
A key conclusion is that the gadget of $\iSWAP$ exhibits particularly high efficiency in generating 2-designs, provably outperforming a majority of other 2-qubit gates on any circuit graph (Theorem \ref{thm:iSWAP1}) and indeed, expected to achieve the highest efficiency among all 2-qubit gates, for which we provide numerical evidence. Furthermore, when $n \geq 3$, over all possible 2-local unitary circuit ensembles, the $\iSWAP$ gadgets acting on the complete graph (with all-to-all interaction) attain the fastest convergence to  2-designs (Theorem \ref{thm:iSWAP2}). 

Before presenting the results, we should make it clear and rigorous here that when studying the spectral gap, there is in general no need to care about the influence of negative eigenvalues (see Eq.~\eqref{eq:convergence}). Many examples in Table \ref{tab:Gadgets} involve negative eigenvalues. However, after summing over the graph
\begin{align}
	\T_{t,\mathcal{G}}^{\mathcal{E}} = \frac{1}{\vert E(\mathcal{G}) \vert} \sum_{(i,j) \in E(\mathcal{G})} \T_{t,(i,j)}^{\mathcal{E}}
\end{align}
the absolute value of the smallest eigenvalue cannot be larger than second largest eigenvalue for sufficiently large system size $n$. Recall that we denote by $T_t^{\mathcal{E}}$ the moment operator simply defined on a 2-qubit system as each local term $\T_{t,(i,j)}^{\mathcal{E}}$ are similar to each other on different edges.

\begin{lemma}\label{lemma:Eigenvalues1}
	Let $\mathcal{E}$ be any ensemble of 2-local unitaries such that $T_t^{\mathcal{E}}$ is Hermitian. When 
	 \begin{align}
		n \geq 2\frac{1 - \lambda_{\min}(T_t^{\mathcal{E}}) }{1 + \lambda_{\min}(T_t^{\mathcal{E}}) },
	\end{align}
	the second largest eigenvalue of $\T_{t,\mathcal{G}}^{\mathcal{E}}$ is larger than the absolute value of its smallest eigenvalue $\lambda_{\min}(\T_{t,\mathcal{G}}^{\mathcal{E}})$. That is,
	\begin{align}
		\lambda_2( \T_{t,\mathcal{G}}^{\mathcal{E}} ) \geq \vert \lambda_{\min}( \T_{t,\mathcal{G}}^{\mathcal{E}} ) \vert.
	\end{align}
	In other words, the spectral gap is solely determined by the second largest eigenvalue:
	\begin{align}
		\Delta( \T_t^{\mathcal{E}}) = 1 - \max \{\lambda_2(\T_t^{\mathcal{E}}), \vert \lambda_{\min}(\T_t^{\mathcal{E}}) \vert \} = 1 - \lambda_2(\T_t^{\mathcal{E}})
	\end{align}
	for sufficiently large $n$.
\end{lemma}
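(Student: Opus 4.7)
The plan is to combine two bounds on the spectrum of $\T_{t,\mathcal{G}}^{\mathcal{E}}$: an upper bound on $|\lambda_{\min}(\T_{t,\mathcal{G}}^{\mathcal{E}})|$ and a lower bound on $\lambda_2(\T_{t,\mathcal{G}}^{\mathcal{E}})$, both depending only on $\lambda_{\min}(T_t^{\mathcal{E}})$ and the system size $n$. The inequality $\lambda_2 \geq |\lambda_{\min}|$ then follows for all $n$ above the stated threshold by direct algebraic manipulation.

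For the first bound, I note that $\T_{t,(i,j)}^{\mathcal{E}} = T_t^{\mathcal{E}} \otimes I_{\text{other sites}}$, so $\lambda_{\min}(\T_{t,(i,j)}^{\mathcal{E}}) = \lambda_{\min}(T_t^{\mathcal{E}})$ for every edge. Since $\T_{t,\mathcal{G}}^{\mathcal{E}}$ is a convex average of these local operators, Weyl's inequality yields
\begin{align*}
\lambda_{\min}(\T_{t,\mathcal{G}}^{\mathcal{E}}) \geq \lambda_{\min}(T_t^{\mathcal{E}}), \qquad \text{i.e.,}\qquad |\lambda_{\min}(\T_{t,\mathcal{G}}^{\mathcal{E}})| \leq |\lambda_{\min}(T_t^{\mathcal{E}})|.
\end{align*}

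For the second bound, I would exhibit an explicit test vector $v$ orthogonal to the $1$-eigenspace (the commutant of $\U(2^n)$). A natural candidate is $v = f \otimes u_0^{\otimes(n-2)}$, placed on a vertex pair $\{i_0, j_0\}$, where $f$ is the eigenvector of the $2$-qubit operator $T_t^{\mathcal{E}}$ attaining $\lambda_{\min}(T_t^{\mathcal{E}})$, and $u_0$ is the normalized single-qubit identity (a $1$-eigenvector of every local moment operator since $UIU^\dagger = I$). Orthogonality of $v$ to $\operatorname{Comm}_t(\U(2^n))$ follows because $f$ is orthogonal to every element of $\operatorname{Comm}_t(\U(4))$ on qubits $i_0,j_0$---in particular to $u_0 \otimes u_0$ and to the global permutation operators from $S_t$---while the other qubits carry only $u_0$. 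Decomposing the Rayleigh quotient edge-by-edge: edges inside $[n]\setminus\{i_0,j_0\}$ each contribute $1$ because $T_t^{\mathcal{E}}$ fixes $u_0 \otimes u_0$; the edge $(i_0,j_0)$, if present, contributes exactly $\lambda_{\min}(T_t^{\mathcal{E}})$; and each mixed edge is bounded below by $\lambda_{\min}(T_t^{\mathcal{E}})$. A careful accounting---using partial traces of $f$ to refine the mixed-edge contributions, and the pigeonhole bound $\min_i d_i \leq 2|E(\mathcal{G})|/n$ to select $\{i_0,j_0\}$ of small combined degree---yields
\begin{align*}
\lambda_2(\T_{t,\mathcal{G}}^{\mathcal{E}}) \geq 1 - \frac{2(1 - \lambda_{\min}(T_t^{\mathcal{E}}))}{n}.
\end{align*}

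Substituting both bounds into $\lambda_2 \geq |\lambda_{\min}(\T_{t,\mathcal{G}}^{\mathcal{E}})|$ yields $1 - 2(1-\lambda_{\min})/n \geq -\lambda_{\min}$, which rearranges exactly to $n(1+\lambda_{\min}) \geq 2(1-\lambda_{\min})$, the lemma's threshold. The main obstacle will be the test-vector step: achieving the universal prefactor $2/n$ in the Rayleigh quotient requires going beyond the crude ``mixed edge $\geq \lambda_{\min}(T_t^{\mathcal{E}})$'' bound (which naively gives a factor of $4/n$ for regular graphs), and likely involves either finer mixed-edge analysis via the partial traces of $f$ or a more sophisticated test vector adapted to the graph structure. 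Additionally, verifying orthogonality of $v$ to the full $\U(2^n)$ commutant for $t \geq 3$ requires handling all $S_t$-permutation elements, not just the identity.
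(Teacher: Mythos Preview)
Your overall strategy---bounding $|\lambda_{\min}(\T_{t,\mathcal{G}}^{\mathcal{E}})|$ from above and $\lambda_2(\T_{t,\mathcal{G}}^{\mathcal{E}})$ from below, then combining---matches the paper's, and your first bound is identical. However, the test-vector step has a genuine gap that you already flag: placing the two-qubit eigenvector $f$ on a \emph{pair} of vertices $\{i_0,j_0\}$ makes the number of ``bad'' edges (those touching $i_0$ or $j_0$) of order $d_{i_0}+d_{j_0}$, and even after optimizing the pair this yields only $\lambda_2 \gtrsim 1 - \tfrac{4(1-\lambda_{\min})}{n}$, which would double the threshold in the lemma. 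Your proposed refinements via partial traces of $f$ on mixed edges do not obviously close this factor of two, and without them the argument is incomplete.

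The paper avoids this by working with a \emph{single} vertex $v$ of minimum degree and arguing at the operator level via Weyl's inequality. The key observation is that after removing all edges in $N(v)$, vertex $v$ is isolated, so $\sum_{(i,j)\in E(\mathcal{G})\setminus N(v)} \T_{t,(i,j)}^{\mathcal{E}}$ factorizes as $I_v \otimes M$ and therefore has its top eigenvalue $|E(\mathcal{G})\setminus N(v)|$ with multiplicity strictly greater than $t!$; hence the $(t!+1)$-th largest eigenvalue of this piece is still $|E(\mathcal{G})\setminus N(v)|$. Weyl then gives $\lambda_2(\T_{t,\mathcal{G}}^{\mathcal{E}}) \geq 1 - \tfrac{d_{\min}(1-\lambda_{\min})}{|E(\mathcal{G})|} \geq 1 - \tfrac{2(1-\lambda_{\min})}{n}$ via the handshake bound $|E(\mathcal{G})|\geq \tfrac{n}{2} d_{\min}$, and the threshold follows exactly. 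If you prefer a test-vector argument, the direct translation is to use a \emph{one-qubit} vector $g \otimes u_0^{\otimes(n-1)}$ with $g \perp \operatorname{Comm}_t(\U(2))$ placed at the minimum-degree vertex: only $d_{\min}$ edges are then ``bad'', giving the correct prefactor $2/n$ without any refinement, and orthogonality to $\operatorname{Comm}_t(\U(2^n))$ is immediate since every global permutation factorizes qubit-wise.
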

\begin{proof}
	The proof is straightforward by applying Weyl's inequality~\cite{Horn2017}. Let us list eigenvalues of an operator $M$ acting on an $N$-dimensional space as 
	\begin{align}
		\lambda_{\min}(M) = \lambda_1(M) \leq \cdots \leq \lambda_{N-1}(M) \leq \lambda_N(M).
	\end{align}
	They are different from our previous labels but will only be used when applying Weyl's inequality. We first lower bound the smallest eigenvalue of $\T_{t,\mathcal{G}}^{\mathcal{E}}$:
	\begin{align}
		\lambda_{\min} (\T_{t,\mathcal{G}}^{\mathcal{E}}) \geq \frac{1}{\vert E(\mathcal{G}) \vert} \sum_{(i,j) \in E(\mathcal{G})} \lambda_{\min}( \T_{t,(i,j)}^{\mathcal{E}} ) = \lambda_{\min}( \T_{t,(i,j)}^{\mathcal{E}} )
	\end{align}
	because the 2-local moment operators acting on different sites are assumed to be similar and share the same spectrum.
	
	Let us take the vertex $v$ from the graph $\mathcal{G}$ having the smallest degree. Let $N(v)$ denote edges that joint $v$. By definition, 
	\begin{align}\label{eq:handshake}
		\vert E(\mathcal{G}) \vert \geq \frac{1}{2}n \vert N(v) \vert.
	\end{align}
	The second largest eigenvalue of $\T_{t,\mathcal{G}}^{\mathcal{E}}$ is now labeled by $N-t!$ due to the fact that we have $t!$ unit eigenvalues. Then
	\begin{align}
		\lambda_{N-t!} (\T_{t,\mathcal{G}}^{\mathcal{E}}) \geq \frac{1}{\vert E(\mathcal{G}) \vert} \Big[ \lambda_{N-t!} \Big( \sum_{(i,j) \in E(\mathcal{G}) \setminus N(v) } \T_{t,(i,j)}^{\mathcal{E}} \Big) + \lambda_{\min}\Big( \sum_{k,l \in N(v)} \T_{t,(k,l)}^{\mathcal{E}} \Big) \Big].
	\end{align}
	It should be noted that there are several copies of the largest eigenvalue, equal to $\vert E(\mathcal{G}) \setminus N(v) \vert$ for $\sum_{E(\mathcal{G}) \setminus N(v) } \T_{t,(i,j)}^{\mathcal{E}}$, as its summation is taken over a subgraph with the neighborhood of one site totally deleted. Therefore, by \eqref{eq:handshake},
	\begin{align}
		\begin{aligned}
			\lambda_{N-t!}(\T_{t,\mathcal{G}}^{\mathcal{E}}) + \lambda_{\min}(\T_{t,\mathcal{G}}^{\mathcal{E}}) 
			\geq & \frac{1}{\vert E(\mathcal{G}) \vert} \Big[ \Big( \vert E(\mathcal{G}) \vert - \vert N(v) \vert \Big) +  \Big( \vert E(\mathcal{G}) \vert + \vert N(v) \vert \Big)\lambda_{\min}(\T_{t,(i,j)}^{\mathcal{E}}) \Big] \\
			\geq & \frac{\vert N(v) \vert}{\vert E(\mathcal{G}) \vert} \Big[ ( \frac{1}{2} n - 1 ) + ( \frac{1}{2} n + 1) \lambda_{\min}(\T_{t,(i,j)}^{\mathcal{E}}) \Big] \geq 0
		\end{aligned}
	\end{align}
	when $	n \geq 2\frac{1 - \lambda_{\min}(\T_{t,(i,j)}^{\mathcal{E}}) }{1 + \lambda_{\min}(\T_{t,(i,j)}^{\mathcal{E}}) } = 2\frac{1 - \lambda_{\min}(T_t^{\mathcal{E}}) }{1 + \lambda_{\min}(T_t^{\mathcal{E}}) }$.
\end{proof}

\begin{example}\label{example:Eigenvalue}
For 2-designs and the gadget models, Lemma \ref{lemma:Eigenvalues1} can be improved to Corollary \ref{cor:Eigenvalues3} such that the condition is relaxed to
	\begin{align}
		n \geq \frac{2}{1 + \lambda_{\min}(T_t^{\Gadget}) }.
	\end{align}
	Particularly by Table \ref{tab:Gadgets},
	\begin{align}
		\lambda_{\min}(\T_{2,(i,j)}^{\iSWAP}) = \lambda_{\min}( T_2^{\iSWAP} ) = -\frac{1}{3}, 
	\end{align}
	Therefore, when $n \geq 3$, 
	\begin{align}\label{eq:iSWAPEigenvalue}
		\lambda_2( \T_{2,\mathcal{G}}^{\iSWAP}) \geq \vert \lambda_{\min}( \T_{2,\mathcal{G}}^{\iSWAP}) \vert,
	\end{align} 
	and the spectral gap of $\T_{2,\mathcal{G}}^{\iSWAP}$ is genuinely determined by its second largest eigenvalue only when we have more than two qubits.
\end{example}

As a caveat, the lower bound of $n$ from Lemma \ref{lemma:Eigenvalues1} becomes invalid when $\lambda_{\min}(T_t^{\mathcal{E}}) = -1$. However, this case is not relevant because the underlying ensemble, e.g., the gadget of $\SWAP$, cannot even generate a unitary $t$-design. 


\subsubsection{Comparison via positive semidefiniteness}\label{sec:PSD}

Let $\T_{2,\mathcal{G}}^{\Gadget_1}, \T_{2,\mathcal{G}}^{\Gadget_2}$ be the second moment operators of two gadgets defined on the same graph $\mathcal{G}$. The simplest method to compare their spectral gaps is   to find conditions that determine the positive semidefinite ordering of $\T_{2,\mathcal{G}}^{\Gadget_1}$ and  $\T_{2,\mathcal{G}}^{\Gadget_2}$.

To this end, we consider matrix representations of the local operators $T_2^{\Gadget_1}, T_2^{\Gadget_2}$ with 
\begin{align}
	T_2^{\Gadget_i} = 
	\begin{pmatrix}
		1 & 0 & 0 & 0 \\
		0 & 1 - c_i -\frac{3}{2}(1-a_i) & c_i & \frac{\sqrt{3}}{2}(1-a_i) \\
		0 & c_i & 1 - c_i -\frac{3}{2}(1-a_i) & \frac{\sqrt{3}}{2}(1-a_i) \\
		0 & \frac{\sqrt{3}}{2}(1-a_i) & \frac{\sqrt{3}}{2}(1-a_i) & a_i
	\end{pmatrix}
\end{align}
under the basis \eqref{eq:2qBasisA} - \eqref{eq:2qBasisD} defined in Example \ref{example:LocalOperator}. By taking tensor products of $I,S$ from Eq.~\eqref{eq:1qBasis2} at first and then orthogonalizing $II,SS,SI,IS$, we can further simplify the analysis with the following basis:
\begin{align}
	b_1 = & \frac{1}{4} II, \\
	b_2 = & \frac{1}{\sqrt{15}} \Big(SS - \frac{1}{4}II \Big), \\
	b_3 = & \frac{1}{4\sqrt{\frac{3}{5}}} \Big(IS - \frac{d}{5}(II + SS) \Big), \\
	b_4 = & \frac{1}{3} \Big(SI + \frac{1}{4}IS - \frac{1}{2}(SS + II) \Big).
\end{align}
By a change of basis, we obtain
\begin{align}\label{eq:Gadget2x2}
	\scalemath{0.85}{ 
	\begin{pmatrix}
		1 & 0 & 0 & 0 \\
		0 & 1 - c -\frac{3}{2}(1-a) & c & \frac{\sqrt{3}}{2}(1-a) \\
		0 & c & 1 - c -\frac{3}{2}(1-a) & \frac{\sqrt{3}}{2}(1-a) \\
		0 & \frac{\sqrt{3}}{2}(1-a) & \frac{\sqrt{3}}{2}(1-a) & a
	\end{pmatrix} \mapsto 
	\begin{pmatrix}
		1 & 0 & 0 & 0 \\
		0 & 1 & 0 & 0 \\
		0 & 0 & \frac{1}{8}(15a - 10c - 7) & \frac{\sqrt{15}}{8}( a + 2c - 1) \\
		0 & 0 & \frac{\sqrt{15}}{8}( a + 2c-1) & \frac{1}{8}(17a - 6c - 9) 
	\end{pmatrix}. } 
\end{align}

Let 
\begin{align}
	\Delta a = a_1 - a_2, \quad \Delta c = c_1 - c_2. 
\end{align}
The difference between $T_2^{\Gadget_1}$ and $T_2^{\Gadget_2}$ on right bottom $2 \times 2$ subblock from \eqref{eq:Gadget2x2} is
\begin{align}
	\begin{pmatrix}
		\frac{1}{8}(15\Delta a - 10\Delta c) & \frac{\sqrt{15}}{8}(\Delta a + 2\Delta c) \\
		\frac{\sqrt{15}}{8}(\Delta a + 2\Delta c) & \frac{1}{8}(17\Delta a - 6\Delta c) 
	\end{pmatrix},
\end{align}
whose trace and determinant is
\begin{align}
	2(2\Delta a - \Delta c), \quad \frac{5}{4} \Delta a (3\Delta a - 4\Delta c),
\end{align}
respectively. It suffices to determine the positive semidefiniteness of a $2 \times 2$ matrix by the signs of its trace and determinant. Moreover, arbitrary summations of positive semidefinite matrices is still positive semidefinite. Therefore,
\begin{align}\label{eq:PSD}
	\begin{cases}
		2\Delta a - \Delta c \leq 0, \ \Delta a (3\Delta a - 4\Delta c) \geq 0 
		\implies T_2^{\Gadget_1} \leq T_2^{\Gadget_2} 
		\implies \T_{2,\mathcal{G}}^{\Gadget_1} \leq \T_{2,\mathcal{G}}^{\Gadget_2}, \\
		2\Delta a - \Delta c \geq 0, \ \Delta a (3\Delta a - 4\Delta c) \geq 0 
		\implies T_2^{\Gadget_1} \geq T_2^{\Gadget_2} 
		\implies \T_{2,\mathcal{G}}^{\Gadget_1} \geq \T_{2,\mathcal{G}}^{\Gadget_2}, 
	\end{cases}
\end{align}
where the operator inequalities specify the semidefinite orderings between the operators.

A direct computation shows that the eigenvalues of $T_2^{\Gadget}$, except those equal to one, are given by
\begin{align}\label{eq:GadgetEigenvalues}
	\frac{1}{2} \Big[ 4a - 2c - 2 \pm \sqrt{\Big( \frac{1}{8}(15a - 10c - 7) - \frac{1}{8}(17a - 6c - 9) \Big)^2 + 4\Big(  \frac{\sqrt{15}}{8}( a + 2c - 1) \Big)^2 } \Big].
\end{align}
They can be further simplified as $\lambda_2(T_2^{\Gadget}) = \frac{1}{2}(5a-3) \geq \lambda_3(T_2^{\Gadget}) = \frac{1}{2}(3a-4c-1)$. Despite the fact that a nonnegative spectrum is merely a necessary condition for positive semidefiniteness, in the current special situation, Condition \eqref{eq:PSD} is equivalent to 
\begin{align}
	\begin{cases}
		\lambda_2(T_2^{\Gadget_1}) \leq \lambda_2(T_2^{\Gadget_2}), \ \lambda_3(T_2^{\Gadget_1}) \leq \lambda_3(T_2^{\Gadget_2})
		\implies T_2^{\Gadget_2} \leq T_2^{\Gadget_2} 
		\implies \T_{2,\mathcal{G}}^{\Gadget_1} \leq \T_{2,\mathcal{G}}^{\Gadget_2}, \\
		\lambda_2(T_2^{\Gadget_1}) \geq \lambda_2(T_2^{\Gadget_2}), \ \lambda_3(T_2^{\Gadget_1}) \geq \lambda_3(T_2^{\Gadget_2})
		\implies T_2^{\Gadget_1} \geq T_2^{\Gadget_2} 
		\implies \T_{2,\mathcal{G}}^{\Gadget_1} \geq \T_{2,\mathcal{G}}^{\Gadget_2}. 
	\end{cases} \tag{81$^\ast$}
\end{align}

\begin{theorem}\label{thm:iSWAP1}
	Let $T_2^{\iSWAP}$ be the 2-local second moment operator of the $\iSWAP$ gadget acting on two qubits. Let $T_2^{\Gadget}$ be given by a generic gadget. Then either
	\begin{align}
		 \T_2^{\Gadget} \geq T_2^{\iSWAP}
	\end{align}
	or they are incomparable in the sense of semidefiniteness. The former case corresponds to 2-qubit gates with KAK coefficients satisfying
	\begin{align}\label{eq:Region}
		3\Big( a - \frac{5}{9} \Big) - 4\Big( c-\frac{1}{3} \Big) = \frac{1}{36} & \Big( 25 + 18 (\cos(4 k_x) + \cos(4 k_y) + \cos(4 k_z) ) \\
		& + 7( \cos(4 k_x) \cos(4 k_y) + \cos(4 k_y) \cos(4 k_z) + \cos(4 k_z) \cos(4 k_x) ) \Big) \geq 0. \notag
	\end{align}
  The region identified by this condition is illustrated Fig.~\ref{fig:Weyl2}.
	Within this region, with respect to any connected graph $\mathcal{G}$ on $n$ qubits with $n \geq 3$, $\T_{2,\mathcal{G}}^{\iSWAP}$ always has a larger spectral gap.
\end{theorem}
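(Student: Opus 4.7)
The plan is to prove the theorem in two stages: first establish a 2-local semidefinite comparison $T_2^{\Gadget} \geq T_2^{\iSWAP}$ sharply characterised by Eq.~\eqref{eq:Region}, and then lift it to the spectral-gap inequality on $\mathcal{G}$. I apply the PSD criterion Eq.~\eqref{eq:PSD} with $\Gadget_1 = \iSWAP$ and $\Gadget_2 = \Gadget$, so $\Delta a = \tfrac{5}{9} - a$ and $\Delta c = \tfrac{1}{3} - c$, and $T_2^{\iSWAP} \leq T_2^{\Gadget}$ reduces to
\begin{equation*}
2\Delta a - \Delta c \leq 0, \qquad \Delta a(3\Delta a - 4\Delta c) \geq 0.
\end{equation*}

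The key observation I would establish first is that $a \geq 5/9$ for every choice of KAK coefficients, i.e.\ $\Delta a \leq 0$. Setting $x = \cos(4k_x)$, $y = \cos(4k_y)$, $z = \cos(4k_z) \in [-1,1]$, Eq.~\eqref{eq:abc} gives $a = \tfrac{1}{9}(6 + xy + yz + zx)$, so the claim reduces to $xy + yz + zx \geq -1$ on $[-1,1]^3$. Since this polynomial is affine in each variable separately, its minimum over the cube is attained at a vertex of $\{-1,1\}^3$; an eight-point enumeration yields minimum value $-1$, achieved whenever one of $x, y, z$ differs in sign from the other two, to which the $\iSWAP$-family KAK point $(\pi/4, \pi/4, 0)$ belongs.

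Armed with $\Delta a \leq 0$, the determinant condition $\Delta a(3\Delta a - 4\Delta c) \geq 0$ is equivalent to $3\Delta a - 4\Delta c \leq 0$; rewriting in $(a,c)$ yields exactly Eq.~\eqref{eq:Region}. In that regime $\Delta c \geq \tfrac{3}{4}\Delta a \geq 2\Delta a$ (the last step uses $\Delta a \leq 0$), so the trace condition $2\Delta a - \Delta c \leq 0$ is automatic, giving $T_2^{\Gadget} \geq T_2^{\iSWAP}$. Outside Eq.~\eqref{eq:Region}, $3\Delta a - 4\Delta c > 0$ together with $\Delta a \leq 0$ makes the determinant strictly negative, so the $2 \times 2$ difference block is indefinite and neither $T_2^{\Gadget} \geq T_2^{\iSWAP}$ nor the reverse can hold; the borderline $\Delta a = 0$ case is handled by checking that $a = 5/9$ forces $x + y + z \in [-1,1]$ and hence $c \leq 1/3$, which is again inside the region.

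For the final step I would sum $\T_{2,(i,j)}^{\iSWAP} \leq \T_{2,(i,j)}^{\Gadget}$ over $(i,j) \in E(\mathcal{G})$ (tensoring with the identity preserves positive semidefiniteness, and the 2-local operators on different edges are unitarily similar) to obtain $\T_{2,\mathcal{G}}^{\iSWAP} \leq \T_{2,\mathcal{G}}^{\Gadget}$; by min--max, $\lambda_2(\T_{2,\mathcal{G}}^{\iSWAP}) \leq \lambda_2(\T_{2,\mathcal{G}}^{\Gadget})$. Invoking Example~\ref{example:Eigenvalue} (valid once $n \geq 3$), the spectral gap of $\T_{2,\mathcal{G}}^{\iSWAP}$ is governed by its second largest eigenvalue alone, so $\Delta(\T_{2,\mathcal{G}}^{\iSWAP}) = 1 - \lambda_2(\T_{2,\mathcal{G}}^{\iSWAP}) \geq 1 - \lambda_2(\T_{2,\mathcal{G}}^{\Gadget}) \geq \Delta(\T_{2,\mathcal{G}}^{\Gadget})$. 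The main obstacle I anticipate is really the innocuous-looking bound $a \geq 5/9$: it is what singles out $\iSWAP$ as extremal and closes the characterisation via Eq.~\eqref{eq:Region}; everything else is bookkeeping around the reduced $2 \times 2$ block in Eq.~\eqref{eq:Gadget2x2} and the min--max principle.
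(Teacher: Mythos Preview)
Your proposal is correct and follows essentially the same route as the paper: reduce to the $2\times 2$ block of Eq.~\eqref{eq:Gadget2x2}, apply the PSD criterion~\eqref{eq:PSD}, then sum over edges and invoke Example~\ref{example:Eigenvalue} to pass from $\lambda_2$ to the spectral gap.

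The one noteworthy difference is how the pivotal inequality $a \geq 5/9$ is established. The paper proves this separately as Lemma~\ref{lemma:iSWAP2} by computing all critical points of $a(k_x,k_y,k_z)$ via its total derivative; in the proof of Theorem~\ref{thm:iSWAP1} itself the paper instead minimises the trace $4a-2c$ by the same calculus-based critical-point search and argues by contradiction that $T_2^{\Gadget} < T_2^{\iSWAP}$ is impossible. Your argument---observing that $xy+yz+zx$ is affine in each of $x=\cos 4k_x$, $y=\cos 4k_y$, $z=\cos 4k_z$ so its minimum on $[-1,1]^3$ is attained at a vertex---is shorter and avoids the case analysis. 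Your handling of the boundary $\Delta a=0$ (via $(x+y+z)^2 = x^2+y^2+z^2-2 \leq 1$) and of the incomparability outside the region (negative determinant of the difference block) is also more explicit than the paper's somewhat terse ``can be easily obtained by solving Condition~\eqref{eq:iSWAPCondition1}.''
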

\begin{proof}
	We prove by contradiction that $T_2^{\Gadget} < T_2^{\iSWAP}$ does not hold for any gadget. Assume there exists a gadget satisfying this inequality, then we must have:
    \begin{align}\label{eq:iSWAPCondition1}
	\tr( T_2^{\Gadget}) < \tr( T_2^{\iSWAP}). 
   \end{align} 
	By Eqs.~\eqref{eq:Gadget4x4} and \eqref{eq:Gadget2x2}, the trace of $H_{\Gadget}$ is
	\begin{align}\label{eq:TraceiSWAP}
		\begin{aligned}
			4a-2c = 2 + \frac{1}{18}\Big( & 3 + 6 (\cos(4 k_x) + \cos(4 k_y) + \cos(4 k_z) ) \\
			& + 5( \cos(4 k_x) \cos(4 k_y) + \cos(4 k_y) \cos(4 k_z) + \cos(4 k_z) \cos(4 k_x) ) \Big). 
		\end{aligned}
	\end{align}
	Its total derivative is
	\begin{align}
		-\frac{10}{9} \begin{pmatrix}
				& \sin(4 k_x) (\cos(4 k_y) + \cos(4 k_z) + \frac{6}{5}) \\
			& \sin(4 k_y) (\cos(4 k_x) + \cos (4 k_z) + \frac{6}{5}) \\
			& \sin(4 k_z) (\cos(4 k_x) + \cos (4 k_y) + \frac{6}{5}) 
		\end{pmatrix}.
	\end{align}	
	Extrema can be achieved when the total derivative vanishes. That is,
\begin{align}
	\sin(4 k_x) = \sin(4 k_y) = \sin(4 k_z) = 0,
\end{align}
or
\begin{align}
	5\cos(4 k_y) + 5\cos(4 k_z) + 6 = 5\cos(4 k_z) + 5\cos(4 k_x) + 6 = 5\cos(4 k_x) + 5\cos(4 k_y) + 6 = 0,
\end{align}
or
\begin{align}
	\sin(4 k_x) = 0, \quad \cos(4 k_x) = -1, \quad 5\cos (4 k_z) + 1 = 5\cos (4 k_y) + 1 = 0,
\end{align}
because other cases like
\begin{align}
	\sin(4 k_x) = 0, \quad \cos(4 k_x) = 1, \quad 5\cos (4 k_z) + 11 = 5\cos (4 k_y) + 11 = 0
\end{align}
do not have solutions for (part of) the KAK coefficients. 

All possible cases corresponds to
\begin{align}
	& \cos(4 k_x) = \mathmakebox[6mm]{1,} \quad \cos(4 k_y) = \mathmakebox[6mm]{1,} \quad \cos(4 k_z) = \mathmakebox[6mm]{1;} \quad 4a-2c = 2. \\
	& \cos(4 k_x) = \mathmakebox[6mm]{1,} \quad \cos(4 k_y) = \mathmakebox[6mm]{1,} \quad \cos(4 k_z) = \mathmakebox[6mm]{-1;} \quad 4a-2c = \frac{2}{9}. \\
	& \cos(4 k_x) = \mathmakebox[6mm]{1,} \quad \cos(4 k_y) = \mathmakebox[6mm]{-1,} \quad \cos(4 k_z) = \mathmakebox[6mm]{-1;} \quad 4a-2c = -\frac{4}{9}.  \\
	& \cos(4 k_x) = \mathmakebox[6mm]{-1,} \quad \cos(4 k_y) = \mathmakebox[6mm]{-1,} \quad \cos(4 k_z) = \mathmakebox[6mm]{-1;} \quad 4a-2c = 0.  \\
	& \cos(4 k_x) = \mathmakebox[6mm]{-\frac{3}{5},} \quad \cos(4 k_y) = \mathmakebox[6mm]{-\frac{3}{5},} \quad \cos(4 k_z) = \mathmakebox[6mm]{-\frac{3}{5};} \quad 4a-2c = -\frac{2}{15}.  \\
	& \cos(4 k_x) = \mathmakebox[6mm]{-1,} \quad \cos(4 k_y) = \mathmakebox[6mm]{-\frac{1}{5},} \quad \cos(4 k_z) = \mathmakebox[6mm]{-\frac{1}{5};} \quad 4a-2c = -\frac{8}{45}. 
\end{align}
As a result, the minima is $-\frac{4}{9}$ with
\begin{align}
	k_x = \frac{\pi}{4}, \quad k_y = \frac{\pi}{4}, \quad k_z = 0.
\end{align} 
We can also exchange the coefficients, but all of them correspond to the same KAK decomposition by Corollary \ref{cor:KAK}. Therefore, when Condition \ref{eq:iSWAPCondition1} holds, $\T_{2,\mathcal{G}}^{\iSWAP}$ is always the smaller one with larger spectral gap. The region from \eqref{eq:Region} can be easily obtained by solving Condition \ref{eq:iSWAPCondition1} for $\iSWAP$.
\end{proof}	
	
\begin{example}\label{example:Gadget2}
	Here we summarize various concrete observations on the comparison between typical 2-qubit gates listed in Table \ref{tab:Gadgets}. Let $\mathcal{G}$ be a fixed connected graph over $n$ qubits. By comparing the matrix entries and eigenvalues from Table \ref{tab:Gadgets} and by Theorem \ref{thm:iSWAP1}, we find: 
	\begin{enumerate}
		\item When $n \geq 3$,
		\begin{align}
			\Delta( \T_{2,\mathcal{G}}^{\iSWAP} ) \geq \Delta(\T_{2,\mathcal{G}}^{\B}) ,\Delta(\T_{2,\mathcal{G}}^{\SQSW}), \Delta(\T_{2,\mathcal{G}}^{\CNOT}).
		\end{align}

		\item For autoconvolution on the two qubits themselves, as discussed earlier, it is clear that the gadgets of $\chi$ gate families form exact 2-designs because the second moment operator $\T_{2,\mathcal{G}}^{\chi}$ matches that of 2-qubit Haar projector $T_{2}^{\U(4)}$. However, when $n \geq 3$, Theorem \ref{thm:iSWAP1} and Eq.~\eqref{eq:iSWAPEigenvalue} implies that $\Delta(\T_{2,\mathcal{G}}^{\iSWAP}) \geq \Delta(\T_{2,\mathcal{G}}^{\chi})$.
		
		\item Similarly, when $n \geq 3$,
		\begin{align}
			\Delta( \T_{2,\mathcal{G}}^{\B} ) \geq \Delta(\T_{2,\mathcal{G}}^{\SQSW}), \Delta(\T_{2,\mathcal{G}}^{\CNOT}), \Delta(\T_{2,\mathcal{G}}^{\chi}).
		\end{align}
		
		\item However, Condition \ref{eq:PSD} is not viable for comparing $\T_{2,\mathcal{G}}^{\iSWAP}$ and $\T_{2,\mathcal{G}}^{\QFT}$. We will deal with the issue in the next subsection.
	\end{enumerate}
\end{example}

\begin{remark}
	It is easy to verify that for each 2-local operator,
	\begin{align}\label{eq:iSWAP-U(4)}
		& (I - T_2^{\chi}) \geq \frac{3}{4} ( I - T_2^{\iSWAP} ) \Leftrightarrow  (I - \T_{2,(i,j)}^{\chi} ) \geq \frac{3}{4}( I - \T_{2,(i,j)}^{\iSWAP} ) \\
		\implies & \Delta(\T_{2,\mathcal{G}}^{\iSWAP}) \geq \Delta(\T_{2,\mathcal{G}}^{\chi}) \geq \frac{3}{4} \Delta(\T_{2,\mathcal{G}}^{\iSWAP}). 
	\end{align}
	That is, despite the fact that $\T_{2,\mathcal{G}}^{\iSWAP}$ has a larger spectral than $\T_{2,\mathcal{G}}^{\chi}$, the scalings of these spectral gaps with respect to the system size $n$ are still the same. 
	
	Similar argument holds for general 2-local ensembles and larger $t$: given arbitrary $\mathcal{E}_1$ and $\mathcal{E}_2$ there should always be constants $c_1,c_2$, independent of $n$ but depends on $t$ and the specific ensembles, such that 
	\begin{align}\label{eq:Ensemble-PSD}
		c_1( I - \T_{t,(i,j)}^{\mathcal{E}_2} ) \leq (I - \T_{t,(i,j)}^{\mathcal{E}_1} ) \leq c_2 ( I - \T_{t,(i,j)}^{\mathcal{E}_2} ).
	\end{align}
	Consequently, on any given graph $\mathcal{G}$, the scaling of $\Delta(\T_{t,\mathcal{G}}^{\mathcal{E}})$ is only determined by the graph. Especially, it is proved in Refs.~\cite{Harrow2design2009,Diniz_2011,BHH16,mittal2023local} that, for any fixed $t$, the spectral gap $\Delta(\T_{t,\mathcal{G}}^{\U(4)}) = \Theta({1}/{n})$ on 1D chains or complete graphs. As a result, this scaling also works for other ensembles, including all gadgets considered here, as long as they are able to form $t$-designs.
\end{remark}

At the end of this subsection, we illustrate in Fig.~\ref{fig:Weyl2} that region of canonical KAK coefficients for which the gadget built from $\iSWAP$ always achieves the optimal convergence speed with respect to \emph{any} connected graph over $n$ qubits when $n \geq 3$. We also conjecture this fact holds for arbitrary 2-local gates with the formal statement being presented in Section \ref{sec:conjectures}.

\begin{figure}[h]
         \subfigure[]{%
        \includegraphics[width=0.42\textwidth]{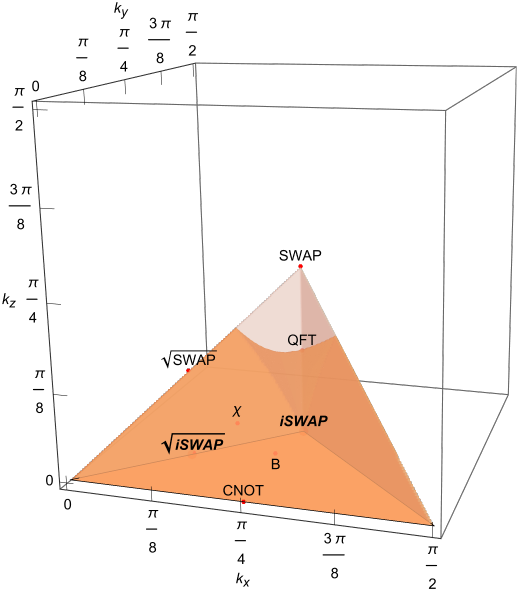} 
    }\hfill
    \subfigure[]{%
        \includegraphics[width=0.45\textwidth]{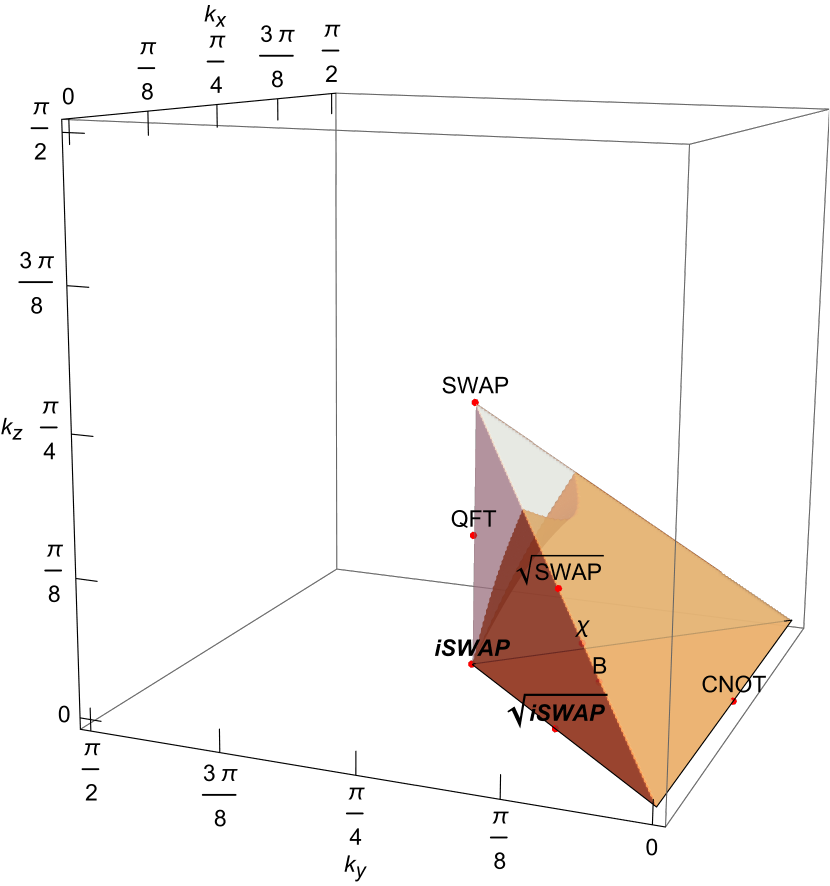}
        }
	\caption{The light yellow tetrahedron represents the Weyl chamber. The orange region  inside, given by \eqref{eq:Region}, encloses the canonical KAK coefficients for which the corresponding gadgets are rigorously comparable to the $\iSWAP$ gadgets in terms of the spectral gaps of their 2-local second moment operators. When $n \geq 3$ and for any connected graph $\mathcal{G}$, the $\iSWAP$ gadget admits the largest spectral gap among all gates within the region.}
 \label{fig:Weyl2}
\end{figure}


\subsubsection{Comparison via Dirichlet forms}\label{sec:Markov} 

The theory of Markov chain is widely useful in the study of designs in the literature (see e.g., Refs.~\cite{Oliveira2design2007a,Oliveira2design2007b,Harrow2design2009,Diniz_2011,Brown_2015,harrow2023approximate}). Roughly speaking, one can translate the moment operator into a transition matrix under a well-prepared basis operator and then apply techniques from Markov chains to bound the spectral gap of that transition matrix. In our case, we notice that a simple rescaling of the basis elements from Eq.~\eqref{eq:2qBasisA} - \eqref{eq:2qBasisD} can transform $T_2^{\Gadget}$ into a row-stochastic transition matrix $P$ as follows:
\begin{align}
	\scalemath{0.8}{ 
	\begin{pmatrix}
		1 & 0 & 0 & 0 \\
		0 & 1 - c -\frac{3}{2}(1-a) & c & \frac{\sqrt{3}}{2}(1-a) \\
		0 & c & 1 - c -\frac{3}{2}(1-a) & \frac{\sqrt{3}}{2}(1-a) \\
		0 & \frac{\sqrt{3}}{2}(1-a) & \frac{\sqrt{3}}{2}(1-a) & a
	\end{pmatrix} \mapsto
	\begin{pmatrix}
		1 & 0 & 0 & 0 \\
		0 & 1 - c -\frac{3}{2}(1-a) & c & \frac{3}{2}(1-a) \\
		0 & c & 1 - c -\frac{3}{2}(1-a) & \frac{3}{2}(1-a) \\
		0 & \frac{1}{2}(1-a) & \frac{1}{2}(1-a) & a
	\end{pmatrix}. }
\end{align}
Despite the fact that the matrix $P$ on the RHS is not Hermitian, this two matrices are similar and thus share the same spectrum. The corresponding moment operators on $n$ qubits can also be transformed and represented by transition matrices. As a basic result of Markov chains, transition matrices having a large \emph{Dirichlet form}~\cite{Levin2009} would possess a large spectral gap.

It turns out that for different gadgets, the corresponding transition matrices always share the same stationary distribution, which is $(\frac{1}{2},\frac{1}{10},\frac{1}{10},\frac{3}{10})$ here. It is well known that in this case, the comparisons via Dirichlet form and positive semidefiniteness are essentially equivalent. In other words, this method does not provide new information in comparing the convergence speeds of gadgets. Nevertheless, viewing these operators as transition matrices still is useful in estimating the influence of perturbations of KAK coefficients on the spectral gaps, which will be elucidated in Section \ref{sec:Perturbation}. 


\subsubsection{Comparison using representation theory}\label{sec:SU(2)}

As discussed above, the positive semidefiniteness condition turns out to be too strong to yield a complete order of all gates. As illustrated in Example \ref{example:Gadget2}, $T_2^{\QFT} - T_2^{\iSWAP}$ is neither positive nor negative semidefinite, indicating that the above methods cannot determine the order of these two gates. Remarkably, when defined on the complete graph $K_n$, using representation theory techniques, we are able to prove that the second moment operator $\T_{2,K_n}^{\iSWAP}$ always achieves the largest possible spectral gap among any generic ensemble $\mathcal{E}$ consisting of 2-local unitaries given that $\T_{2,K_n}^\mathcal{E}$ is Hermitian.

Recall in Example \ref{example:LocalOperator}, we define basis elements $u_0,u_1$ in Eq.~\eqref{eq:1qBasis2} for the 1-qubit Haar projector $T_2^{\U(2)}$ and then write down the matrix representations of $T_2^{\Gadget}$. Let $\mathcal{S} \subset \operatorname{End}( (\mathbb{C}^2)^{\otimes 2} )$ denote the orthogonal complement of $\mathrm{span}\{ u_0, u_1 \}$. Omitting the order of tensor products, we have
\begin{align}\label{eq:decomposition}
	\begin{aligned}
		\mathrm{span}\{ u_0, u_1 \}^{\otimes 2} \otimes \Big( \operatorname{End}( (\mathbb{C}^2)^{\otimes 2} ) \Big)^{\otimes n-2} & = \mathrm{span}\{ u_0, u_1 \}^{\otimes 2} \otimes \Big(  \mathrm{span}\{ u_0, u_1 \} \oplus \mathcal{S} \Big)^{\otimes n-2} \\
		& = \mathrm{span}\{ u_0, u_1 \}^{\otimes n} \oplus ( \mathrm{span}\{ u_0, u_1 \}^{\otimes n-1} \otimes \mathcal{S} ) \oplus \cdots
	\end{aligned}
\end{align}
In the following context, we first analyze the spectral gap of $\T_{2,\mathcal{G}}^{\Gadget} \vert_{\mathrm{span}\{ u_0, u_1 \}^{\otimes n} }$ and then we verify, for any connected graph $\mathcal{G}$,
\begin{align}
	\lambda_2( \T_{2,\mathcal{G}}^{\Gadget} \vert_{\mathrm{span}\{ u_0, u_1 \}^{\otimes n} } ) = \lambda_2( \T_{2,\mathcal{G}}^{\Gadget} ).
\end{align}
Then by Lemma \ref{lemma:Eigenvalues1}, for large $n$, there is no need to consider the orthogonal complement of $\mathrm{span}\{ u_0, u_1 \}^{\otimes n \perp}$ containing subspaces like $\mathrm{span}\{ u_0, u_1 \}^{\otimes n-1} \otimes \mathcal{S}$ from above.

It is natural to note that the standard tensor product basis elements of $\mathcal{V}^{\otimes n} \vcentcolon = \mathrm{span}\{ u_0, u_1 \}^{\otimes n}$ can be represented by binary strings. This simple but insightful observation facilitates our following utilization of group symmetries and representations. Specifically, we can define Pauli matrices $I,X,Y,Z$ (do not get mixed with those acting on the ordinary qubits) acting on $\mathcal{V} = \mathrm{span}\{ u_0, u_1 \}$ and hence Pauli strings on $\mathcal{V}^{\otimes n}$. It is straightforward to consider the representation of the Lie algebra $\mathfrak{su}(2)$, and hence that of the group $\SU(2)$, by
\begin{align}\label{eq:SU(2)action}
	\rho(X) \vcentcolon = X \otimes I \otimes \cdots \otimes I + I \otimes X \otimes \cdots \otimes I + \cdots + I \otimes I \otimes \cdots \otimes X,
\end{align}
and $\rho(Y), \rho(Z)$ defined similarly. The representation on the entire space can be decomposed as follows:
\begin{align}\label{eq:SU(2)irrep}
	\mathcal{V}^{\otimes n} = \mathrm{span}\{ u_0,u_1 \}^{\otimes n} \cong \bigoplus_{r = 0}^{\lfloor n/2 \rfloor} W_{(n-r,r)} \otimes \mathrm{1}^{(n-r,r)},
\end{align}
where $W_{(n-r,r)}$ stands for \emph{the $\SU(2)$ irrep of total spin $j = \frac{1}{2}(n-2r)$} and $\mathrm{1}^{(n-r,r)}$ is the multiplicity space. At the same time, the action of any $\sigma \in S_n$ permuting indices of tensors 
\begin{align}
	u_{i_1} \otimes \cdots \otimes u_{i_n} \equiv \ket{u_{i_1},...,u_{i_n}} \to
	u_{i_{\sigma^{-1}(1)}} \otimes \cdots \otimes 	u_{i_{\sigma^{-1}(n)}} \equiv \ket{u_{i_{\sigma^{-1}(1)}},...,u_{i_{\sigma^{-1}(n)}}}  
\end{align}
is also well-defined and useful later. Symmetrizations of tensors under permutations of indices exactly span the irrep $W_{(n)}$ of the highest total spin. Actually, we have the following lemma:

\begin{lemma}\label{lemma:stationary}
	Let us consider the standard orthonormal basis elements 
	\begin{align}\label{eq:j-m-basis}
		e_r = \ket{j = \frac{n}{2}, m = \frac{n-r}{2}} = \frac{1}{\sqrt{\binom{n}{r}}} \sum_{n-r \text{ many } i_k = 0} \ket{u_{i_1},...,u_{i_k},...,u_{i_n}};  \quad  r = 0, \cdots, n
	\end{align}
	from the highest total spin $j = \frac{n}{2}$ irrep of $\SU(2)$ arisen from the $n$-fold tensor product. Two orthonormal unit eigenvectors of $\T_{2,\mathcal{G}}^{\Gadget}$, defined on the entire operator space with respect to arbitrary connected graph, can be given as
	\begin{align}
		\epsilon_0 & = e_0 = \ket{\frac{n}{2}, \frac{n}{2}} = \ket{u_0,...,u_0,...,u_0}, \\
		\epsilon_1 & = \frac{1}{\sqrt{\frac{1}{3}(4^n- 1)}} \sum_{r=1}^n 3^{\frac{r-1}{2}} \sqrt{\binom{n}{r}} e_r = 	\frac{1}{\sqrt{\frac{1}{3}(4^n - 1)}} \sum_{r=1}^n \sum_{n-r \text{ many } i_k = 0} 3^{\frac{r-1}{2}} \ket{u_{i_1},...,u_{i_k},...,u_{i_n}}.
	\end{align}
\end{lemma}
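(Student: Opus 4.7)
The plan is to recognise the claimed vectors as the Gram--Schmidt orthonormalisation of the two canonical unit eigenvectors of $\T_{2,\mathcal{G}}^{\Gadget}$ on $\operatorname{End}(\mathcal{H}^{\otimes 2})$, namely the identity $I_{\mathcal{H}^{\otimes 2}}$ and the swap $\SWAP_{\mathcal{H}^{\otimes 2}}$. By Schur--Weyl duality these span $\operatorname{Comm}_2(\U(2^n))$, and both satisfy $V^{\otimes 2} M V^{\dagger \otimes 2} = M$ for \emph{every} $V \in \U(2^n)$ (the case $M = \SWAP$ uses that $\SWAP$ commutes with $V \otimes V$). Averaging this identity over the gadget ensemble applied on any edge $(i,j)$ shows that $I$ and $\SWAP$ are unit eigenvectors of each $\T_{2,(i,j)}^{\Gadget}$, hence of the convex combination $\T_{2,\mathcal{G}}^{\Gadget}$ for an arbitrary connected graph.

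The next step is to write $I$ and $\SWAP$ in the site-factorised basis $\{u_{i_1} \otimes \cdots \otimes u_{i_n}\}$ used throughout the paper. From $u_{0,k} = I_{4,k}/2$ and $S_k = \sqrt{3}\,u_{1,k} + u_{0,k}$ at each qubit, the identity on $\mathcal{H}^{\otimes 2}$ factorises as $I_{\mathcal{H}^{\otimes 2}} = 2^n\,\bigotimes_{k=1}^n u_{0,k} = 2^n e_0$, while $\SWAP_{\mathcal{H}^{\otimes 2}} = \bigotimes_{k=1}^n S_k$. Expanding the tensor product and grouping terms by the number $r$ of factors of $u_1$ yields
\begin{align*}
\SWAP_{\mathcal{H}^{\otimes 2}} \;=\; \sum_{S \subseteq [n]} 3^{|S|/2} \bigotimes_{k \in S} u_{1,k} \bigotimes_{k \notin S} u_{0,k} \;=\; \sum_{r=0}^{n} 3^{r/2} \sqrt{\tbinom{n}{r}}\; e_r,
\end{align*}
where the last equality uses the definition \eqref{eq:j-m-basis} of $e_r$ as the normalised symmetric sum over weight-$r$ strings.

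Finally I perform Gram--Schmidt using the Hilbert--Schmidt inner product. Straightforward traces give $\langle I, I\rangle = 4^n$, $\langle \SWAP, \SWAP\rangle = 4^n$, and $\langle I, \SWAP\rangle = \tr(\SWAP) = 2^n$. Hence $\epsilon_0 = I/2^n = e_0$, and $\SWAP - \langle e_0, \SWAP\rangle e_0 = \SWAP - e_0$, in which the $r=0$ summand of the expansion exactly cancels, leaving $\sum_{r=1}^n 3^{r/2}\sqrt{\binom{n}{r}}\, e_r$ with squared norm $\sum_{r=1}^n 3^r \binom{n}{r} = 4^n - 1$. Normalising produces
\begin{align*}
\epsilon_1 \;=\; \frac{1}{\sqrt{4^n - 1}} \sum_{r=1}^n 3^{r/2} \sqrt{\tbinom{n}{r}}\; e_r \;=\; \frac{1}{\sqrt{(4^n-1)/3}} \sum_{r=1}^n 3^{(r-1)/2} \sqrt{\tbinom{n}{r}}\; e_r,
\end{align*}
matching the stated formula, and $\langle \epsilon_0, \epsilon_1\rangle = 0$ follows from $\langle e_0, e_r\rangle = \delta_{0r}$ inside the highest-spin irrep.

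The only real obstacle is bookkeeping: keeping the normalisations of $u_0, u_1$ and of the symmetric states $e_r$ mutually consistent, and correctly tracking the factor $\sqrt{3}$ from $S = \sqrt{3}\,u_1 + u_0$ through the $n$-fold tensor product and the subsequent projection step. No representation-theoretic input beyond the Schur--Weyl identification of $\operatorname{Comm}_2(\U(2^n))$ as two-dimensional is needed.
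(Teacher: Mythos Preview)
Your proof is correct and is essentially a more explicit, constructive version of the paper's argument. The paper's proof simply says to verify, using the definition $u_0 = \tfrac{1}{2}I,\ u_1 = \tfrac{1}{\sqrt{3}}(S - \tfrac{1}{2}I)$, that the stated $\epsilon_0,\epsilon_1$ are common unit eigenvectors of every local $\T_{2,(i,j)}^{\Gadget}$; you instead \emph{derive} the formulas by recognising the two-dimensional commutant $\operatorname{Comm}_2(\U(2^n)) = \operatorname{span}\{I,\SWAP\}$, expressing both operators site-by-site via $S_k = \sqrt{3}\,u_{1,k} + u_{0,k}$, and then Gram--Schmidt orthonormalising. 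The two approaches rest on the same ingredients, but yours has the advantage of explaining where the otherwise mysterious coefficients $3^{(r-1)/2}$ and the normalisation $\tfrac{1}{3}(4^n-1)$ come from, rather than presenting them for verification.
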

\begin{proof}
	When $n = 2$, two unit eigenvectors of Eq.~\eqref{eq:Gadget4x4} or, identically, Eq.~\eqref{eq:Gadget4x4-2} in the next lemma are simply
	\begin{align}
		\ket{u_0,u_0}, \quad \frac{1}{\sqrt{5}} \Big( \ket{u_0,u_1} + \ket{u_1,u_0} + \sqrt{3} \ket{u_1,u_1} \Big).
	\end{align}
	The case for arbitrary $n$ is still straightforward by the definition of $u_0,u_1$ in Eq.~\eqref{eq:1qBasis2} and checking that $\epsilon_0, \epsilon_1$ are common unit eigenvectors of $\T_{2,(i,j)}^{\Gadget}$ for any $i \neq j$.
\end{proof}

\begin{lemma}\label{lemma:SU(2)}
	For any gadget, the 2-local second moment operator $\T_{2,(i,j)}^{\Gadget}$ restricted to the two-fold tensor product subspace $\mathrm{span}\{ u_0, u_1 \}^{\otimes 2}$ at sites $i$ and $j$ can be expanded as
	\begin{align}\label{eq:GadgetPauli1}
		\begin{aligned}
			\T_{2,(i,j)}^{\Gadget} \vert_{\mathrm{span}\{ u_0,u_1 \}^{\otimes 2}} = & \Big( (a-c) II + \frac{1}{4} (1-a) (IZ + ZI) + \frac{1}{2} (1-a) ZZ \Big) \\
			& + \Big( \frac{\sqrt{3}}{4} (1-a) (IX + XI) - \frac{\sqrt{3}}{4} (1-a) ( XZ +  ZX) \Big) \\
			& + \frac{1}{2} c (II + XX + YY + ZZ ).
		\end{aligned}
	\end{align}
	As before, we do not explicitly label the sites where Pauli matrices act for conciseness. The local operator is thus $S_2$-symmetric on $\mathrm{span}\{ u_0, u_1 \}^{\otimes 2}$. 
	Let $(i,j)$ is the SWAP operator and let
	\begin{align}
		A = & \Big[ \frac{2}{n(n-1)} \Big( \frac{\sqrt{3}}{4} \Big( (n-1) \rho(X) - \frac{1}{2}(\rho(X) \rho(Z) + \rho(Z) \rho(X) ) \Big) \notag \\
		& \hspace{10mm} + \frac{1}{4} \Big( (n-1) \rho(Z) + \rho(Z) \rho(Z) \Big) \Big) 
		- \Big( 1 + \frac{1}{2(n-1)} \Big) I^{\otimes n} \Big], \\
		B = & \Big[ \frac{2}{n(n-1)} \Big( \sum_{i < j} (i,j) \Big) - I^{\otimes n} \Big].
	\end{align}
    Summing over the complete graph, we have
	\begin{align}\label{eq:GadgetPauli2}
		\T_{2,K_n}^{\Gadget} \vert_{\mathrm{span}\{ u_0,u_1 \}^{\otimes n}} 
		= & (1-a) A + c B + I^{\otimes n},
	\end{align}
  
	which admits the $S_n$-symmetric and can be expressed by $\SU(2)$ representations. 
\end{lemma}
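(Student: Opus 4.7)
The plan is to establish the two display equations by direct computation, using the explicit $4\times 4$ matrix from Example~\ref{example:LocalOperator} and then carefully bookkeeping the sum over pairs in $K_n$. The $\SU(2)$-representation interpretation then follows essentially for free from the form of the result.

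\textbf{Step 1 (Pauli decomposition on a single pair).} Using the identification of $\{u_0,u_1\}$ with the computational basis of an auxiliary qubit, expand the matrix $T_2^{\Gadget}$ of Eq.~\eqref{eq:Gadget4x4} in the Pauli basis $\{P\otimes Q:P,Q\in\{I,X,Y,Z\}\}$, substituting $b=\tfrac12(1-a)$. Matching the $16$ matrix entries to the Pauli coefficients is purely mechanical; the only nonzero coefficients are those of $II,IZ,ZI,ZZ,IX,XI,XZ,ZX,XX,YY$, giving exactly the three groupings in Eq.~\eqref{eq:GadgetPauli1}. The $S_2$-symmetry claim is then visible by inspection, since each individual operator appearing ($II$, $IZ+ZI$, $ZZ$, $IX+XI$, $XZ+ZX$, $XX$, $YY$) is invariant under the swap of the two factors.

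\textbf{Step 2 (Summation over $K_n$).} Apply the local expression at sites $(i,j)$ with identity elsewhere and sum over $i<j$, then divide by $|E(K_n)|=\binom{n}{2}$. The following identities on $\mathcal{V}^{\otimes n}$, all immediate from expanding $\rho(P)=\sum_i P_i$, do all the work:
\begin{align*}
\sum_{i<j}(P_i+P_j) &= (n-1)\rho(P),\\
\sum_{i<j}P_iP_j &= \tfrac12\bigl(\rho(P)^2 - n\,I^{\otimes n}\bigr) \quad\text{for }P\in\{X,Y,Z\},\\
\sum_{i<j}(X_iZ_j+Z_iX_j) &= \tfrac12\bigl(\rho(X)\rho(Z)+\rho(Z)\rho(X)\bigr),
\end{align*}
where the last one uses the on-site anticommutation $\{X,Z\}=0$. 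Combined with the well-known $\tfrac12(II+XX+YY+ZZ)=\SWAP$, the $c$-proportional piece contributes precisely $\tfrac{2c}{n(n-1)}\sum_{i<j}(i,j)$, which together with the bookkeeping constants reassembles as $cB+c\,I^{\otimes n}$. Collecting the $(1-a)$-proportional contributions yields the expression defining $A$, together with a residual $(1-a)(1+\tfrac{1}{2(n-1)})I^{\otimes n}$ which is already absorbed into the $-(1+\tfrac{1}{2(n-1)})I^{\otimes n}$ part of $A$. What remains of the identity coefficient, after adding the $(a-c)I^{\otimes n}$ contribution from the local $T_1$-block and the $(1-c)I^{\otimes n}$ from $cB+I^{\otimes n}$, is a clean $I^{\otimes n}$, producing Eq.~\eqref{eq:GadgetPauli2}.

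\textbf{Step 3 ($\SU(2)$-representation viewpoint).} By construction, $A$ is built from $\rho(X),\rho(Z),\rho(X)\rho(Z),\rho(Z)\rho(X),\rho(Z)^2$ and the identity, hence lies in the image of the universal enveloping algebra $U(\mathfrak{su}(2))$ under $\rho$ from Eq.~\eqref{eq:SU(2)action}; consequently it preserves every isotypic component $W_{(n-r,r)}\otimes\mathbbm{1}^{(n-r,r)}$ of the decomposition \eqref{eq:SU(2)irrep} and acts trivially on the multiplicity factor. The operator $B$ is a sum of transpositions and hence lies in the group algebra of $S_n$; by Schur--Weyl duality on $(\mathbb{C}^2)^{\otimes n}$ it acts as a scalar on each $W_{(n-r,r)}$ (depending only on $r$). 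Together with the scalar $I^{\otimes n}$, this exhibits $\T_{2,K_n}^{\Gadget}\big|_{\mathcal{V}^{\otimes n}}$ as an $S_n$-symmetric operator expressible through the $\SU(2)$ representation $\rho$.

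\textbf{Anticipated difficulty.} The conceptually routine but error-prone step is Step 2: one has to juggle several global normalization constants (the $\binom{n}{2}$ denominator, the $(n-1)$ factors from pair-sums of single-site terms, and the $-nI^{\otimes n}$ subtractions from pair-sums of $P_iP_j$) so that the residual scalar matches the $-(1+\tfrac{1}{2(n-1)})I^{\otimes n}$ inside $A$ exactly. A careful, term-by-term ledger separating the $(a-c)$, $(1-a)$, and $c$ coefficients is the only way to avoid arithmetic slips.
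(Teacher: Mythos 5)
Your proof is correct and follows essentially the same route as the paper's: expand the $4\times4$ matrix representation of $T_2^{\Gadget}$ in the Pauli basis (after substituting $b=\tfrac12(1-a)$), note the $S_2$-symmetry by inspection, and then sum the 2-local terms over $K_n$ using $\tfrac12(II+XX+YY+ZZ)=\SWAP$ to extract the $cB$ piece and the pair-sum identities to build up $A$. The paper abbreviates Step 2 to "can be derived immediately," whereas you spell out the identities $\sum_{i<j}(P_i+P_j)=(n-1)\rho(P)$, $\sum_{i<j}P_iP_j=\tfrac12(\rho(P)^2-nI)$, and $\sum_{i<j}(X_iZ_j+Z_iX_j)=\tfrac12(\rho(X)\rho(Z)+\rho(Z)\rho(X))$; the bookkeeping of the identity coefficient in your Step 2 is narrated a little awkwardly (the "residual" sentence mixes source-side and target-side accounting) but the arithmetic it describes does close to $I^{\otimes n}$, and the rest is sound.
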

\begin{proof}
	Neglecting the subscripts of sites, by Eq.~\eqref{eq:Gadget4x4}, we have
	\begin{align}\label{eq:Gadget4x4-2}
		& \T_{2,(i,j)}^{\Gadget} \vert_{\mathrm{span}\{ u_0, u_1 \}^{\otimes 2}} = H_{\Gadget} = \begin{pmatrix}
			1 & 0 & 0 & 0 \\
			0 & 1 - c -\frac{3}{2}(1-a) & c & \frac{\sqrt{3}}{2}(1-a) \\
			0 & c & 1 - c -\frac{3}{2}(1-a) & \frac{\sqrt{3}}{2}(1-a) \\
			0 & \frac{\sqrt{3}}{2}(1-a) & \frac{\sqrt{3}}{2}(1-a) & a
		\end{pmatrix}  \\
		= & \begin{pmatrix} 
			1 & 0 & 0 & 0 \\
			0 & 0 & 0 & 0 \\
			0 & 0 & 0 & 0 \\
			0 & 0 & 0 & 0
		\end{pmatrix}
		+ \Big(  \frac{3}{2} a - c - \frac{1}{2} \Big) \begin{pmatrix} 
			0 & 0 & 0 & 0 \\
			0 & 1 & 0 & 0 \\
			0 & 0 & 1 & 0 \\
			0 & 0 & 0 & 0
		\end{pmatrix}
		+ a \begin{pmatrix} 
			0 & 0 & 0 & 0 \\
			0 & 0 & 0 & 0 \\
			0 & 0 & 0 & 0 \\
			0 & 0 & 0 & 1
		\end{pmatrix}
		+ \frac{\sqrt{3}}{2}(1-a) \begin{pmatrix} 
			0 & 0 & 0 & 0 \\
			0 & 0 & 0 & 1 \\
			0 & 0 & 0 & 1 \\
			0 & 1 & 1 & 0
		\end{pmatrix} \notag
	\end{align}
	Expanding with respect to the 2-fold tensor products of Pauli matrices under the Hilbert-Schmidt inner product, we get Eq.~\eqref{eq:GadgetPauli1}. Note that $\frac{1}{2} (II + XX + YY + ZZ )$ is just the SWAP and Eq.~\eqref{eq:GadgetPauli2} can be derived immediately.
\end{proof}

The matrix $A$ from Eq.~\eqref{eq:GadgetPauli2} is negative semidefinite having zero as its largest eigenvalue. This can be seen by analytically solving eigenvalues of the sum of matrices from the first two lines inside Eq.~\eqref{eq:GadgetPauli1}. On the other hand, as a basic result from Schur--Weyl duality and $S_n$ representation theory~\cite{Tolli2009,Sagan01}, $\frac{2}{n(n-1)} \sum_{i < j} (i,j)$ is decomposed into scalar matrices with respect to the direct sum in Eq.~\eqref{eq:SU(2)irrep}. The scalars can also be calculated using $S_n$ characters~\cite{Ingram1950,Roichman1996}. For instance, within the irrep $W_{(n)}$ of the highest total spin $\frac{n}{2}$ and $W_{(n-1,1)}$ of the second highest total spin $\frac{n}{2}-1$, these scalars are
\begin{align}
	& C_{(n)} = \frac{1}{\dim W_{(n)}} \frac{2}{n(n-1)} \tr \Big( \sum_{i < j} (i,j) \vert_{(n)} \Big) = 1, \label{eq:SnCharacter1} \\ 
	& C_{(n-1,1)} = \frac{1}{\dim W_{(n-1,1)}} \frac{2}{n(n-1)} \tr \Big( \sum_{i < j} (i,j) \vert_{(n-1,1)} \Big) = \frac{n-3}{n-1}. \label{eq:SnCharacter2}
\end{align}
As a basic property of these characters~\cite{Sagan01}, $C_{(n-r,r)} > C_{(n-r',r')}$ when $r < r'$. Consequently, the largest eigenvalue, which is one, of
\begin{align}
	cB + I^{\otimes n} = \frac{2}{n(n-1)} \Big( c \sum_{i < j} (i,j) \Big) - cI^{\otimes n} + I^{\otimes n}
\end{align}
is acquired within the irrep $W_{(n)}$ (it is multiplicity-free). It decreases to $1 - c\frac{2}{n-1}$ within the irrep $W_{(n-1,1)} \otimes \mathrm{1}_{(n-1,1)}$. It keeps decreasing as $r$ increases.

By Eq.~\eqref{eq:GadgetPauli2}, there are certain $\lambda(A) < 0, \lambda(B) \leq 0$ for which
\begin{align}
	\lambda_2(\T_{2,K_n}^{\Gadget} \vert_{\mathrm{span}\{ u_0,u_1 \}^{\otimes n}} ) = (1 - a)\lambda(A) + c \lambda(B) + 1.
\end{align}
We are going to prove that $\lambda_2(\T_{2,K_n}^{\iSWAP} \vert_{\mathrm{span}\{ u_0,u_1 \}^{\otimes n}} )$ with the gadget given by $\iSWAP$ do appear when we search within the irrep $W_{(n)}$ of the highest total spin. Therefore, the parameter $c$ defined in Eq.~\eqref{eq:abc} by KAK coefficients becomes insignificant in evaluating that spectral gap.

\begin{lemma}\label{lemma:iSWAP1}
	The second largest eigenvalue of $\T_{2,K_n}^{\iSWAP} \vert_{\mathrm{span}\{ u_0,u_1 \}^{\otimes n}}$ can always be found within the $\SU(2)$ irrep $W_{(n)}$ of the highest total spin.
\end{lemma}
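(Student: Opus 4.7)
The plan is to use Schur--Weyl duality to decompose $\mathrm{span}\{u_0,u_1\}^{\otimes n} = \bigoplus_{r=0}^{\lfloor n/2 \rfloor} W_{(n-r,r)} \otimes \mathrm{1}^{(n-r,r)}$ and separately analyze the spectrum of $(1-a)A + cB + I$ (from Lemma~\ref{lemma:SU(2)}) on each block, showing that the top block $W_{(n)}$ attains the second-largest eigenvalue.

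First I would verify that both $A$ and $B$ preserve this decomposition. Since $A$ is a polynomial in the $\SU(2)$-generators $\rho(X), \rho(Z)$ of Eq.~\eqref{eq:SU(2)action}, Schur--Weyl duality yields $A = A_{(n-r,r)} \otimes I_{\mathrm{1}^{(n-r,r)}}$ on the $r$-th isotypic block, for some Hermitian $A_{(n-r,r)}$ on $W_{(n-r,r)}$. Since $\sum_{i<j}(i,j)$ is a central element of $\C[S_n]$, Schur's lemma together with Eqs.~\eqref{eq:SnCharacter1}--\eqref{eq:SnCharacter2} implies that $B$ acts as the scalar $C_{(n-r,r)} - 1$ on that block, with $C_{(n)} = 1$ so that $B$ vanishes on $W_{(n)}$, while for $r \geq 1$ the stated monotonicity of these characters gives $C_{(n-r,r)} \leq C_{(n-1,1)} = (n-3)/(n-1) < 1$.

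Thus on the $r$-th block the eigenvalues of $\T_{2,K_n}^{\iSWAP}\vert_{\mathrm{span}\{u_0,u_1\}^{\otimes n}}$ take the form $(1-a)\lambda + c(C_{(n-r,r)} - 1) + 1$ for $\lambda \in \mathrm{spec}(A_{(n-r,r)})$. Using the negative semidefiniteness of $A$, for $r \geq 1$ the maximum over the block is bounded by $1 - 2c/(n-1)$ (tightest at $r=1$, saturated only if $A_{(n-1,1)}$ has $0$ in its spectrum). On $W_{(n)}$ by contrast, $B$ vanishes and $\epsilon_0, \epsilon_1 \in \ker A_{(n)}$ by Lemma~\ref{lemma:stationary}, producing the two unit eigenvalues; the second largest eigenvalue in $W_{(n)}$ therefore equals $1 + (1-a)\mu$, where $\mu$ is the largest eigenvalue of $A_{(n)}$ restricted to $\{\epsilon_0,\epsilon_1\}^\perp \cap W_{(n)}$. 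Substituting the iSWAP values $1-a = 4/9$ and $c = 1/3$, the claim reduces to proving $\mu \geq -3/(2(n-1))$.

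To establish this bound I would write $A_{(n)}$ as a real symmetric tridiagonal matrix in the standard total-spin basis $\{\ket{j=n/2, m}\}_{m=-n/2}^{n/2}$, using $\rho(Z) = 2J_z$, $\rho(X) = 2J_x$, and the standard ladder formulas to compute the entries in closed form from the expression for $A$ in Lemma~\ref{lemma:SU(2)}. I would then exhibit an explicit test vector $v \in W_{(n)}$ orthogonal to $\epsilon_0, \epsilon_1$ (a natural candidate is an analogue of $\epsilon_1$ built at the next weighting level, obtained by Gram--Schmidt orthogonalization against $\epsilon_0, \epsilon_1$) and verify $\langle v, A_{(n)} v\rangle/\|v\|^2 \geq -3/(2(n-1))$ by direct algebraic estimation. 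The main obstacle I anticipate is precisely this quantitative step: the tridiagonal entries scale as $\Theta(1/n)$, so the estimate is delicate, and the iSWAP-specific values of $(a,c)$ are exactly what makes the inequality close — mere negative semidefiniteness of $A$ is not enough, which is consistent with the analogous optimality statement for generic gadgets being only conjectural (Section~\ref{sec:conjectures}).
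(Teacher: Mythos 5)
Your decomposition into $\SU(2)$-isotypic blocks, the observation that $B$ acts as the scalar $C_{(n-r,r)}-1$ (vanishing on $W_{(n)}$), and the identification of the two unit eigenvectors $\epsilon_0,\epsilon_1$ inside $W_{(n)}$ are all correct and match the paper's setup. However, the reduction to ``$\mu\geq -3/(2(n-1))$'' is a genuine gap, because that inequality is false for large $n$, and the step where you derived it is exactly where the argument loses too much.

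Concretely, your bound on the $r\geq 1$ blocks uses only $\lambda_{\max}\!\left(A_{(n-r,r)}\right)\leq 0$, giving the ceiling $1 - 2c/(n-1)$; for iSWAP this is $1-\tfrac{2}{3(n-1)}\approx 1-\tfrac{6}{9n}$. But on $W_{(n)}$, using the paper's explicit tridiagonal entries $(\T_{2,K_n}^{\Gadget}|_{W_{(n)}})^p_p$, one finds $(A_{(n)})^1_1 = -3/n$ exactly, and the nearest-neighbour couplings are $O(n^{-3/2})$, so $\mu = -3/n + O(n^{-2})$. Hence $1+(1-a)\mu \approx 1-\tfrac{4}{3n} = 1-\tfrac{12}{9n}$, which lies strictly below your ceiling $1-\tfrac{6}{9n}$: the comparison fails, and no choice of test vector can rescue it, since a Rayleigh quotient only certifies lower bounds on $\mu$ and the true $\mu$ already sits around $-3/n < -3/(2(n-1))$. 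The missing ingredient is a strictly negative, $\Theta(1/n)$ upper bound on $\lambda_{\max}\!\left(A_{(n-r,r)}\right)$ for $r\geq 1$; negative semidefiniteness alone is not sharp enough. The paper supplies this via the recursion in Eq.~\eqref{eq:UpperBoundA}, which expresses $A|_{W_{(n-r,r)}}$ in terms of $A$ on an $(n-2)$-qubit system plus a correction $\tfrac{1}{n(n-1)}\bigl(\sqrt{3}\rho(X)+\rho(Z)\bigr) - \tfrac{4n-5}{n(n-1)}I$; bounding the spectrum of $\sqrt{3}\rho(X)+\rho(Z)$ by $\pm 2(n-2)$ yields the extra $-\tfrac{8n-4}{9n(n-1)}\approx -\tfrac{8}{9n}$ in Eq.~\eqref{eq:Bound2}, which pushes the $W_{(n-1,1)}$ ceiling down to $\approx 1-\tfrac{14}{9n}$, safely below $1-\tfrac{12}{9n}$. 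You would need an analogous quantitative estimate on the lower-spin blocks to close the argument.
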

\begin{proof}
	Since it is generally infeasible to solve the second largest eigenvalues as well as the corresponding eigenvector analytically, we turn to find a proper lower bound on the second largest eigenvalue of $\T_{2,K_n}^{\iSWAP} \vert_{W_{(n)}}$ and compare it with eigenvalues of $\T_{2,K_n}^{\iSWAP} \vert_{W_{(n-1,1)}}$ as well as those from irreps with even lower total spin.
	
	We first lower bound the second largest eigenvalue of $\T_{2,K_n}^{\iSWAP} \vert_{W_{(n)}}$. As a simple application of Gram--Schmidt process, we define a unit vector $v$ orthogonal to both $\epsilon_0, \epsilon_1$ (see Lemma \ref{lemma:stationary}) as
	\begin{align}
		v = \frac{1}{1 - \langle e_1, \epsilon_1 \rangle^2} ( e_1 - \langle e_1, \epsilon_1 \rangle \epsilon_1) = \frac{1}{ 1 - \frac{n}{\frac{1}{3}(4^n- 1)} } (e_1 - \sqrt{\frac{n}{\frac{1}{3}(4^n- 1)}} \epsilon_1 ).
	\end{align}
	Using Eq.~\eqref{eq:GadgetPauli2} and $\SU(2)$ representation theory with
	\begin{align}
		& \rho(J_+) = \begin{pmatrix}
			0      & \sqrt{2j} & 0      & \cdots & 0      \\
			0      & 0 & \sqrt{2(2j-1)} & \cdots & 0      \\
			\vdots & \vdots & \vdots & \ddots & \vdots    \\
			0      & 0      & 0      & \cdots & \sqrt{2j} \\
			0      & 0      & 0      & \cdots & 0         \\
		\end{pmatrix}, \\
		& \rho(J_-) = \begin{pmatrix}
			0         & \cdots & 0         & 0      & 0      \\
			\sqrt{2j} & \cdots & 0         & 0      & 0      \\
			\vdots    & \ddots & \vdots    & \vdots & \vdots \\
			0         & \cdots & \sqrt{2(2j-1)} & 0 & 0      \\
			0         & \cdots & \cdots & \sqrt{2j} & 0      \\
		\end{pmatrix}, \\
		& \rho(Z) = 2\begin{pmatrix}
			j      & 0      & \cdots & 0      & 0      \\
			0      & j-1    & \cdots & 0      & 0      \\
			\vdots & \vdots & \ddots & \vdots & \vdots \\
			0      & 0      & \cdots & -j+1   & 0      \\
			0      & 0      & \cdots & 0      & -j     \\
		\end{pmatrix}
	\end{align}
	and $\rho(X) = \rho(J_+) + \rho(J_-)$, we can explicitly write down nontrivial matrix entries of $\T_{2,K_n}^{\Gadget} \vert_{W_{(n)}}$ for $p=0,...,n$ and under the basis $\{e_p\}$ defined in Lemma \ref{lemma:stationary} 
	\begin{align}
		& ( \T_{2,K_n}^{\Gadget} \vert_{W_{(n)}} )\indices{^p_p} = \frac{1-a}{2n(n-1)} (2n-2p-1)(n-2p) - \Big( 1 + \frac{1}{2(n-1) }\Big)(1-a) + 1, \\
		& ( \T_{2,K_n}^{\Gadget} \vert_{W_{(n)}} ) \indices{^p_{p+1}} = \frac{\sqrt{3} (1-a)}{2n(n-1)} 2p \sqrt{(n-p)(p+1)}.
	\end{align}
	\comments{
	original formulas with $p$ starting from $1$
	\begin{align}
		& ( \T_{2,K_n}^{\Gadget} \vert_{W_{(n)}} ) \indices{^p_{p+1}} = \frac{\sqrt{3} (1-a)}{2n(n-1)} \sqrt{(n+1-p)p} (2p-2), \\
		& ( \T_{2,K_n}^{\Gadget} \vert_{W_{(n)}} )\indices{^p_p} = \frac{1-a}{2n(n-1)} (2n-2p+1)(n-2p+2) - \Big( 1 + \frac{1}{2(n-1) }\Big)(1-a) + 1.
	\end{align}
	}
	Evidently, the restricted matrix $\T_{2,K_n}^{\Gadget} \vert_{W_{(n)}}$ is independent of the parameter $c$. For the gadget using $\iSWAP$ with $a = \frac{5}{9}$ (see Table \ref{tab:Gadgets}),
	\begin{align}
		\T_{2,K_n}^{\iSWAP} \vert_{W_{(n)}} (e_1) = \Big( 1 - \frac{4}{3n} \Big) e_1 + \frac{4}{3}\sqrt{\frac{2}{3}} \frac{1}{n\sqrt{n-1}} e_2.
	\end{align}
	Since $\T_{2,K_n}^{\iSWAP} \vert_{W_{(n)}} (\epsilon_1) = \epsilon_1$ and since the operator is Hermitian,
	\begin{align}\label{eq:Bound1}
		\langle v, \T_{2,K_n}^{\iSWAP} \vert_{W_{(n)}} v \rangle & = \Big( 1 - \frac{n}{\frac{1}{3}(4^n- 1)} \Big)^{-2} \Big( \langle e_1,  \T_{2,K_n}^{\iSWAP} \vert_{W_{(n)}} e_1 \rangle -  \langle e_1, \epsilon_1 \rangle \langle \epsilon_1, \T_{2,K_n}^{\iSWAP} \vert_{W_{(n)}} e_1 \rangle \Big) \notag \\
		= & \Big( 1 - \frac{n}{\frac{1}{3}(4^n- 1)} \Big)^{-2}  \Big(  1 - \frac{4}{3n} - \frac{n}{\frac{1}{3}(4^n- 1)} \Big). 
	\end{align}
	Since $v$ is orthogonal to both $\epsilon_0, \epsilon_1$, the above value is a valid lower bound.
	
	To acquire a proper upper bound for the largest eigenvalue of $\T_{2,K_n}^{\iSWAP} \vert_{W_{(n-1,1)}}$, we recall the matrix $A$ defined in Eq.~\eqref{eq:GadgetPauli2} and consider an auxiliary system of $n' = n-2$ qubits. By $\SU(2)$ representation theory, $\dim W_{(n-r,r)} = n - 2r + 1 = \dim W_{(n'-r+1,r-1)}$ and hence $A \vert_{W_{(n-r,r)}}$ and $A \vert_{W_{(n'-r+1,r-1)}}$ are of the same size. Moreover,
	\begin{align}\label{eq:UpperBoundA}
		A \vert_{W_{(n-r,r)}} & = \frac{n'(n'-1)}{(n'+2)(n'+1)} A \vert_{W_{(n'-r+1,r-1)}} + \frac{1}{(n'+2)(n'+1)} \Big( \sqrt{3}\rho(X) + \rho(Z) \Big) - \frac{(4n'+3)}{(n'+2)(n'+1)} I \notag \\
		& = \frac{(n-2)(n-3)}{n(n-1)} A \vert_{W_{(n-r-1,r+1)}} + \frac{1}{n(n-1)} \Big( \sqrt{3}\rho(X) + \rho(Z) \Big) -  \frac{4n-5}{n(n-1)} I.
	\end{align}
	Particularly, we first study the case of $A \vert_{W_{(n-1,1)}}$ and $A \vert_{W_{(n')}} = A \vert_{W_{(n-2)}}$. By the definition of tensor product representation, when $n' = n-2$,
	\begin{align}
	\begin{aligned}
		& \sqrt{3}\rho(X) + \rho(Z) \\
		= & \begin{pmatrix} 1 & \sqrt{3} \\ \sqrt{3} & -1
		\end{pmatrix} \otimes I \otimes \cdots \otimes I + I \otimes \begin{pmatrix} 1 & \sqrt{3} \\ \sqrt{3} & -1
		\end{pmatrix} \otimes \cdots \otimes I + \cdots + I \otimes I \otimes \cdots \otimes \begin{pmatrix} 1 & \sqrt{3} \\ \sqrt{3} & -1
		\end{pmatrix}
	\end{aligned}
	\end{align}
	has eigenvalues $\pm 2(n-2), \pm 2(n-4), \cdots$. On the other hand, the largest eigenvalue of $A \vert_{W_{(n')}} $ is zero. By Lemma \ref{lemma:SU(2)}
	\begin{align}
		\T_{2,K_n}^{\iSWAP} \vert_{W_{(n-1,1)}} = [(1-a) A  + c B + I^{\otimes n} ] \vert_{W_{(n-1,1)}}
	\end{align}
	The eigenvalue of $B \vert_{W_{(n-1,1)}}$ is given by Eq.~\eqref{eq:SnCharacter2}, and thus we obtain
	\begin{align}\label{eq:Bound2}
	\begin{aligned}
		\lambda_1( \T_{2,K_n}^{\iSWAP} \vert_{W_{(n-1,1)}} ) & \leq  (1-a) \Big( \frac{2(n-2)}{n(n-1)}  - \frac{4n-5}{n(n-1)} \Big) - \frac{2c}{n-1} + 1 \\
		& = 1 - \frac{2}{3(n-1)} - \frac{8n - 4}{9n(n-1)},
	\end{aligned}
	\end{align}
	with $a = \frac{5}{9}, c = \frac{1}{3}$ for the $\iSWAP$ given by Table \ref{tab:Gadgets}.
	
	Comparing these two bounds, we find that
	\begin{align}\label{eq:BoundComparison}
	\begin{aligned}
		& \Big( 1 - \frac{n}{\frac{1}{3}(4^n- 1)} \Big)^{-2}  \Big(  1 - \frac{4}{3n} - \frac{n}{\frac{1}{3}(4^n- 1)} \Big) - \Big( 1 - \frac{8n - 4}{9n(n-1)} \Big) \\
		= & \frac{2(n+4)}{9n (n-1)} + \frac{3n-8}{4^n-3n-1} - \frac{12n}{(4^n-3n-1)^2} 
	\end{aligned}
	\end{align}
	is nonnegative when $n \geq 2$. Applying Eq.~\eqref{eq:UpperBoundA} again for $r = 2$, we can expand $A \vert_{W_{(n-2,2)}}$ using $A \vert_{W_{(n'-1,1)}} = A \vert_{W_{(n-3,1)}}$ and achieve a even smaller bound on the largest eigenvalue $\T_{2,K_n}^{\iSWAP} \vert_{W_{(n-2,2)}}$. Arguing in the same way for irreps of fewer total spins, we complete the proof.
\end{proof}

Like Theorem \ref{thm:iSWAP1}, the following lemma provides one more constraints on the parameter $a$ as a function of the KAK coefficients.

\begin{lemma}\label{lemma:iSWAP2}
	For any $k_x,k_y,k_z \in [0, \frac{\pi}{2}]$, the parameter $a(k_x,k_y,k_z)$ defined in Eq.~\eqref{eq:abc} is no less than $a(\frac{\pi}{4},\frac{\pi}{4},0) = \frac{5}{9}$.
\end{lemma}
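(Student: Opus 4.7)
The plan is to reduce the statement to an elementary inequality in three bounded real variables, and then exploit the multilinear (in fact, separately linear) structure of the resulting expression. Starting from Eq.~\eqref{eq:abc}, the inequality $a(k_x,k_y,k_z) \geq \tfrac{5}{9}$ is equivalent, after clearing the factor $1/9$ and the additive $6/9$, to
\begin{align}
\cos(4k_x)\cos(4k_y) + \cos(4k_y)\cos(4k_z) + \cos(4k_z)\cos(4k_x) \;\geq\; -1.
\end{align}
Setting $u = \cos(4k_x)$, $v = \cos(4k_y)$, $w = \cos(4k_z)$, which all lie in $[-1,1]$ (with every point of $[-1,1]^3$ actually attainable as $k_i$ ranges over $[0,\pi/2]$), the claim is exactly
\begin{align}
F(u,v,w) \;\coloneqq\; uv + vw + wu + 1 \;\geq\; 0 \qquad \text{on } [-1,1]^3.
\end{align}

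The key observation is that $F$ is affine in each variable separately. Hence its minimum over the box $[-1,1]^3$ is attained at one of the eight corners $(\pm 1, \pm 1, \pm 1)$. A direct enumeration shows $F = 4$ at $(1,1,1)$ and $(-1,-1,-1)$, and $F = 0$ at each of the remaining six vertices; so the minimum is $0$, as required. Equality is saturated at $(u,v,w) = (-1,-1,1)$, which corresponds precisely to $(k_x,k_y,k_z) = (\tfrac{\pi}{4},\tfrac{\pi}{4},0)$, confirming that $a = \tfrac{5}{9}$ for $\iSWAP$ is the global minimum.

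If one prefers an argument without enumeration, after fixing $v, w$ the function $u \mapsto F$ is linear with slope $v+w$, so the minimum over $u \in [-1,1]$ is $-|v+w| + vw + 1$. The remaining inequality $|v+w| \leq vw + 1$ then follows by squaring (using $vw + 1 \geq 0$) and noting
\begin{align}
(vw+1)^2 - (v+w)^2 = (1-v^2)(1-w^2) \geq 0
\end{align}
for $v,w \in [-1,1]$. Either route is routine; the only subtlety is recognizing the multilinearity at the outset, after which no genuine obstacle remains. I therefore do not expect a hard step—the statement reduces to a three-variable inequality that is essentially a cube-corner calculation.
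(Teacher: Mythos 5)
Your proof is correct, and it takes a genuinely different route from the paper's. The paper argues via calculus: it computes the total derivative of $\Delta a$ with respect to $(k_x,k_y,k_z)$, finds the stationary points (taking all sines to vanish), evaluates $\Delta a$ there, and concludes that the minimum is zero. You instead change variables to $u=\cos(4k_x)$, $v=\cos(4k_y)$, $w=\cos(4k_z)$, reduce the claim to $uv+vw+wu+1\geq 0$ on the cube $[-1,1]^3$, and exploit multilinearity to localize the minimum at the eight corners, with a closed-form alternative via $(vw+1)^2-(v+w)^2=(1-v^2)(1-w^2)\geq 0$. Your route is arguably tighter: the paper's critical-point enumeration only explicitly handles stationary points where every $\sin(4k_i)=0$ (up to permutation), and a fully careful Lagrange-style treatment would also have to dispose of mixed cases such as $\sin(4k_x)\neq 0$ with $\cos(4k_y)+\cos(4k_z)=0$; your cube-corner argument sidesteps this entirely since multilinearity guarantees the extremum is at a vertex. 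Both approaches reach the same conclusion and identify the same equality locus, $(k_x,k_y,k_z)=(\tfrac{\pi}{4},\tfrac{\pi}{4},0)$ up to the KAK symmetries.
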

\begin{proof}
	By definition,
	\begin{align}
		\Delta a = \frac{1}{9} \Big(6 + \cos(4 k_x) \cos(4 k_y) + \cos(4 k_x)\cos(4 k_z) + \cos(4 k_y)\cos(4 k_z) \Big) - \frac{5}{9}.
	\end{align}
	Its total derivative is,
	\begin{align}
		-\frac{4}{9} \begin{pmatrix}
			\sin(4 k_x) (\cos(4 k_y) + \cos(4 k_z) ) \\
			\sin(4 k_y) (\cos(4 k_x) + \cos (4 k_z) ) \\
			\sin(4 k_z) (\cos(4 k_x) + \cos (4 k_y) )
		\end{pmatrix}.
	\end{align}
	Extrema can be achieved when
	\begin{align}
		& \cos(4 k_x) = \mathmakebox[6mm]{1,} \quad \cos(4 k_y) = \mathmakebox[6mm]{1,} \quad \cos(4 k_z) = \mathmakebox[6mm]{1} \implies \Delta a = \frac{4}{9}. \\
		& \cos(4 k_x) = \mathmakebox[6mm]{1,} \quad \cos(4 k_y) = \mathmakebox[6mm]{1,} \quad \cos(4 k_z) = \mathmakebox[6mm]{-1} \implies \Delta a = 0. \\
		& \cos(4 k_x) = \mathmakebox[6mm]{1,} \quad \cos(4 k_y) = \mathmakebox[6mm]{-1,} \quad \cos(4 k_z) = \mathmakebox[6mm]{-1} \implies \Delta a = 0.  \\
		& \cos(4 k_x) = \mathmakebox[6mm]{-1,} \quad \cos(4 k_y) = \mathmakebox[6mm]{-1,} \quad \cos(4 k_z) = \mathmakebox[6mm]{-1} \implies \Delta a = \frac{4}{9}.   
	\end{align}
	Therefore, $\Delta a$ is always nonnegative.
\end{proof}

\begin{lemma}\label{lemma:iSWAP3}
	When defined on the complete graph $K_n$, it holds for all gadgets that
	\begin{align}
		\lambda_2( \T_{2,K_n}^{\Gadget} \vert_{\mathrm{span}\{ u_0,u_1 \}^{\otimes n}} ) \geq \lambda_2( \T_{2,K_n}^{\iSWAP} \vert_{\mathrm{span}\{ u_0,u_1 \}^{\otimes n}} ).
	\end{align}
\end{lemma}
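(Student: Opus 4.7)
The plan is to restrict $\T_{2,K_n}^{\Gadget}$ to the highest-spin $\SU(2)$ irrep $W_{(n)}$, on which the analysis collapses to a one-parameter inequality in the scalar $a$. From Lemma~\ref{lemma:SU(2)}, on $\mathrm{span}\{u_0, u_1\}^{\otimes n}$ we have $\T_{2,K_n}^{\Gadget} = (1-a) A + c B + I^{\otimes n}$, and $W_{(n)}$ coincides with the trivial representation of $S_n$ (the fully symmetric tensors), on which every transposition acts as the identity. Hence $\frac{2}{n(n-1)} \sum_{i<j}(i,j) \vert_{W_{(n)}} = I$, which forces $B \vert_{W_{(n)}} = 0$ and the restriction simplifies to
\[
\T_{2,K_n}^{\Gadget} \big\vert_{W_{(n)}} \;=\; (1-a)\, A \big\vert_{W_{(n)}} + I,
\]
so the entire gadget dependence is carried by the scalar $a$ while $A \vert_{W_{(n)}}$ is gadget-independent.

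Next I would pin down the relevant spectral data of $A \vert_{W_{(n)}}$. The paragraph following Eq.~\eqref{eq:GadgetPauli2} asserts that $A \le 0$ with largest eigenvalue $0$. Moreover, both $\epsilon_0, \epsilon_1$ from Lemma~\ref{lemma:stationary} lie in $W_{(n)}$ and are unit eigenvectors of \emph{every} $\T_{2,K_n}^{\Gadget}$, so the identity $(1-a) A\epsilon_i + c B\epsilon_i = 0$ must hold identically in $(a,c)$. Since the KAK coefficients $(\cos 4k_x, \cos 4k_y, \cos 4k_z) \in [-1,1]^3$ generate a genuinely two-parameter family of values $(a, c)$, this forces $A\epsilon_i = 0$ (and $B\epsilon_i = 0$). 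Letting $\mu<0$ be the largest strictly-negative eigenvalue of $A \vert_{W_{(n)}}$, which is again gadget-independent, one obtains
\[
\lambda_2\bigl(\T_{2,K_n}^{\Gadget} \vert_{W_{(n)}}\bigr) \;=\; 1 + (1-a)\mu.
\]

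Finally, Lemma~\ref{lemma:iSWAP2} gives $a \ge \tfrac{5}{9} = a_{\iSWAP}$, so $1-a \le \tfrac{4}{9}$. Since $\mu<0$, multiplying flips the sign and yields $(1-a)\mu \ge \tfrac{4}{9}\mu$, hence
\[
\lambda_2\bigl(\T_{2,K_n}^{\Gadget} \vert_{W_{(n)}}\bigr) \;\ge\; 1 + \tfrac{4}{9}\mu \;=\; \lambda_2\bigl(\T_{2,K_n}^{\iSWAP} \vert_{W_{(n)}}\bigr).
\]
To close the argument I would combine this with two simple observations: (i) since $W_{(n)}$ is an invariant subspace containing both universal unit eigenvectors $\epsilon_0, \epsilon_1$, we always have $\lambda_2(\T_{2,K_n}^{\Gadget} \vert_{\mathrm{span}\{u_0,u_1\}^{\otimes n}}) \ge \lambda_2(\T_{2,K_n}^{\Gadget} \vert_{W_{(n)}})$; and (ii) Lemma~\ref{lemma:iSWAP1} shows that $\lambda_2(\T_{2,K_n}^{\iSWAP})$ is attained inside $W_{(n)}$, making the $\iSWAP$ restriction tight. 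Chaining the inequalities yields the claim.

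The main conceptual subtlety is the precise identification $\ker (A \vert_{W_{(n)}}) = \mathrm{span}\{\epsilon_0, \epsilon_1\}$, i.e., that $A$ carries no further zero eigenvalues inside $W_{(n)}$ that would violate the definition of $\mu$. This either follows from direct inspection of the explicit tridiagonal form of $\T_{2,K_n}^{\Gadget} \vert_{W_{(n)}}$ already recorded in the proof of Lemma~\ref{lemma:iSWAP1}, or can be bypassed by noting that any additional zero eigenvalue of $A \vert_{W_{(n)}}$ would only inflate $\lambda_2(\T_{2,K_n}^{\Gadget} \vert_{W_{(n)}})$ to $1$, in which case the desired inequality becomes trivial. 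Once this is handled, the body of the proof is essentially a one-line monotonicity argument in the single variable $a$ combined with the symmetry reductions above.
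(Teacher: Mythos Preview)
Your proof is correct and follows essentially the same approach as the paper: restrict to $W_{(n)}$ where $B$ vanishes so that the operator depends only on $a$, invoke Lemma~\ref{lemma:iSWAP2} for $a\ge 5/9$ together with the negativity of the relevant eigenvalue of $A\vert_{W_{(n)}}$, and close with Lemma~\ref{lemma:iSWAP1} to make the $\iSWAP$ side tight. Your explicit justification that $B\vert_{W_{(n)}}=0$ via the trivial $S_n$-action on symmetric tensors is a cleaner way of stating what the paper records only implicitly through its matrix-entry computation in the proof of Lemma~\ref{lemma:iSWAP1}.
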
 
\begin{proof}
	By Lemma \ref{lemma:iSWAP1},
	\begin{align}
		\lambda_2( \T_{2,K_n}^{\iSWAP} \vert_{\mathrm{span}\{ u_0,u_1 \}^{\otimes n}} ) = 1 + (1- \frac{5}{9})\lambda_2(A \vert_{W_{(n)}})
	\end{align}
	with $\lambda_2(A \vert_{W_{(n)}}) < 0$ being the second largest eigenvalue of $A \vert_{W_{(n)}}$ defined independently of the KAK coefficients in Eq.~\eqref{eq:GadgetPauli2}. On the other hand, 
	\begin{align}
		\lambda_2( \T_{2,K_n}^{\Gadget} \vert_{\mathrm{span}\{ u_0,u_1 \}^{\otimes n}} ) \geq 1 + (1- a)\lambda_2(A \vert_{W_{(n)}})
	\end{align}
	because $a$ is no less than $\frac{5}{9}$ by Lemma \ref{lemma:iSWAP2}. This finishes the proof.
\end{proof}


As promised in the beginning of this subsection, we now reach the position to verify that the second largest eigenvalue of $\T_{2,\mathcal{G}}^{\Gadget}$ can always be found within the $\mathrm{span}\{ u_0, u_1 \}^{\otimes n}$ among all other subspaces from the direct sum in Eq.~\eqref{eq:decomposition}. Together with Lemma \ref{lemma:Eigenvalues1}, Corollary \ref{cor:Eigenvalues3} and Lemma \ref{lemma:iSWAP3}, we can finally conclude that the ensemble corresponding to $\T_{2,K_n}^{\iSWAP}$ achieves the fastest convergence to unitary 2-designs over any ensemble of 2-local unitaries when $n \geq 3$. Note that the lemma below applies to a general graph, though we only need it on $K_n$ at present.

\begin{lemma}\label{lemma:Eigenvalues2}
	When $n \geq \frac{2}{1 + \lambda_{\min}(T_2^{\Gadget} )} $, there is no eigenvalues of $\T_{2,\mathcal{G}}^{\Gadget}$ restricted to the complement of $\mathrm{span}\{ u_0,u_1 \}^{\otimes n}$ that can be larger than the second largest eigenvalue of $\T_{2,\mathcal{G}}^{\Gadget} \vert_{\mathrm{span}\{ u_0,u_1 \}^{\otimes n}}$. In other words,
	\begin{align}
		\lambda_2(\T_{2,\mathcal{G}}^{\Gadget}) = \lambda_2( \T_{2,\mathcal{G}}^{\Gadget} \vert_{\mathrm{span}\{ u_0,u_1 \}^{\otimes n}} ).
	\end{align}
\end{lemma}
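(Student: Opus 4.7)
The plan is to exploit the block structure of $\T_{2,\mathcal{G}}^{\Gadget}$ induced by the 1-qubit Haar projectors sandwiching each gadget. Starting from the site-wise splitting $\operatorname{End}((\mathbb{C}^2)^{\otimes 2}) = \mathrm{span}\{u_0,u_1\} \oplus \mathcal{S}$ already used in Eq.~\eqref{eq:decomposition}, I would index the resulting tensor blocks by subsets $S \subseteq \{1,\dots,n\}$ specifying which sites lie in the orthogonal complement $\mathcal{S}$, writing $V_S = \bigotimes_{k\in S}\mathcal{S}\otimes\bigotimes_{k\notin S}\mathrm{span}\{u_0,u_1\}$ so that $V = V_\emptyset$ and $V^\perp = \bigoplus_{S\neq\emptyset}V_S$. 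The outer Haar projectors in each $\T_{2,(i,j)}^{\Gadget}$ project sites $i,j$ onto $\mathrm{span}\{u_0,u_1\}$, so $\T_{2,(i,j)}^{\Gadget}$ annihilates every $V_S$ meeting $\{i,j\}$ and otherwise preserves it, acting as $T_2^{\Gadget}$ on the $(i,j)$ factor while leaving the $\mathcal{S}$-components intact. Summing over edges yields the clean block formula
\begin{align*}
\T_{2,\mathcal{G}}^{\Gadget}\vert_{V_S} = \frac{|E(\mathcal{G}_{S^c})|}{|E(\mathcal{G})|}\, I_{\mathcal{S}^{\otimes S}} \otimes \T_{2,\mathcal{G}_{S^c}}^{\Gadget}\vert_{\mathrm{span}\{u_0,u_1\}^{\otimes S^c}},
\end{align*}
where $\mathcal{G}_{S^c}$ is the induced subgraph on $S^c$.

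From this formula, $\lambda_{\max}(\T_{2,\mathcal{G}}^{\Gadget}\vert_{V_S}) = |E(\mathcal{G}_{S^c})|/|E(\mathcal{G})|$, as the subgraph moment operator on the right factor always has $I^{\otimes|S^c|}$ as a unit eigenvector. Maximizing over nonempty $S$ is attained at a singleton $S = \{v\}$ with $v$ of minimum degree, giving $\lambda_{\max}(\T_{2,\mathcal{G}}^{\Gadget}\vert_{V^\perp}) = 1 - d_{\min}/|E(\mathcal{G})|$. The lemma therefore reduces to the matching lower bound $\lambda_2(\T_{2,\mathcal{G}}^{\Gadget}\vert_V) \geq 1 - d_{\min}/|E(\mathcal{G})|$ under the hypothesis on $n$.

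For this lower bound I would test with $\xi_v = u_1^{(v)}\otimes u_0^{\otimes(n-1)} \in V$ for $v$ of minimum degree. A direct computation using Eq.~\eqref{eq:Gadget4x4} gives $\langle\xi_v,\T_{2,\mathcal{G}}^{\Gadget}\vert_V\,\xi_v\rangle = 1 - d_{\min}(c+3b)/|E(\mathcal{G})|$, with the overlap with $\mathrm{span}\{\epsilon_0,\epsilon_1\}$ from Lemma \ref{lemma:stationary} being exponentially small in $n$. For gadgets with $c+3b < 1$, such as $\B$, $\CNOT$, or $\SQSW$, the Rayleigh quotient already exceeds the $V^\perp$ bound. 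The hard part is the borderline case $c+3b = 1$, achieved by $\iSWAP$ and $\QFT$, where leading terms agree and the tiny correction from projecting onto $\epsilon_1$ could in principle reverse the inequality. This is precisely where the hypothesis $n \geq 2/(1+\lambda_{\min}(T_2^{\Gadget}))$ enters, mirroring its role in Lemma \ref{lemma:Eigenvalues1} and Corollary \ref{cor:Eigenvalues3}: applying the same Weyl-type bookkeeping to the subgraph moment operators $\T_{2,\mathcal{G}_{S^c}}^{\Gadget}$ and combining with the exponentially small overlap closes the gap for every $n$ meeting the hypothesis. A cleaner alternative, sufficient for the complete-graph case actually required in the sequel, is to restrict to the symmetric $\SU(2)$ irrep $W_{(n)}$ and reuse the explicit spectral estimates from the proof of Lemma \ref{lemma:iSWAP1}.
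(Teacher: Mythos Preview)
Your block decomposition $V = \bigoplus_S V_S$ and the resulting formula
\[
\T_{2,\mathcal{G}}^{\Gadget}\big\vert_{V_S} \;=\; \frac{|E(\mathcal{G}_{S^c})|}{|E(\mathcal{G})|}\; I_{\mathcal{S}^{\otimes S}} \otimes \T_{2,\mathcal{G}_{S^c}}^{\Gadget}\big\vert_{\mathrm{span}\{u_0,u_1\}^{\otimes S^c}}
\]
are exactly what the paper uses, and your identification of $\lambda_{\max}(\T\vert_{V^\perp}) = 1 - d_{\min}/|E(\mathcal{G})|$ is correct. The gap is in the borderline case $c+3b=1$, which you correctly flag as the hard part but do not actually resolve.

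Concretely: writing $\xi_v = \alpha\,\epsilon_1 + \sqrt{1-\alpha^2}\,v'$ with $v'\perp\epsilon_0,\epsilon_1$ and $\alpha^2 = 3/(4^n-1)$, Hermiticity and the eigenvector property of $\epsilon_1$ kill the cross term, so
\[
\langle v',\T\vert_V\,v'\rangle \;=\; \frac{\langle\xi_v,\T\vert_V\,\xi_v\rangle - \alpha^2}{1-\alpha^2}
\;=\; \frac{1 - d_{\min}(c+3b)/|E| - \alpha^2}{1-\alpha^2}.
\]
When $c+3b=1$ this equals $1 - \dfrac{d_{\min}/|E|}{1-\alpha^2}$, which is \emph{strictly smaller} than the target $1 - d_{\min}/|E|$. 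The exponentially small overlap moves the Rayleigh quotient the wrong way; ``Weyl-type bookkeeping'' on the subgraph operators cannot help because the top eigenvalue on $V_{\{v\}}$ is genuinely $1-d_{\min}/|E|$, a fixed number you must beat inside $V$. Your fallback to $W_{(n)}$ only addresses $K_n$, whereas the lemma is stated for arbitrary connected $\mathcal{G}$.

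The paper's missing ingredient is to \emph{square} the operator before taking the Rayleigh quotient. Splitting $\T\vert_V = A' + B'$ into edges avoiding and touching the chosen vertex, one computes $\langle v',(A'+B')^2 v'\rangle$; the term $\langle v_1, B'^2 v_1\rangle>0$ and the nonnegativity of the entries of $A'B'+B'A'$ produce an extra positive contribution of order $|N(v)|/|E|^2$ that dominates the $\alpha^2$ loss, pushing the second eigenvalue of $(\T\vert_V)^2$ above $1 - d_{\min}/|E|$. The hypothesis $n\geq 2/(1+\lambda_{\min}(T_2^{\Gadget}))$ enters only at the very end: it guarantees $|\lambda_{\min}(\T\vert_V)| \leq 1 - d_{\min}/|E|$, so that upon taking square roots the large second eigenvalue of the square must come from a \emph{positive} eigenvalue of $\T\vert_V$, giving $\lambda_2(\T\vert_V)\geq\sqrt{1-d_{\min}/|E|}\geq 1-d_{\min}/|E|$.
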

\begin{proof}
	Let us recall that $\mathcal{S} \subset \operatorname{End}((\mathcal{C}^2)^{\otimes 2})$ represents the orthogonal complement of $\mathrm{span}\{ u_0, u_1 \}$. We only compare the restrictions of $\T_{2,\mathcal{G}}^{\Gadget}$ to $\mathcal{S} \otimes \mathrm{span}\{ u_0, u_1 \}^{\otimes n-1}$ and $\mathrm{span}\{ u_0, u_1 \}^{\otimes n}$. Other cases can be proved analogously. Within $\mathcal{S} \otimes \mathrm{span}\{ u_0, u_1 \}^{\otimes n-1}$, any 2-local second moment operator acting on the vectors taking tensor product components from $\mathcal{S}$ degenerates to the zero matrix. As a result,
	\begin{align}
		\T_{2,\mathcal{G}}^{\Gadget} \Big\vert_{S \otimes \mathrm{span}\{ u_0, u_1 \}^{\otimes n-1}} = \frac{1}{\vert E(\mathcal{G}) \vert} \sum_{(i,j) \in E(\mathcal{G}) \setminus N(1) } \T_{2,(i,j)}^{\Gadget} \vert_{\mathcal{S} \otimes \mathrm{span}\{ u_0,u_1 \}^{\otimes n-1}},
	\end{align}
	where $N(1)$ is the neighbour of vertex 1.  Its spectrum, if neglecting the multiplicities, is just identical to that of
	\begin{align}
		\frac{1}{\vert E(\mathcal{G}) \vert} \sum_{(i,j) \in E(\mathcal{G}) \setminus N(1) } \T_{2,(i,j)}^{\Gadget} \vert_{\mathrm{span}\{ u_0, u_1 \}^{\otimes n}},
	\end{align}
	because these 2-local moment operators also act trivially on the first component of the tensor product. Its largest eigenvalue is easily obtained by counting the number of edges, which equals $\frac{\vert E(\mathcal{G}) \vert - \vert N(1) \vert }{\vert E(\mathcal{G}) \vert}$.
	
	On the other hand, let $v_1 = \ket{u_1,...,u_0,...,u_0}$ and we consider the normalized vector
	\begin{align}
		v' = \frac{1}{1 - \langle v_1, \epsilon_1 \rangle^2} ( v_1 - \langle v_1, \epsilon_1 \rangle \epsilon_1) = \frac{1}{ 1 - \frac{1}{\frac{1}{3}(4^n- 1)} } (v_1 - \sqrt{\frac{1}{\frac{1}{3}(4^n- 1)}} \epsilon_1 ).
	\end{align}
	Let
	\begin{align}
		\begin{aligned}
			& \frac{1}{\vert E(\mathcal{G}) \vert} \sum_{(i,j) \in E(\mathcal{G}) } \T_{2,(i,j)}^{\Gadget} \vert_{\mathrm{span}\{ u_0, u_1 \}^{\otimes n}} 
			= A' + B' \\
			= & \frac{1}{\vert E(\mathcal{G}) \vert} \sum_{(i,j) \in E(\mathcal{G}) \setminus N(1) } \T_{2,(i,j)}^{\Gadget} \vert_{\mathrm{span}\{ u_0, u_1 \}^{\otimes n}}
			+ \frac{1}{\vert E(\mathcal{G}) \vert} \sum_{(i,j) \in N(1) } \T_{2,(i,j)}^{\Gadget} \vert_{\mathrm{span}\{ u_0, u_1 \}^{\otimes n}} 
		\end{aligned}
	\end{align}
	Directly bounding the eigenvalue of $A'+B'$ is difficult, so we take the square. Like Eq.~\eqref{eq:Bound1}, we have
	\begin{align} 
	& \langle v', (A' + B')^2 v' \rangle \notag \\
	= & \Big( 1 - \frac{1}{\frac{1}{3}(4^n- 1)} \Big)^{-2} \Big( \langle v_1, (A'+B')^2 v_1 \rangle - \frac{1}{\frac{1}{3}(4^n- 1)} \Big) \notag \\
	= & \Big( 1 - \frac{1}{\frac{1}{3}(4^n- 1)} \Big)^{-2} \Big( \langle v_1, A'^2 v_1 \rangle + \langle v_1, (A'B'+B'A') v_1 \rangle + \langle v_1, B'^2 v_1 \rangle - \frac{1}{\frac{1}{3}(4^n- 1)} \Big) \\
	= & \Big( 1 - \frac{1}{\frac{1}{3}(4^n- 1)} \Big)^{-2} \Big( \frac{\vert E(\mathcal{G}) \vert - \vert N(1) \vert }{\vert E(\mathcal{G}) \vert} + \langle v_1, (A'B'+B'A') v_1 \rangle \notag \\ 
	& \hspace{3cm} + \frac{\vert N(1) \vert }{\vert E(\mathcal{G}) \vert^2} \Big( \frac{3}{4}(1-a)^2 + c^2 + \vert N(1) \vert \Big(1 - \frac{3}{2}(1-a) + c \Big)^2 \Big) - \frac{1}{\frac{1}{3}(4^n- 1)} \Big), \notag
	\end{align}
	where we used Eq.~\eqref{eq:Gadget4x4} to compute $\langle v_1, B'^2 v_1 \rangle$. Since $A',B'$ only bear nonnegative entries, $\langle v_1, (A'B'+B'A') v_1 \rangle \geq 0$. Since $\vert N(1) \vert \geq 1$, 
	\begin{align}
	& \langle v', (A+B)^2 v' \rangle \\
	\geq & \Big( 1 - \frac{1}{\frac{1}{3}(4^n- 1)} \Big)^{-2} \Big( \frac{\vert E(\mathcal{G}) \vert - \vert N(1) \vert }{\vert E(\mathcal{G}) \vert} + \frac{\vert N(1) \vert }{\vert E(\mathcal{G}) \vert^2} \Big( \frac{3}{4}(1-a)^2 + c^2 + \Big(1 - \frac{3}{2}(1-a) + c \Big)^2 \Big) - \frac{1}{\frac{1}{3}(4^n- 1)} \Big) \notag \\
	= & \Big( 1 - \frac{1}{\frac{1}{3}(4^n- 1)} \Big)^{-2} \Big( \frac{\vert E(\mathcal{G}) \vert - \vert N(1) \vert }{\vert E(\mathcal{G}) \vert} 
	+ \frac{\vert N(1) \vert }{\vert E(\mathcal{G}) \vert^2} \Big( 3a^2 - 3a + 1 + 2c^2 + 3ac - c \Big) - \frac{1}{\frac{1}{3}(4^n- 1)} \Big). \notag
	\end{align} 
	By Lemma \ref{lemma:iSWAP2}, $a \geq \frac{5}{9}$ and hence the above inequality can be lower bounded by
	\begin{align} 
		\Big( 1 - \frac{1}{\frac{1}{3}(4^n- 1)} \Big)^{-2} \Big( \frac{\vert E(\mathcal{G}) \vert - \vert N(1) \vert }{\vert E(\mathcal{G}) \vert} 
		+ \frac{\vert N(1) \vert }{4 \vert E(\mathcal{G}) \vert^2} - \frac{1}{\frac{1}{3}(4^n- 1)} \Big) \geq \frac{\vert E(\mathcal{G}) \vert - \vert N(1) \vert }{\vert E(\mathcal{G}) \vert}
	\end{align}
	even for $n \geq 2$. 
	Therefore, the second largest eigenvalue of $( \T_{2,\mathcal{G}}^{\Gadget} \vert_{\mathrm{span}\{ u_0, u_1 \}^{\otimes n}} )^2$ is larger than $\frac{\vert E(\mathcal{G}) \vert - \vert N(1) \vert }{\vert E(\mathcal{G}) \vert}$. We also note that
	\begin{align}
		n \geq \frac{2}{1 + \lambda_{\min}(\T_{2,(i,j)}^{\Gadget} )} \implies 
		1 + \lambda_{\min}(\T_{2,\mathcal{G}}^{\Gadget} ) \geq 1 + \lambda_{\min}(\T_{2,(i,j)}^{\Gadget} ) \geq \frac{2}{n} \geq \frac{\vert N(1) \vert }{\vert E(\mathcal{G}) \vert}.
	\end{align} 
	Then by taking square root, there must be a positive eigenvalue of $\T_{2,\mathcal{G}}^{\Gadget} \vert_{\mathrm{span}\{ u_0, u_1 \}^{\otimes n}}$ larger than $\frac{\vert E(\mathcal{G}) \vert - \vert N(1) \vert }{\vert E(\mathcal{G}) \vert}$. The largest eigenvalue of $\T_{2,\mathcal{G}}^{\Gadget}$ restricted to other subspaces, like $\mathcal{S}^{\otimes 2} \otimes \mathrm{span}\{ u_0, u_1 \}^{\otimes n-2}$, are even smaller and cannot exceed the above bound, completing the proof. 
\end{proof}

\begin{example}
	We still take $\iSWAP$ as a example. Lemma \ref{lemma:Eigenvalues2} holds when 
	\begin{align}
		n \geq \frac{2}{1 + \lambda_{\min}(T_2^{\iSWAP} )} = 3 
	\end{align}
	where we recall that $\lambda_{\min}(T_2^{\iSWAP} ) = -\frac{1}{3}$. If $n = 2 < 3$, according to Table \ref{tab:Gadgets} we have
	\begin{align}
		\lambda_2(T_2^{\iSWAP} \vert_{\mathrm{span}\{ u_0,u_1 \}^{\otimes 2}} ) = - \frac{1}{9} < 0 = \lambda_{\max}( T_2^{\iSWAP} \vert_{\mathcal{S} \otimes \mathrm{span}\{ u_0,u_1 \}} )
	\end{align}
	where Lemma \ref{lemma:Eigenvalues2} fails as the condition is not satisfied.
	
	On the other hand, it should be noted that the smallest eigenvalue of $\T_{2,\mathcal{G}}^{\iSWAP} $ may not stay in ${\mathrm{span}\{ u_0,u_1 \}^{\otimes n}}$. For example, when $n = 3$, distinct eigenvalues of
	\begin{align}
		\frac{1}{3} ( \T_{2,(1,2)}^{\iSWAP} + \T_{2,(2,3)}^{\iSWAP} + \T_{2,(1,3)}^{\iSWAP} ) \vert_{\mathcal{S} \otimes \mathrm{span}\{ u_0,u_1 \}^{\otimes 2}}
	\end{align}
	are identical to those of $\frac{1}{3} \T_{2,(2,3)}^{\iSWAP}$, which are just $\frac{1}{3},-\frac{1}{27},-\frac{1}{9}$. However, numerical computation shows that
	\begin{align}
		\frac{1}{3} \lambda_{\min} ( ( \T_{2,(1,2)}^{\iSWAP} + \T_{2,(2,3)}^{\iSWAP} + \T_{2,(1,3)}^{\iSWAP} ) \vert_{ \mathrm{span}\{ u_0,u_1 \}^{\otimes 3}} ) = -\frac{1}{27}
	\end{align}
	which is obviously not the global minimum.
\end{example}

Given any $\T_{2,\mathcal{G}}^{\Gadget}$, Lemma \ref{lemma:Eigenvalues2} conforms that only when $n \geq \frac{2}{1 + \lambda_{\min}(T_2^{\Gadget} )} $, we can find a positive eigenvalue larger than $\vert \lambda_{\min} (\T_{2,\mathcal{G}}^{\Gadget}) \vert$. Specifically, for gadgets and in the formation of 2-designs, this result refines Lemma \ref{lemma:Eigenvalues1}.

\begin{corollary}\label{cor:Eigenvalues3}
	When $n \geq \frac{2}{1 + \lambda_{\min}(T_{2}^{\Gadget} )} $, the spectral gap of $\T_{2,\mathcal{G}}^{\Gadget}$ for any graph $\mathcal{G}$ and any gadget is determined by its second largest eigenvalue.
\end{corollary}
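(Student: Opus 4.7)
The plan is to combine the two eigenvalue bounds that are essentially already assembled in the proof of Lemma~\ref{lemma:Eigenvalues2}. First, I would invoke Lemma~\ref{lemma:Eigenvalues2} directly: under the hypothesis $n \geq 2/(1+\lambda_{\min}(T_2^{\Gadget}))$, the second largest eigenvalue of $\T_{2,\mathcal{G}}^{\Gadget}$ is attained on the subspace $\mathrm{span}\{u_0,u_1\}^{\otimes n}$, and the proof of that lemma in fact exhibits a positive eigenvalue strictly exceeding $(|E(\mathcal{G})|-|N(1)|)/|E(\mathcal{G})|$, where $N(1)$ denotes the edges incident to a vertex of minimum degree. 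Hence
\begin{align}
\lambda_2(\T_{2,\mathcal{G}}^{\Gadget}) \;\geq\; \frac{|E(\mathcal{G})|-|N(1)|}{|E(\mathcal{G})|}.
\end{align}

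Second, I would reuse the bound on $\lambda_{\min}$ that already appears within the proof of Lemma~\ref{lemma:Eigenvalues2}: since all 2-local terms $\T_{2,(i,j)}^{\Gadget}$ are similar and thus share the spectrum of $T_2^{\Gadget}$, the convexity of $\lambda_{\min}$ under averaging gives $\lambda_{\min}(\T_{2,\mathcal{G}}^{\Gadget}) \geq \lambda_{\min}(T_2^{\Gadget})$, and the handshake lemma applied to the minimum-degree vertex yields $|N(1)|/|E(\mathcal{G})| \leq 2/n$. Under the hypothesis on $n$ these combine to
\begin{align}
1 + \lambda_{\min}(\T_{2,\mathcal{G}}^{\Gadget}) \;\geq\; 1 + \lambda_{\min}(T_2^{\Gadget}) \;\geq\; \frac{2}{n} \;\geq\; \frac{|N(1)|}{|E(\mathcal{G})|},
\end{align}
so that $|\lambda_{\min}(\T_{2,\mathcal{G}}^{\Gadget})| \leq (|E(\mathcal{G})|-|N(1)|)/|E(\mathcal{G})|$ whenever $\lambda_{\min}$ is negative (and the corollary is trivial otherwise).

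Chaining the two inequalities gives $\lambda_2(\T_{2,\mathcal{G}}^{\Gadget}) \geq |\lambda_{\min}(\T_{2,\mathcal{G}}^{\Gadget})|$, so the definition \eqref{eq:Gap} immediately collapses to $\Delta(\T_{2,\mathcal{G}}^{\Gadget}) = 1 - \lambda_2(\T_{2,\mathcal{G}}^{\Gadget})$, which is the statement. There is no genuine obstacle here: the argument is a repackaging of what Lemma~\ref{lemma:Eigenvalues2} already establishes, and the threshold $n \geq 2/(1+\lambda_{\min}(T_2^{\Gadget}))$ is exactly what is needed to force the comparison $2/n \geq |N(1)|/|E(\mathcal{G})|$ in the right direction. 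The only thing to double-check is that the strict inequality from Lemma~\ref{lemma:Eigenvalues2} (using the extra $|N(1)|/(4|E(\mathcal{G})|^2)$ slack appearing there) is preserved after taking the square root, which it is since both sides are nonnegative and the right-hand side in our bound for $\lambda_2$ already dominates $|\lambda_{\min}|$ with room to spare.
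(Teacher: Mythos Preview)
Your proposal is correct and follows essentially the same approach as the paper: the paper treats the corollary as an immediate consequence of the argument inside the proof of Lemma~\ref{lemma:Eigenvalues2}, which already produces a positive eigenvalue exceeding $(|E(\mathcal{G})|-|N(1)|)/|E(\mathcal{G})|$ and simultaneously bounds $|\lambda_{\min}(\T_{2,\mathcal{G}}^{\Gadget})|$ by the same quantity via the chain $1+\lambda_{\min}\geq 2/n \geq |N(1)|/|E(\mathcal{G})|$. You have simply made these two halves explicit and chained them, which is exactly what the paper intends.
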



\begin{theorem}\label{thm:iSWAP2}
	Suppose $n \geq 3$. Let $\mathcal{E}$ be an ensemble of 2-local unitaries defined on a connected graph $\mathcal{G}$ over an $n$-qubit system such that on different edges of $\mathcal{G}$, one can even sample from different 2-local gates sets. Suppose $\T_{2,(i,j)}^{\mathcal{E}}$ is Hermitian for each edge $(i,j)$, then
	\begin{align}
		\lambda_2( \T_{2,\mathcal{G}}^{\mathcal{E}} ) \geq \T_{2,K_n}^{\iSWAP}.
	\end{align}
	Regardless of $\T_{2,\mathcal{G}}^{\mathcal{E}}$ bearing negative eigenvalues or not, the ensemble defined by gadgets of $\iSWAP$ on the complete graph attains the fastest convergence to unitary 2-designs.
\end{theorem}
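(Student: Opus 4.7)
The plan is to reduce the comparison to a chain of $\lambda_2$ inequalities on the subspace $V \vcentcolon= \mathrm{span}\{u_0, u_1\}^{\otimes n}$, via three stages: ironing the ensemble $\mathcal{E}$, replacing $\mathcal{G}$ by $K_n$, and optimizing the gadget to $\iSWAP$. First, I would define the ironed version $\mathcal{E}'$ of $\mathcal{E}$ by sandwiching each 2-qubit draw with independent single-qubit Haar-random gates on both sides, yielding local terms $\T_{2,(i,j)}^{\mathcal{E}'} = \T_{2,i}^{\U(2)} \T_{2,j}^{\U(2)} \T_{2,(i,j)}^{\mathcal{E}} \T_{2,i}^{\U(2)} \T_{2,j}^{\U(2)}$, i.e.\ an average of single-gate gadgets indexed by the KAK images of gates in $\mathcal{E}$. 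Since $u_0, u_1$ span the image of each 1-qubit Haar projector, the outer projectors act as the identity on inputs from $V$; tracking this edge-by-edge (the projector fixes $V$ on the active pair and acts trivially on bystander qubits) gives the compression identity
\begin{align*}
\T_{2,\mathcal{G}}^{\mathcal{E}'}\big|_V = P_V\, \T_{2,\mathcal{G}}^{\mathcal{E}}\, P_V \big|_V,
\end{align*}
where $P_V$ denotes the orthogonal projection onto $V$. Both unit eigenvectors of $\T_{2,\mathcal{G}}^{\mathcal{E}}$ (the identity and the global SWAP) lie in $V$, so Cauchy's interlacing theorem yields $\lambda_2\bigl(\T_{2,\mathcal{G}}^{\mathcal{E}}\bigr) \geq \lambda_2\bigl(\T_{2,\mathcal{G}}^{\mathcal{E}'}\big|_V\bigr)$.

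Next, each restricted local term $\T_{2,(i,j)}^{\mathcal{E}'}\big|_V$ is $S_2$-symmetric and has the form of Lemma~\ref{lemma:SU(2)} with averaged parameters $\bar a = \E_U[a(U)] \geq 5/9$ (by Lemma~\ref{lemma:iSWAP2}) and some $\bar c$. Since the proof of Proposition~\ref{prop:Kn} only compares sums of structurally identical 2-local summands with a common unit eigenspace, it transfers verbatim to the restriction to $V$, giving $\lambda_2\bigl(\T_{2,\mathcal{G}}^{\mathcal{E}'}\big|_V\bigr) \geq \lambda_2\bigl(\T_{2,K_n}^{\mathcal{E}'}\big|_V\bigr)$. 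On the complete graph, $\T_{2,K_n}^{\mathcal{E}'}\big|_V$ has the $\SU(2)$-symmetric form of Eq.~\eqref{eq:GadgetPauli2}; restriction to the highest total-spin irrep $W_{(n)}$ collapses the $B$ contribution (the transposition sum acts as the identity on symmetric states by Eq.~\eqref{eq:SnCharacter1}) and produces the block $1 + (1-\bar a)\lambda_2(A|_{W_{(n)}})$. Because $\lambda_2(A|_{W_{(n)}}) < 0$, this quantity is monotone decreasing in $\bar a$ and is minimized over $\bar a \geq 5/9$ precisely at $\bar a = 5/9$, realized by $\iSWAP$. Hence, by the argument of Lemma~\ref{lemma:iSWAP3} applied to averaged parameters, $\lambda_2\bigl(\T_{2,K_n}^{\mathcal{E}'}\big|_V\bigr) \geq \lambda_2\bigl(\T_{2,K_n}^{\iSWAP}\big|_V\bigr)$.

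Finally, $\lambda_{\min}(T_2^{\iSWAP}) = -1/3$ makes the hypotheses of Lemmas~\ref{lemma:iSWAP1} and~\ref{lemma:Eigenvalues2} valid for $\iSWAP$ whenever $n \geq 3$, so $\lambda_2\bigl(\T_{2,K_n}^{\iSWAP}\big|_V\bigr) = \lambda_2\bigl(\T_{2,K_n}^{\iSWAP}\bigr)$. Chaining the three inequalities gives $\lambda_2\bigl(\T_{2,\mathcal{G}}^{\mathcal{E}}\bigr) \geq \lambda_2\bigl(\T_{2,K_n}^{\iSWAP}\bigr)$. Corollary~\ref{cor:Eigenvalues3} then supplies $\Delta(\T_{2,K_n}^{\iSWAP}) = 1 - \lambda_2(\T_{2,K_n}^{\iSWAP})$, while the trivial bound $\Delta(\T_{2,\mathcal{G}}^{\mathcal{E}}) \leq 1 - \lambda_2(\T_{2,\mathcal{G}}^{\mathcal{E}})$ holds regardless of how negative $\lambda_{\min}(\T_{2,\mathcal{G}}^{\mathcal{E}})$ may be, delivering $\Delta(\T_{2,K_n}^{\iSWAP}) \geq \Delta(\T_{2,\mathcal{G}}^{\mathcal{E}})$ as claimed.

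The main technical obstacle is establishing the compression identity cleanly when $\mathcal{E}$ is allowed to source different 2-qubit gate sets on different edges: each ironed local projector pair must be verified to act as $P_V$ on $V$, which requires careful edge-by-edge bookkeeping of which sites host nontrivial Haar projectors. A deliberate benefit of performing every comparison on $V$ is that it sidesteps the need to extend Lemma~\ref{lemma:Eigenvalues2} to the averaged gadget $\mathcal{E}'$, whose minimum eigenvalue could in pathological cases (e.g., $\mathcal{E}$ concentrated near $\SWAP$) approach $-1$ and violate that lemma's dimensional hypothesis; here Lemma~\ref{lemma:Eigenvalues2} is applied only to $\iSWAP$ itself, so $n \geq 3$ remains the only dimensional requirement.
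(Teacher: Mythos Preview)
Your proposal is correct and follows essentially the same route as the paper's proof: both use Cauchy interlacing to pass from $\mathcal{E}$ to its ironed version, an $S_n$-symmetrization to replace $\mathcal{G}$ by $K_n$, and the bound $a\geq 5/9$ on $W_{(n)}$ (Lemmas~\ref{lemma:iSWAP1}--\ref{lemma:iSWAP3}) to compare with $\iSWAP$; you simply iron before symmetrizing whereas the paper symmetrizes first. Your decision to carry every comparison on $V$ and invoke Lemma~\ref{lemma:Eigenvalues2} only for $\iSWAP$ is a clean streamlining that sidesteps any size hypothesis on the averaged gadget, but the underlying argument is the same.
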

\begin{proof}
	Suppose $\mathcal{E}$ is defined by a fixed type of gadget, the statement has already been proved by combining Lemma \ref{lemma:iSWAP3}, Lemma \ref{lemma:Eigenvalues2} and Proposition \ref{prop:Kn} in Section \ref{sec:architectures}.
	
	Given a general ensemble $\mathcal{E}$ satisfying the assumption, let $\bar{\mathcal{E}}$ be its symmetrization defined by unifying all 2-local gates sets and applying them equally to any pair of qubits (hence the underlying graph becomes complete). Intuitively, $\bar{\mathcal{E}}$ converges faster than $\mathcal{E}$ to unitary 2-designs and we prove this fact first. With respect to any graph $\mathcal{G}$, by definition,
	\begin{align}
		\T_{2,K_n}^{\bar{\mathcal{E}}} = \frac{1}{\vert K_n \vert} \sum_{(i,j) \in E(K_n)} \T_{2,(i,j)}^{\bar{\mathcal{E}}} = \frac{1}{n!} \sum_{\sigma \in S_n} \frac{1}{\vert \mathcal{G} \vert} \sum_{(i,j) \in E(\mathcal{G})} \T_{2,(\sigma(i),\sigma(j))}^{\mathcal{E}} = \frac{1}{n!} \sum_{\sigma \in S_n} \sigma \cdot \T_{2,\mathcal{G}}^{\mathcal{E}}.
	\end{align}
	Suppose $v_1$ is a normalized eigenvector corresponding to the second largest eigenvalue of $\T_{2,K_n}^{\bar{\mathcal{E}}}$. Since $\T_{2,K_n}{\bar{\mathcal{E}}}, \sigma \cdot \T_{2,\mathcal{G}}^{\mathcal{E}}$ always share the unit eigenspace, i.e., $\text{Comm}_2(\U^{2^n})$, $\langle v_1, \sigma \cdot \T_{2,\mathcal{G}}^{\mathcal{E}} v_1 \rangle \leq \lambda_2 (\sigma \cdot \T_{2,\mathcal{G}}^{\mathcal{E}})$. Since $\sigma_1 \cdot \T_{2,\mathcal{G}}^{\mathcal{E}}, \sigma_2 \cdot \T_{2,\mathcal{G}}^{\mathcal{E}}$ are similar operators for any $\sigma_1, \sigma_2 \in S_n$,
	\begin{align}\label{eq:Symmetrize}
		\lambda_2(\T_{2,K_n}^{\bar{\mathcal{E}}}) = \langle v_1, \T_{2,K_n}^{\bar{\mathcal{E}}} v_1 \rangle 
		= \langle v_1, \frac{1}{n!} \sum_{\sigma \in S_n} \sigma \cdot \T_{2,\mathcal{G}}^{\mathcal{E}} v_1 \rangle \leq \lambda_2 (\T_{2,\mathcal{G}}^{\mathcal{E}}).
	\end{align}
	Consequently, we just need to prove the theorem for $\T_{2,K_n}^{\bar{\mathcal{E}}}$.
	
	To this end, let us consider an auxiliary implementation of 1-qubit Haar random matrices over the whole system. By the same argumentation used in Example \ref{example:LocalOperator}, the operator
	\begin{align}
	\begin{aligned}
		& (\T_{2,1}^{\U(2)} \cdots \T_{2,n}^{\U(2)} ) \T_{2,K_n}^{\bar{\mathcal{E}}} (\T_{2,1}^{\U(2)} \cdots \T_{2,n}^{\U(2)} ) \\
		= & \frac{2}{n(n-1)} \sum_{i < j} (\T_{2,1}^{\U(2)} \cdots \T_{2,n}^{\U(2)} ) \T_{2,(i,j)}^{\bar{\mathcal{E}}} (\T_{2,1}^{\U(2)} \cdots \T_{2,n}^{\U(2)} )
	\end{aligned}
	\end{align}
	is identically zero in the orthogonal complement of $\mathrm{span}\{ u_0,u_1 \}^{\otimes n}$. Moreover,
	\begin{align}
	\begin{aligned}
		& (\T_{2,1}^{\U(2)} \cdots \T_{2,n}^{\U(2)} ) \T_{2,(i,j)}^{\bar{\mathcal{E}}} (\T_{2,1}^{\U(2)} \cdots \T_{2,n}^{\U(2)} ) \vert_{\mathrm{span}\{ u_0,u_1 \}^{\otimes n}} \\
		= & \T_{2,i}^{\U(2)} \T_{2,j}^{\U(2)} \T_{2,(i,j)}^{\bar{\mathcal{E}}} \T_{2,i}^{\U(2)} \T_{2,j}^{\U(2)} \vert_{\mathrm{span}\{ u_0,u_1 \}^{\otimes n}}.
	\end{aligned}
	\end{align}
	By Cauchy interlacing theorem~\cite{Horn2017}, we have 
	\begin{align}\label{eq:Cauchy}
		\begin{aligned}
			\lambda_2( \T_{2,K_n}^{\bar{\mathcal{E}}} ) 
			& \geq \lambda_2 \big( (\T_{2,1}^{\U(2)} \cdots \T_{2,n}^{\U(2)} ) \T_{2,K_n}^{\bar{\mathcal{E}}} (\T_{2,1}^{\U(2)} \cdots \T_{2,n}^{\U(2)} ) \big) \\
			& = \frac{2}{n(n-1)} \lambda_2 \big( \sum_{i,j}  \T_{2,i}^{\U(2)} \T_{2,j}^{\U(2)} \T_{2,(i,j)}^{\bar{\mathcal{E}}} \T_{2,i}^{\U(2)} \T_{2,j}^{\U(2)} \vert_{\mathrm{span}\{ u_0,u_1 \}^{\otimes n}}  \big).
		\end{aligned}
	\end{align} 
	By Fubini theorem,
	\begin{align}
		& (\T_{2,i}^{\text{U}(2)} \T_{2,j}^{\text{U}(2)}) \T_{2,(i,j)}^{\bar{\mathcal{E}}} (\T_{2,i}^{\text{U}(2)}  \T_{2,j}^{\text{U}(2)}) \\
		= & \int_{\bar{\mathcal{E}}} (A_i \otimes A_j)^{\otimes 2} \otimes (\bar{A}_i \otimes \bar{A}_j)^{\otimes 2} dA_i dA_j 
		\int U_{i,j}^{\otimes 2} \otimes \bar{U}_{i,j}^{\otimes 2} dU
		\int (B_i \otimes B_j)^{\otimes 2} \otimes (\bar{B}_i \otimes \bar{B}_j)^{\otimes 2} dB_i dB_j  \notag \\
		= & \int_{\bar{\mathcal{E}}} \Big( \int \Big( (A_i \otimes A_j) U_{i,j} (B_i \otimes B_j) \Big)^{\otimes 2} \otimes \Big( (\bar{A}_i \otimes \bar{A}_j) \bar{U}_{i,j} (\bar{B}_i \otimes \bar{B}_j) \Big)^{\otimes 2}  dA_i dA_j dB_i dB_j \Big) dU \notag \\
		\implies & \frac{2}{n(n-1)} \sum_{i < j}  \T_{2,i}^{\U(2)} \T_{2,j}^{\U(2)} \T_{2,(i,j)}^{\bar{\mathcal{E}}} \T_{2,i}^{\U(2)} \T_{2,j}^{\U(2)} \vert_{\mathrm{span}\{ u_0,u_1 \}^{\otimes n}}   
		= \int_{\bar{\mathcal{E}}} \T_{2,K_n}^{U} \vert_{\mathrm{span}\{ u_0,u_1 \}^{\otimes n}} dU
	\end{align}
	where $\T_{2,K_n}^{U}$ now denotes the second moment operator of a gadget defined by applying $U$ from the ensemble $\bar{\mathcal{E}}$ as its 2-local unitary to every edge of the complete graph $K_n$.
	
	Let $a(U)$ denote the parameter $a$ defined in Eq.~\eqref{eq:abc} determined by the KAK coefficients of $U$. It is well-defined because, by Proposition \ref{prop:Gadget}, equivalent KAK coefficients lead to the same second moment operator and the same parameter $a$. By Lemma \ref{lemma:iSWAP3}, there exists a normalized eigenvector $v_2$ such that
	\begin{align}
		\T_{2,K_n}^{U} \ket{v_2} = [  1 + (1- a(U)) \lambda_2(A \vert_{W_{(n)}}) ]  \ket{v_2}, \quad \T_{2,K_n}^{\iSWAP} v_2 = [  1 + (1- \frac{5}{9}) \lambda_2(A \vert_{W_{(n)}}) ]  \ket{v_2},
	\end{align}
	with $\lambda_2(A \vert_{W_{(n)}}) < 0$ and $a(U) \geq \frac{5}{9}$ by Lemma \ref{lemma:iSWAP2}. Together with Eq.~\eqref{eq:Symmetrize} and \eqref{eq:Cauchy}, we conclude that 
	\begin{align}
		\begin{aligned}
			\lambda_2( \T_{2,\mathcal{G}}^{\mathcal{E}} ) 
			\geq \lambda_2( \T_{2,K_n}^{\bar{\mathcal{E}}} ) 
			& \geq \frac{2}{n(n-1)} \lambda_2( \sum_{i,j}  \T_{2,i}^{\U(2)} \T_{2,j}^{\U(2)} \T_{2,(i,j)}^{\bar{\mathcal{E}}} \T_{2,i}^{\U(2)} \T_{2,j}^{\U(2)} \vert_{\mathrm{span}\{ u_0,u_1 \}^{\otimes n}}  ) \\
			& \geq \int_{\bar{\mathcal{E}}}  [1 + (1- a(U)) \lambda_2(A \vert_{W_{(n)}}) ] dU \geq \lambda_2( \T_{2,K_n}^{\iSWAP} ),
		\end{aligned}
	\end{align}
	where we used the same method as \eqref{eq:Symmetrize} in the last step.
\end{proof}


\subsection{Further comments and results on different circuit architectures}\label{sec:architectures}

Here we present miscellaneous discussion and additional results on various other important settings.  

\subsubsection{Different circuit graphs and finite-size effects}\label{sec:graphs}

\begin{proposition}\label{prop:Kn}
	For any fixed ensemble $\mathcal{E}$ of 2-local unitaries, the moment operator {for general $t$} defined on the complete graph $K_n$ achieves the largest spectral gap among all possible graphs. In particular,
	\begin{align}
		\lambda_2( \T_{t,P_n}^{\mathcal{E}} ) \geq \lambda_2( \T_{t,C_n}^{\mathcal{E}} )  \geq \lambda_2( \T_{t,K_n}^{\mathcal{E}} ),
	\end{align}
	where $P_n,C_n$ denote path (1D chain) and ring of $n$ vertices respectively.
\end{proposition}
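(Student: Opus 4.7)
The plan is to realize $\T_{t,K_n}^{\mathcal{E}}$ as a full symmetrization of $\T_{t,\mathcal{G}}^{\mathcal{E}}$ under the $S_n$-action permuting qubit labels, and likewise $\T_{t,C_n}^{\mathcal{E}}$ as a cyclic symmetrization of $\T_{t,P_n}^{\mathcal{E}}$, and then rerun the variational averaging step that appeared in the proof of Theorem~\ref{thm:iSWAP2} (cf.~Eq.~\eqref{eq:Symmetrize}). First I would establish the two symmetrization identities
\[
\T_{t,K_n}^{\mathcal{E}} \;=\; \frac{1}{n!}\sum_{\sigma \in S_n} \sigma \cdot \T_{t,\mathcal{G}}^{\mathcal{E}}, \qquad \T_{t,C_n}^{\mathcal{E}} \;=\; \frac{1}{n}\sum_{k=0}^{n-1} \tau_k \cdot \T_{t,P_n}^{\mathcal{E}},
\]
for any connected graph $\mathcal{G}$ on $n$ vertices, where $\sigma \cdot(\,\cdot\,)$ denotes conjugation by the qubit-permutation unitary (as in Theorem~\ref{thm:iSWAP2}) and $\tau_k$ is the cyclic shift by $k$ positions. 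Both are elementary double counts: each edge of $\mathcal{G}$ is mapped to a prescribed unordered pair by exactly $2(n-2)!$ permutations, so the left average places uniform weight $2/(n(n-1))$ on every pair of vertices; likewise the $n$ cyclic shifts of the $n-1$ edges of $P_n$ cover each edge of $C_n$ with multiplicity $n-1$, which reproduces $\T_{t,C_n}^{\mathcal{E}}$ after normalization.

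Next I would apply the variational characterization of $\lambda_2$. Because $S_n \subset \U(2^n)$, the commutant $\operatorname{Comm}_t(\U(2^n))$ is stable under every $\sigma$, and therefore lies in the unit eigenspace of both sides of each identity. Pick a normalized eigenvector $v$ of $\T_{t,K_n}^{\mathcal{E}}$ attaining $\lambda_2(\T_{t,K_n}^{\mathcal{E}})$, taken orthogonal to $\operatorname{Comm}_t(\U(2^n))$. Since $\sigma^{-1} v$ is also orthogonal to the commutant by $S_n$-invariance, the variational bound yields $\langle \sigma^{-1} v, \T_{t,\mathcal{G}}^{\mathcal{E}} \sigma^{-1} v\rangle \le \lambda_2(\T_{t,\mathcal{G}}^{\mathcal{E}})$ for every $\sigma$. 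Averaging over $\sigma \in S_n$ and using the first identity gives
\[
\lambda_2(\T_{t,K_n}^{\mathcal{E}}) \;=\; \frac{1}{n!}\sum_{\sigma \in S_n} \langle \sigma^{-1}v,\, \T_{t,\mathcal{G}}^{\mathcal{E}}\, \sigma^{-1}v\rangle \;\le\; \lambda_2(\T_{t,\mathcal{G}}^{\mathcal{E}}).
\]
Specializing to $\mathcal{G} = C_n$ produces the second inequality of the proposition, while the identical argument restricted to the cyclic subgroup $\mathbb{Z}_n \subset S_n$ together with the second symmetrization identity yields $\lambda_2(\T_{t,C_n}^{\mathcal{E}}) \le \lambda_2(\T_{t,P_n}^{\mathcal{E}})$. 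For $n$ large enough that $\lambda_2$ is the dominant non-unit eigenvalue (Lemma~\ref{lemma:Eigenvalues1}), this ordering of second-largest eigenvalues promotes to the claimed ordering of spectral gaps $\Delta(\T_{t,\cdot}^{\mathcal{E}}) = 1 - \lambda_2(\T_{t,\cdot}^{\mathcal{E}})$ by monotonicity, and in particular $\T_{t,K_n}^{\mathcal{E}}$ attains the largest gap among all connected circuit graphs.

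The main obstacle is the minor subtlety that the variational bound $\langle w, \T_{t,\mathcal{G}}^{\mathcal{E}} w\rangle \le \lambda_2(\T_{t,\mathcal{G}}^{\mathcal{E}})$ for $w \perp \operatorname{Comm}_t(\U(2^n))$ uses implicitly that the commutant \emph{coincides with}—rather than merely sits inside—the unit eigenspace of $\T_{t,\mathcal{G}}^{\mathcal{E}}$. When $\mathcal{E}$ fails to generate a $t$-design on the sparser graph $\mathcal{G}$, $\lambda_2(\T_{t,\mathcal{G}}^{\mathcal{E}})$ simply equals $1$ and the target inequality is vacuously satisfied; in the generic case one argues exactly as in the proof of Theorem~\ref{thm:iSWAP2} that both unit eigenspaces are precisely $\operatorname{Comm}_t(\U(2^n))$. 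Apart from this bookkeeping and the counting arguments behind the two identities, the proof is a direct convexity-style consequence of the symmetrization structure.
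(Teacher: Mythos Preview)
Your proposal is correct and follows essentially the same approach as the paper: realize the more symmetric graph's moment operator as a group average of the less symmetric one (full $S_n$ for $K_n$, cyclic $\mathbb{Z}_n$ for $C_n$ from $P_n$), then apply the variational bound for $\lambda_2$ term-by-term, exactly as in Eq.~\eqref{eq:Symmetrize}. The paper's proof is terser, omitting the edge-counting justification and the unit-eigenspace bookkeeping you spell out, but the core argument is identical.
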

\begin{proof}
	We only demonstrate that $\lambda_2( \T_{t,P_n}^{\mathcal{E}} ) \geq \lambda_2( \T_{t,C_n}^{\mathcal{E}} )$ as other cases follow immediately by almost the same trick that we used in deriving \eqref{eq:Symmetrize}. 
	
	Let $\sigma \in S_n$ such that $\sigma(i) = i+1$ with $\sigma(n) = 1$. Then
	\begin{align}
		\T_{t,C_n}^{\mathcal{E}} = \frac{1}{n} \sum_{i=1}^{n} \T_{t,i,i+1}^{\mathcal{E}} = \frac{1}{n} \sum_{p=0}^{n-1} \frac{1}{n-1} \sum_{i=1}^{n-1} \T_{t,\sigma^p(i),\sigma^p(i+1)}^{\mathcal{E}} = \frac{1}{n} \sum_{p=0}^{n-1} \sigma^p \cdot \T_{t,P_n}^{\mathcal{E}}.
	\end{align}
	Suppose $v$ is a normalized eigenvector corresponding to the second largest eigenvalue of $\T_{t,C_n}^{\mathcal{E}}$. Since $\T_{t,C_n}^{\mathcal{E}}, \sigma \cdot \T_{t,P_n}^{\mathcal{E}}$ share the same eigenspace of the largest eigenvalue,
	\begin{align}
		\lambda_2( \T_{t,C_n}^{\mathcal{E}} ) = \langle v, \T_{t,C_n}^{\mathcal{E}} v \rangle 
		= \langle v, \frac{1}{n} \sum_{p=0}^{n-1} \sigma \cdot \T_{t,P_n}^{\mathcal{E}} v \rangle 
		= \frac{1}{n} \sum_{p=0}^{n-1} \langle v, \sigma \cdot \T_{t,P_n}^{\mathcal{E}} v \rangle 
		\leq \lambda_2 (\T_{t,P_n}^{\mathcal{E}}),
	\end{align}
	which completes the proof.
\end{proof}

For 2-designs, we present our numerical computations on the spectral gaps relative to different gadgets on $n = 4,5,6,7$ qubits. We also compare their gaps on different topological structures including complete graph, rings and stars. As we can see, $\iSWAP$ achieves the largest gap on all of the graphs as expected. However, note that the ordering of gates in general can change for different graphs and system sizes. Also, the gap decreases as $n$ gets larger. For graphs with the same number of vertices, the complete graph has the largest gap, as it has the best connectivity.

\begin{table}[h]
	\centering
	\begin{tabular}{|c|c|c|c|c|c|c||c|}
		\hline
		& Graph  &  $\iSWAP$ &  $\CNOT$ & $\B$ & $\SQiSW$  & $\QFT$ & $\chi$ \\
		\hline
		\multirow{3}{*}{$n=4$} &  Complete &  {\bf 0.273634} &  0.270939 &  {\bf 0.273634} &  0.205226 &  0.136817 &  0.246271\\
		\cline{2-8}
		&  Star &  {\bf 0.217117} &  0.165037 &  0.199084 &  0.142935 &  0.125694 &  0.185534\\
		\cline{2-8}
		&  Ring &  {\bf 0.247555} &  0.213938 &  0.236944 &  0.173520 &  0.132149 &  0.217157\\
		\cline{2-8}
		\hline
		\multirow{3}{*}{$n=5$} &  Complete &  {\bf 0.203359} &  {\bf 0.203359} &  {\bf 0.203359} &  0.152519 &  0.101679 &  0.183023\\
		\cline{2-8}
		&  Star &  {\bf 0.162714} &  0.128211 &  0.150190 &  0.108219 &  0.093477 &  0.139581\\
		\cline{2-8}
		&  Ring &  {\bf 0.164517} &  0.129655 &  0.151815 &  0.109369 &  0.094177 &  0.141115\\
		\cline{2-8}
		\hline
		\multirow{3}{*}{$n=6$} &  Complete &  {\bf 0.166962} &  {\bf 0.166962} &  {\bf 0.166962} &  0.125222 &  0.083481 &  0.150266\\
		\cline{2-8}
		&  Star &  {\bf 0.134406} &  0.107004 &  0.124479 &  0.089850 &  0.076856 &  0.115526\\
		\cline{2-8}
		&  Ring &  {\bf 0.121396} &  0.090262 &  0.109099 &  0.077785 &  0.074041 &  0.102393\\
		\cline{2-8}
		\hline
		\multirow{3}{*}{$n=7$} &  Complete &  {\bf 0.145163} &  {\bf 0.145163} &  {\bf 0.145163} &  0.108872 &  0.072582 &  0.130647\\
		\cline{2-8}
		&  Star &  {\bf 0.117018} &  0.093400 &  0.108465 &  0.078325 &  0.066839 &  0.100630\\
		\cline{2-8}
		&  Ring &  {\bf 0.095746} &  0.068547 &  0.084448 &  0.059810 &  0.061690 &  0.079779\\
		\cline{2-8}
		\hline

	\end{tabular}
	\caption{Spectral gaps of stochastic circuits over different graphs on different numbers of qubits in the generation of 2-designs. {The largest gap for each graph has been boldfaced for clarity. The $\chi$ gate, which is one of the best gates for autoconvolution, has been listed separately on the right for clearer comparison.}}
	\label{tab:differentGraphs}
\end{table}

Before discussing other circuit architectures, we end up here with a quite straightforward estimation on the largest possible spectral gaps for stochastic circuits on graphs. Precisely, given any connected graph $\mathcal{G}$ over $n$ qubits and any ensemble $\mathcal{E}$ of 2-local unitaries, let $d_{\min}$ denote the minimal degree of the graph and let us denote the corresponding vertex by $n$. Then
\begin{align}\label{eq:TrivialBound}
	\begin{aligned}
		\lambda_2( \T_{t,\mathcal{G}}^{\mathcal{E}} ) 
		& = \frac{1}{\vert E(\mathcal{G}) \vert} \lambda_2( \sum_{(i,j) \in E(\mathcal{G}) \setminus N(n)} \T_{t,(i,j)}^{\mathcal{E}}  +  \sum_{(i,j) \in N(n)} \T_{t,(i,j)}^{\mathcal{E}}  ) \\
		& \geq \frac{ \vert E(\mathcal{G}) \setminus N(n) \vert - \vert N(n) \vert }{\vert E(\mathcal{G}) \vert}  
		= 1 - \frac{2 d_{\min}}{\vert E(\mathcal{G}) \vert},
	\end{aligned}
\end{align} 
where we trivially applied Weyl inequality and $\T_{t,(i,j)}^{\mathcal{E}}$ is allowed to possess negative eigenvalues. It is straightforward to see that RHS from above is universally lower bounded by $1 - \frac{4}{n}$ when $\mathcal{G}$ is taken to be the complete graphs. When $t = 2$, it can be further refined to $1 - \frac{4}{3n} - \frac{n}{\frac{1}{3}(4^n- 1)}$ by Eq.~\eqref{eq:Bound1} in proving Theorem \ref{thm:iSWAP2}. 

In any case, the largest possible spectral gap is $O(1/n)$  and it is proved in Refs.~\cite{Harrow2design2009,BHH16,mittal2023local} that the ensembles built by sampling 2-local Haar random unitaries on 1D chains $P_n$, rings $C_n$ as well as the complete graph $K_n$ achieve this scaling. As mentioned after Theorem \ref{thm:iSWAP1}, our results further demonstrate that ensembles of gadgets associated with different 2-local gates, like  $\iSWAP$, $\B$, $\CNOT$, $\SQSW$, $\SQiSW$, $\QFT$ and etc., can also reach the scaling on the aforementioned graphs.

\begin{remark}
	As a caveat, not all graph structure guarantees the  $O(1/n)$ scaling. Let us consider the graph $\mathcal{G}$ defined by
	\begin{align}
		E(\mathcal{G}) \vcentcolon = \{ (i,j); i < j \leq n-1  \} \cup \{(n-1,n)\}.
	\end{align}
	Since $d_{\min}$ of $\mathcal{G}$ equals $1$ and $\vert E(\mathcal{G}) \vert = \frac{(n-1)(n-2)}{2}+1$, by \eqref{eq:TrivialBound},
	\begin{align}
		\lambda_2( \T_{t,\mathcal{G}}^{\mathcal{E}} ) \geq 1 - \frac{4}{n^2 - 3n + 4}.
	\end{align}
	On the other hand, we can construct $\mathcal{G}$ from the simple 1D chain $P_n$ by adding $\frac{(n-1)(n-2)}{2} + 2 - n$ edges. Applying Weyl inequality again, we obtain
	\begin{align}
		& \lambda_2( \T_{t,\mathcal{G}}^{\mathcal{E}} ) 
		= \frac{1}{\vert E(\mathcal{G}) \vert}\lambda_2( \sum_{(i,j) \in E(P_n) } \T_{t,(i,j)}^{\mathcal{E}}  +  \sum_{(i,j) \in E(\mathcal{G}) \setminus E(P_n) } \T_{t,(i,j)}^{\mathcal{E}}  ) \\
		\leq & \frac{1}{\vert E(\mathcal{G}) \vert} \Big( \vert E(P_n) \vert \lambda_2(\T_{t,P_n}^{\mathcal{E}}) + \vert E(\mathcal{G}) \setminus E(P_n) \vert \Big) 
		= 1 - \frac{\vert E(P_n) \vert}{\vert E(\mathcal{G}) \vert} \Big( 1 - \lambda_2(\T_{t,P_n}^{\mathcal{E}})  \Big) = 1 - \Theta(n^{-2}), \notag
	\end{align}
	given the fact that $\Delta(\T_{t,P_n}^{\mathcal{E}}) = \Theta(1/n)$ by any ensemble used before. The lower and upper bounds together imply 
	\begin{align}
		\Delta(\T_{t,\mathcal{G}}^{\mathcal{E}}) = \Theta(n^{-2}),
	\end{align}
	which confirms that the spectral gap cannot be inverse linear. More comprehensive discussion on the influence of graph structures on the spectral gap can be found in e.g.~Refs.~\cite{harrow2023approximate,mittal2023local}. 
\end{remark}


\subsubsection{Brickwork model and whole-layer model}\label{sec:BW-WL}

The \emph{brickwork (or parallel) model} is also a commonly considered architecture in studies of quantum circuits \cite{BHH16,Haferkamp2021,harrow2023approximate,Schuster2024lowDepth}. It can be depicted in the following diagram:
\begin{align*}
	\begin{quantikz}
		& \measure[style={fill=red!20}]{} & \gate[2]{U} & \measure[style={fill=red!20}]{} \slice{} &  &  & \slice{} & \measure[style={fill=red!20}]{}  & \gate[2]{U} & \measure[style={fill=red!20}]{} & \\
		& \measure[style={fill=red!20}]{} &  & \measure[style={fill=red!20}]{} & \measure[style={fill=red!20}]{}  & \gate[2]{U} & \measure[style={fill=red!20}]{} &  \measure[style={fill=red!20}]{} &  & \measure[style={fill=red!20}]{} & \\
		& \measure[style={fill=red!20}]{} & \gate[2]{U} & \measure[style={fill=red!20}]{} &  \measure[style={fill=red!20}]{} &  & \measure[style={fill=red!20}]{} &  \measure[style={fill=red!20}]{} & \gate[2]{U} & \measure[style={fill=red!20}]{} &   \\
		& \measure[style={fill=red!20}]{} &  & \measure[style={fill=red!20}]{}  & \measure[style={fill=red!20}]{} & \gate[2]{U} & \measure[style={fill=red!20}]{}  &  \measure[style={fill=red!20}]{} & & \measure[style={fill=red!20}]{} & \\
		&  & & & \measure[style={fill=red!20}]{} &  & \measure[style={fill=red!20}]{} & & &  & 
	\end{quantikz}
\end{align*}
Formally, the corresponding random circuit can be defined in the following two ways:
\begin{enumerate}
	\item Sampling $U_{(1,2)}^{\Gadget} \otimes \cdots \otimes U_{(n-1,n)}^{\Gadget}$ or $U_{(2,3)}^{\Gadget} \otimes \cdots \otimes U_{(n-2,n-1)}^{\Gadget}$ with equal probability with $U_{(i,i+1)}^{\Gadget}$ being independently drawn from the gadget. The corresponding $t$-th moment operator is
	\begin{align}
		\T_{t,\mathrm{BW}_1}^{\Gadget} = \frac{1}{2} \Big( L_0 + L_1 \Big) = \frac{1}{2} \Big( \T_{t,(1,2)}^{\Gadget} \cdots 	\T_{t,(n-1,n)}^{\Gadget} + \T_{t,(2,3)}^{\Gadget} \cdots 	\T_{t,(n-2,n-1)}^{\Gadget} \Big).
	\end{align}
	
	\item Implementing a unitary $U_{(1,2)}^{\Gadget} \otimes \cdots \otimes U_{(n-1,n)}^{\Gadget}$ first and then a unitary $U_{(2,3)}^{\Gadget} \otimes \cdots \otimes U_{(n-2,n-1)}^{\Gadget}$ with equal probability, with $U_{(i,i+1)}^{\Gadget}$  independently drawn from the gadget. The corresponding $t$-th moment operator is
	\begin{align}
		\T_{t,\mathrm{BW}_2}^{\Gadget} = L_0 \cdot L_1 = \T_{t,(1,2)}^{\Gadget} \cdots \T_{t,(n-1,n)}^{\Gadget} \cdot \T_{t,(2,3)}^{\Gadget} \cdot \T_{t,(n-2,n-1)}^{\Gadget}.
	\end{align}
\end{enumerate}
When the local moment operators happen to be Haar projectors, e.g., when using gadgets of the $\chi$ gate family for $t = 2,3$, the spectral gap of the brickwork model is proved to be constant~\cite{BHH16,Haferkamp2021} using the so-called \emph{detectability lemma}~\cite{Detectability2009,Detectability2016}. For general gadgets, we provide numerical results. Since $L_1 \times L_0$ may not be Hermitian, the gap studied below corresponds to the operator $\sqrt{L_1} L_0 \sqrt{L_1}$. 

\begin{table}[h]
	\centering
	\begin{tabular}{|c|c|c|c|c|c|c|}
		\hline
		&  $\iSWAP$ &  $\CNOT$ & $\B$ & $\SQiSW$ & $\QFT$ & $\chi$ \\
		\hline
		$n=6$  &  0.476409 &  0.347822 &  \bf{0.530036} &  0.306094 &  0.139783 &  0.520000\\
		\hline
		$n=7$  &  0.487969 &  0.322265 &  \bf{0.490442} &  0.282637 &  0.151591 &  0.480483\\
		\hline
		$n=8$  &  0.449789 &  0.302935 &  \bf{0.463302} &  0.266671 &  0.118617 &  0.453726\\
		\hline
		$n=9$  &  \bf{0.457960} &  0.290412 &  0.444241 &  0.255372 &  0.130793 &  0.434866\\
		\hline
		$n=10$ &  \bf{0.434575} &  0.280106 &  0.429746 &  0.247112 &  0.109966 &  0.421115\\
		\hline    
	\end{tabular}
	\caption{Spectral gaps of brickwork circuits on different numbers of qubits in the generation of 2-designs.}
	\label{tab:brickwork}
\end{table}

We pay particular attention to the performances of gadgets of $\B,\iSWAP$ and $\chi$ gates. As explained in Section \ref{sec:autoconvolution}, $T_{2,(i,j)}^\chi$ and the 2-local Haar projector $T_{2,(i,j)}^{\U(4)}$ are identical. For $n=6,7,8$, $B$ could be best choice in constructing an efficient brickwork circuit; for the smallest $n=6$, the sepctral gap corresponding to $\iSWAP$ is even worse than that of $\chi$. However for $n\geq 9$, $\iSWAP$ becomes the best, which serves as an evidence that Theorem \ref{thm:iSWAP2} also holds for brickwork models. We formally state this conjecture in Section \ref{sec:conjectures}.


There are recent works that generalize the brickwork model by using $D$-dimensional lattices~\cite{harrow2023approximate} or employing small random circuits~\cite{Schuster2024lowDepth} in order to generate designs in lower circuit depths. Taking both the required circuit depth and number of gates applied in each layer into account, we list a very rough comparison among these architectures in the following table for references.

\begin{table}[h]
	\centering
	\begin{tabular}{|c|c|c|c|}
		\hline
		&  Depth & Gates in each layer & Total cost \\
		\hline
		Graph circuits & $O(n^2)$ & $1$ & $O(n^2)$ \\
		\hline
		Brickwork circuits & $O(n)$ & $O(n)$ & $O(n^2)$ \\
            \hline
		Circuits on $D$-dimensional lattices & $O(n^{1/D})$ & $O(n)$ & $O(n^{1+1/D})$ \\
		\hline
		Circuits based on small random ensembles & $O(\log n)$ & $O(n)$ & $O(n \log n)$ \\
		\hline
	\end{tabular}
	\caption{Best known circuit depths required for different circuit architectures in forming $\epsilon$-approximate $t$-designs. The dependence on $\epsilon$ and $t$ are omitted here. }
	\label{tab:architectures}
\end{table}

\comments{
\begin{figure}[!ht]
 \centering
 \includegraphics[width=0.25\textwidth]{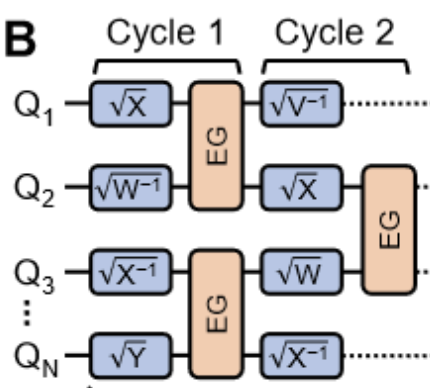}
 \label{fig:brick}
\end{figure}

}


Another natural variant is the \emph{whole-layer model}, in which a full layer of Haar random 1-qubit gates is applied to separate 2-qubit gates as illustrated in the following figure.  This type of circuits  has been used in experimental studies such as Ref.~\cite{decross2024}. 
\begin{align*}
	\begin{quantikz}
		& \measure[style={fill=red!20}]{} & \gate[2]{U} & & & & \measure[style={fill=red!20}]{} & \\
		& \measure[style={fill=red!20}]{} &  &  & \gate[2]{U} & & \measure[style={fill=red!20}]{} & \\
		& \measure[style={fill=red!20}]{} &  & \gate[2]{U} & &  & \measure[style={fill=red!20}]{} & \\
		& \measure[style={fill=red!20}]{} &  &  &  & \gate[2]{U} & \measure[style={fill=red!20}]{} & \\
		& \measure[style={fill=red!20}]{} &  &  &  & & \measure[style={fill=red!20}]{} & 
	\end{quantikz}
\end{align*}
Mathematically, the corresponding moment operator is 
\begin{align}
	\begin{aligned}
		\T_{t,\mathcal{G},\mathrm{WL}}^{U} = & (\T_{t,1}^{\U(2)} \cdots \T_{t,n}^{\U(2)} ) \Big( \frac{1}{2 \vert E(\mathcal{G}) \vert} \sum_{(i,j) \in E(\mathcal{G}) } ( U_{i,j}^{\otimes t} \otimes \bar{U}_{i,j}^{\otimes t} + U_{i,j}^{\dagger \otimes t} \otimes \bar{U}_{i,j}^{\dagger \otimes t} ) \Big)^4 (\T_{t,1}^{\U(2)} \cdots \T_{t,n}^{\U(2)} ) \\
		= & (\T_{t,1}^{\U(2)} \cdots \T_{t,n}^{\U(2)} ) T^4 	(\T_{t,1}^{\U(2)} \cdots \T_{t,n}^{\U(2)} ),
	\end{aligned}
\end{align}
where we add $U^\dagger$ to ensure that the moment operator is Hermitian. It is applied $4$ times in the above diagram. 

Pertinently, we can also consider the \emph{whole-layer gadget model} sketched as follows (already defined and used in proving Theorem \ref{thm:iSWAP2}):
\begin{align*}
	\begin{quantikz}
		& \measure[style={fill=red!20}]{} & \gate[2]{U} & \measure[style={fill=red!20}]{} & & \measure[style={fill=red!20}]{} & & \measure[style={fill=red!20}]{} & & \measure[style={fill=red!20}]{} & \\
		& \measure[style={fill=red!20}]{} &  & \measure[style={fill=red!20}]{} & & \measure[style={fill=red!20}]{} & \gate[2]{U} &  \measure[style={fill=red!20}]{} & & \measure[style={fill=red!20}]{} & \\
		& \measure[style={fill=red!20}]{} &  & \measure[style={fill=red!20}]{} & \gate[2]{U} & \measure[style={fill=red!20}]{} & & \measure[style={fill=red!20}]{} & & \measure[style={fill=red!20}]{} & \\
		& \measure[style={fill=red!20}]{} &  & \measure[style={fill=red!20}]{} &  & \measure[style={fill=red!20}]{} & & \measure[style={fill=red!20}]{} & \gate[2]{U} & \measure[style={fill=red!20}]{} & \\
		& \measure[style={fill=red!20}]{} &  & \measure[style={fill=red!20}]{} &  & \measure[style={fill=red!20}]{} & & \measure[style={fill=red!20}]{} & & \measure[style={fill=red!20}]{} & 
	\end{quantikz}	
\end{align*}
The corresponding moment operator is
\begin{align}
	\T_{t,\mathcal{G},\mathrm{WL}}^{\Gadget} = (\T_{t,1}^{\U(2)} \cdots \T_{t,n}^{\U(2)} ) \Big( \frac{1}{2 \vert E(\mathcal{G}) \vert} \sum_{(i,j) \in E(\mathcal{G}) }  U_{i,j}^{\otimes t} \otimes \bar{U}_{i,j}^{\otimes t} + U_{i,j}^{\dagger \otimes t} \otimes \bar{U}_{i,j}^{\dagger \otimes t} \Big) (\T_{t,1}^{\U(2)} \cdots \T_{t,n}^{\U(2)} ). 
\end{align}

As a simple comparison, we can show that
\begin{align}\label{eq:WLmodel}
	\lambda_t( \T_{t,\mathcal{G},\mathrm{WL}}^{U} ) \geq \lambda_t(\T_{2,\mathcal{G},\mathrm{WL}}^{\Gadget} )
\end{align}
based on the simple intuition that the gadget model uses more 1-qubit Haar random layer. Together with the fact that $\lambda_2(\T_{2,\mathcal{G},\mathrm{WL}}^{\Gadget} ) = \lambda_2(\T_{2,\mathcal{G}}^{\Gadget} )$ verified in proving Theorem \ref{thm:iSWAP1} and Lemma \ref{lemma:Eigenvalues2}, we have
\begin{align}
	 \Delta( \T_{t,\mathcal{G},\mathrm{WL}}^{U} ) \leq \Delta(\T_{2,\mathcal{G},\mathrm{WL}}^{\Gadget} ) = \Delta(\T_{2,\mathcal{G}}^{\Gadget} ). 
\end{align}
In other words, replacing the whole-layer model by the ordinary gadget model does not decrease its convergence speed at least in the formation of 2-designs. A rigorous derivation goes like follows: let
\begin{align}
	P =  (\T_{t,1}^{\U(2)} \cdots \T_{t,n}^{\U(2)} ), \quad T =  \frac{1}{2 \vert E(\mathcal{G}) \vert} \sum_{(i,j) \in E(\mathcal{G}) } ( U_{i,j}^{\otimes t} \otimes \bar{U}_{i,j}^{\otimes t} + U_{i,j}^{\dagger \otimes t} \otimes \bar{U}_{i,j}^{\dagger \otimes t} ).
\end{align}
to simplify the notation. Since $P^2 = P$, we have
\begin{align}
	P T^2 P - (P T P)^2 = P T^2 P - P T P T P = PT (I - P) TP \geq 0
\end{align}
in the sense of positive semidefiniteness. By the same reason,
\begin{align}
	P T^4 P - (P T^2 P)^2 = P T^4 P - P T^2 P T^2 P =  PT^2 (I - P) T^2 P \geq 0 
\end{align}
The first inequality implies that $\lambda_i( (P T^2 P) ) \geq \lambda_i( (P T P)^2 )$ and hence $\lambda_i( (P T^2 P)^2 ) \geq \lambda_i( (P T P)^4 )$. Together with the second one, we have 
\begin{align}
	\lambda_i( P T^4 P ) \geq \lambda_i( (P T^2 P)^2 ) \geq \lambda_i( (P T P)^4 )
\end{align}
which yields \eqref{eq:WLmodel}. 


\subsubsection{Heuristic results for higher designs}\label{sec:3-4-designs}

It is well known that different moments of the probability distribution reveal different characteristics. Conceivably, although using $\iSWAP$ is optimal for 2-designs, it is not clear whether it is still the case for general $t$. Here we provide some numerical results on eigenvalues of 3- and 4-th moment operators of typical gadgets on 2 qubits: 

\begin{table}[H]
	\centering
	\begin{tabular}{|c|c|c|c|c|c|c|c|c|}
		\hline
		&  & $\iSWAP$ &  $\CNOT$ & $\B$ & $\SQiSW$ & $\SQSW$ & $\QFT$ & $\chi$ \\
		\hline
		\multirow{2}{*}{$t = 3$} &  $\lambda_2$ & $\frac{1}{3}$ & $\frac{1}{3}$ & $0$ & $\frac{1}{3}$ & $\frac{1}{6}$ & $\frac{4}{9}$ &  0 \\
		\cline{2-9}
		& $\lambda_{\min}$ & $-\frac{1}{3}$ & $-\frac{1}{3}$ & $-\frac{1}{9}$ & $\frac{1}{6}$ & $0$ & $-\frac{2}{3}$ &  0 \\
		\hline
		\multirow{2}{*}{$t = 4$} & $\lambda_2$ & $\frac{1}{3}$ & $\frac{1}{3}$ & $\frac{1}{18}$ & $\frac{1}{3}$ & $\frac{29}{120}$ & $\frac{4}{9}$ & $0.0144$ \\
		\cline{2-9}
		& $\lambda_{\min}$ & $-\frac{1}{3}$ & $-\frac{1}{3}$ & $-\frac{1}{9}$ & $-\frac{1}{15}$ & $-\frac{1}{3}$ & $-\frac{2}{3}$ & -0.12 \\
		\hline
	\end{tabular}
	\caption{The second largest and smallest eigenvalues of $T_t^{\Gadget}$ with $t = 3,4$ of typical gates families.}
	\label{tab:3/4designs}
\end{table}
  
Based on our previous study of 2-designs, for faster convergence to $t$-designs, it would be desirable to employ gadgets whose 2-local moment operators have small or even negative second eigenvalues. Accordingly, it looks like that the $\B$ gate is a better choice in the formation of 3-designs. While when $t = 4$, the $\chi$ gate would be more better. Due to the scope of this paper, we leave the rigorous analysis on efficient choices of gadgets converging to higher order designs as a future research opportunity. 


\subsection{Robustness guarantees under gate perturbation}\label{sec:Perturbation}

Now we analyze the influence of perturbations of gate coefficients, establishing continuity bounds and robustness guarantees for the spectral gap and convergence efficiency. An evident practical motivation is that deviations are inevitable when implementing the gates in reality, so it would be desirable to have such robustness guarantees which ensure that the efficiency results are ``stable'' under small deviations. Furthermore, note that the bounds only depend on the closeness of KAK coefficients, meaning that they are blind to single-qubit gates and thus apply more broadly than small deviations of the 2-qubit gate. 

For KAK coefficients $k_x,k_y,k_z$, we consider the perturbed version 
\begin{align}
	\tilde{k}_x \in [k_x - \delta,  k_x + \delta], \label{eq:PerturbRange} \\
	\tilde{k}_y \in [k_y - \delta,  k_y + \delta], \\
	\tilde{k}_z \in [k_z - \delta,  k_z + \delta],
\end{align} 
where $\delta > 0$ stipulate the  range of perturbations. Given any connected graph $\mathcal{G}$, we also assume that on different edges of $\mathcal{G}$, there can be different deviations from the desired KAK coefficients $k_x,k_y,k_z$. 

Let $\T_{t,\mathcal{G}}^{\Gadget}$, $\tilde{T}_{t,\mathcal{G}}^{\Gadget}$ be the original $t$-th moment operator and the one obtained after perturbation, respectively. For $t$-design, recall that in Section \ref{sec:foundation}, we demonstrated that any 2-local term $\T_{t,(i,j)}^{\Gadget}$ and $T_t^{\Gadget}$ can be determined by its action on
\begin{align}
	\mathrm{span}\{ u_{0,i},...,u_{D-1,i} \} \otimes \mathrm{span}\{ u_{0,j},...,u_{D-1,j} \} \tag{22$^\ast$}
\end{align}
where $D$ is the dimension of commutant as defined in Eq.~\eqref{eq:1qDimension} as the tensor products of bases 1-qubit Haar projectors $\T_{t,i}^{\U(1)}$ and $\T_{t,j}^{\U(1)}$. Like Eq.~\eqref{eq:MatrixEntries}, the matrix representing $\T_{t,(i,j)}^{\Gadget}$ under this basis can be obtained by computing its entries:
\begin{align}\label{eq:MatrixEntries2}
	\frac{1}{2} \Big[ \tr \Big( (u_{s_1,i} \otimes u_{r_1,j})^\dagger U_{i,j}^{\otimes t} (u_{s_2,i} \otimes u_{r_2,j}) U_{i,j}^{\dagger \otimes t} \Big) 
	+ \tr \Big( (u_{s_1,i} \otimes u_{r_1,j})^\dagger U_{i,j}^{\dagger \otimes t} (u_{s_2,i} \otimes u_{r_2,j}) U_{i,j}^{\otimes t } \Big) \Big].
\end{align} 
Since $U_{i,j}$ is determined by the KAK coefficients, so are these matrix entries. Unlike the case of 2-designs however, it is far more complicated to read off the entries explicitly as in Eq.~\eqref{eq:Gadget4x4}. 

Consequently, it is difficult to compare the spectral gaps of $\T_{t,\mathcal{G}}^{\Gadget}$ and $\tilde{\T}_{t,\mathcal{G}}^{\Gadget}$ analytically. Here we only intend to evaluate the order of the spectral gap perturbation under fluctuations of the KAK coefficients. To this end, we transform $\T_{t,(i,j)}^{\Gadget}$ and $\T_{t,\mathcal{G}}^{\Gadget}$ into transition matrices and employ Markov chain methods discussed in Section \ref{sec:Markov}. We provide an incomplete attempt in the following which motivates the final proof given in Theorem~\ref{prop:Perturb}. To begin with, we first take a collection of independent permutations, denoted by $\sigma_0 = I,...,\sigma_{D-1} \in \text{End}((\mathbb{C}^2)^{\otimes t})$. Then we apply Gram--Schmidt process to get an orthonormal basis $v_0,...,v_{D-1}$ such that
\begin{align}
	(\sigma_0,...,\sigma_{D-1}) = (v_0,...,v_{D-1}) G
\end{align}
with $G$ the transformation matrix. Let $g \neq 0$ be the norm of the last column vector of $G$. Then we find an orthogonal matrix $O$ which rotates that column vector to $\frac{g}{\sqrt{D}}(1,...,1)^T$. Let $A = OG$. It is an invertible matrix and by definition, 
\begin{align}
	(\sigma_0,...,\sigma_{D-1}) = (v_0,...,v_{D-1}) O^T A = (v_0,...,v_{D-1}) O^T \begin{pmatrix} \ast & \cdots & \ast & \frac{g}{\sqrt{D}} \\ \vdots & \ddots & \vdots & \vdots \\ \ast & \cdots & \ast & \frac{g}{\sqrt{D}} \end{pmatrix}.
\end{align}
We set
\begin{align}
	(u_0,...,u_{D-1}) = (v_0,...,v_{D-1}) O^T.
\end{align}
Obviously, $\{u_0,...,u_{D-1}\}$ is still orthonormal. More importantly, their sum $\frac{g}{\sqrt{D}}\sum_{r=0}^{D-1} u_r = \sigma_{D-1}$. Since $ \sigma_{D-1} \otimes \sigma_{D-1}$ is a permutation on $((\mathbb{C}^2)^{\otimes 2})^{\otimes t}$, we have
\begin{align}
	\begin{aligned}
		\T_{t,(i,j)}^{\Gadget}( \sum_{r,s} u_{r,i} \otimes u_{s,j} )
		& = \T_{t,(i,j)}^{\Gadget}\Big( (\sum_{r=0}^{D-1} u_{r,i}) \otimes (\sum_{s=0}^{D-1} u_{s,j}) \Big)
		= \T_{t,(i,j)}^{\Gadget}( \frac{D}{g^2} \sigma_{D-1} \otimes \sigma_{D-1}  ) \\
		& = \frac{D}{g^2} \sigma_{D-1} \otimes \sigma_{D-1} = \sum_{r,s} u_{r,i} \otimes u_{s,j}
	\end{aligned}
\end{align}  
which implies that the matrix representing $\T_{t,(i,j)}^{\Gadget}$ has entries whose row sums are identical to $1$. This fact also holds for $\tilde{\T}_{t,(i,j)}^{\Gadget}$ because they share the same unit eigenspace. However, there is no guarantee that the matrix entries, at least those of $\tilde{\T}_{t,(i,j)}^{\Gadget}$, are nonnegative, which is necessary for the applicability of Markov chain techniques. We will deal with the issue in the following theorem. 

\begin{theorem}\label{prop:Perturb}
	Given perturbation range $\delta$ of the KAK coefficients as defined in \eqref{eq:PerturbRange}, we have
	\begin{align}
		(1 - \epsilon) \Delta( \T_{t,\mathcal{G}}^{\Gadget} ) \leq \Delta( \tilde{\T}_{t,\mathcal{G}}^{\Gadget} ),
	\end{align}
	where $\epsilon = O(\delta)$ and also depends on the order $t$, but independent of the number $n$ of system size and the graph structure $\mathcal{G}$.
\end{theorem}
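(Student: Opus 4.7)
The plan is to reduce the statement to a multiplicative operator inequality at the 2-local level and then lift it to the full circuit by linearity. Specifically, I aim to establish the pair of inequalities
\begin{align*}
(I - \tilde{T}_t^{\Gadget}) \geq (1-\epsilon)(I - T_t^{\Gadget}), \qquad (I + \tilde{T}_t^{\Gadget}) \geq (1-\epsilon)(I + T_t^{\Gadget})
\end{align*}
on the 2-qubit $t$-fold space (restricted to the image of the 1-qubit Haar projectors), with $\epsilon = O(\delta)$ depending only on $t$ and on the unperturbed gate. Because $\T_{t,(i,j)}^{\Gadget}$ factorizes as a 2-local operator tensored with the identity on the remaining $n-2$ sites, the inequalities extend unchanged to the full Hilbert space, and they are preserved by the convex combination $\T_{t,\mathcal{G}}^{\Gadget} = \tfrac{1}{|E(\mathcal{G})|}\sum_{(i,j)}\T_{t,(i,j)}^{\Gadget}$---even when the perturbation varies edge by edge, provided $\delta$ bounds the worst case. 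Translating via $\Delta = 1 - \max\{\lambda_2,|\lambda_{\min}|\}$ then yields $\tilde{\Delta} \geq (1-\epsilon)\Delta$.

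To prove the 2-local inequality I would first bound the entrywise perturbation. By Eq.~\eqref{eq:MatrixEntries2}, each matrix entry of $T_t^{\Gadget}$ in the tensor basis $u_{r,i}\otimes u_{s,j}$ is a smooth function of $k_x,k_y,k_z$ built from trigonometric products such as $\cos(4k_\alpha)\cos(4k_\beta)$ arising in the expansion of $U^{\otimes t}\otimes\bar U^{\otimes t}$. The partial derivatives of these entries are bounded by a constant $C_t$ depending only on $t$ (through the commutant dimension $D$), so a mean-value estimate yields $\|T_t^{\Gadget} - \tilde T_t^{\Gadget}\|_{\mathrm{op}} \leq C_t\delta$ uniformly in $n$ and $\mathcal{G}$. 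Next I would verify that $T_t^{\Gadget}$ and $\tilde{T}_t^{\Gadget}$ share the same support and unit eigenspace: both are flanked by the identical 1-qubit Haar projectors, and the structural unit eigenvectors---permutations $\sigma \in S_t$ acting on the two-qubit Hilbert space, which commute with $U^{\otimes t}$ for every $U \in \U(4)$---span $\text{Comm}_t(\U(4))$ regardless of the KAK coefficients (consistent with the two unit eigenvalues recorded for every non-$\SWAP$ gate in Table~\ref{tab:Gadgets} at $t=2$). On the orthogonal complement of this common unit eigenspace within the common support, both $(I - T_t^{\Gadget})$ and $(I + T_t^{\Gadget})$ are bounded below by a positive constant $\Delta_0$, the 2-local gap of the unperturbed gate (strictly positive for all non-$\SWAP$ gates; for $t\geq 3$ the Hermitian variant of Eq.~\eqref{eq:GagdetOperator2} must be used). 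Combining the entrywise bound with this lower bound gives $\|\Pi(T - \tilde T)\Pi\|_{\mathrm{op}} \leq C_t\delta \leq (C_t/\Delta_0)\delta \cdot \Pi(I \pm T)\Pi$, yielding both inequalities with $\epsilon = O(\delta)$.

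The main obstacle is securing the $n$-independence of $\epsilon$. A naive Weyl-inequality bound applied at the full-system level would give only $\tilde\Delta_{\mathcal{G}} \geq \Delta_{\mathcal{G}} - O(\delta)$, which is useless once $\Delta_{\mathcal{G}} = \Theta(1/n)$ becomes comparable to the perturbation. The whole point of working with a multiplicative 2-local inequality is that its $(1-\epsilon)$ factor is governed by the \emph{constant} 2-local gap $\Delta_0$ rather than the vanishing global gap, and that this factor survives the tensor extension to $n$ qubits and the convex average over edges of $\mathcal{G}$ without further degradation. A secondary technical step is verifying carefully that the support and the unit eigenspace are genuinely shared by $T$ and $\tilde T$ so that the operator inequality is well-posed on the entire relevant subspace; this is straightforward for $t=2$ from the explicit form in Eq.~\eqref{eq:Gadget4x4}, but at higher $t$ requires a slightly more delicate representation-theoretic argument to identify the structural eigenvectors, since the commutant is larger and the non-Hermiticity of the bare moment operator forces a symmetrization that must be shown to commute with the perturbation.
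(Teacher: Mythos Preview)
Your approach is correct and shares the paper's overall architecture---establish a multiplicative operator inequality $(I-\tilde T_{(i,j)})\geq(1-\epsilon)(I-T_{(i,j)})$ at the 2-local level, then tensor with identity and average over edges of $\mathcal G$---but the mechanism for obtaining the 2-local bound is genuinely different. The paper takes a Markov-chain route: it shifts to $P=\tfrac12(T+I)$, invokes an external result (Ref.~\cite{stochastic1984}) to find a basis in which $\tilde P$ is symmetric doubly stochastic, and then uses the Dirichlet-form identity $\langle(I-P)v,v\rangle=\tfrac12\sum_{r\neq s}P(r,s)(v_r-v_s)^2$ together with an entrywise ratio to get the comparison. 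Your argument is more direct and self-contained: you bound $\|T-\tilde T\|\le C_t\delta$ by smooth dependence of the matrix entries on the KAK angles, observe that on the complement $\Pi$ of the common unit eigenspace $\operatorname{Comm}_t(\U(4))$ one has $\Pi(I\pm T)\Pi\ge\Delta_0\Pi$ with $\Delta_0$ the 2-local gap, and conclude $\epsilon=O(\delta/\Delta_0)$. What your route buys is (i) no appeal to the stochastic-matrix literature, and (ii) an explicit treatment of the $(I+T)$ side, which cleanly controls the $|\lambda_{\min}|$ branch of the gap definition---something the paper's single Dirichlet-form inequality does not address on its own. The paper's route, on the other hand, yields an explicit formula for $\epsilon$ in terms of entry ratios that can be sharpened to $O(\delta^2)$ at stationary points of the KAK parameters (as done after the theorem for $\iSWAP$ and $\CNOT$); your spectral bound $\epsilon=C_t\delta/\Delta_0$ is coarser in that respect. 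Both approaches correctly keep $\epsilon$ independent of $n$ and $\mathcal G$ by anchoring the comparison to the constant 2-local gap rather than the $\Theta(1/n)$ global one.
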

\begin{proof}
	Given any number $n$ of qubits and any graph structure $\mathcal{G}$, for a fixed $t$, let
	\begin{align}
		P = \frac{1}{2}( \T_{t,(i,j)}^{\Gadget} + I), \quad \tilde{P} = \frac{1}{2}( \tilde{\T}_{t,(i,j)}^{\Gadget} + I ).
	\end{align}
	Obviously, the eigenvalues of $\tilde{P}$ are bounded within $(0,1]$ as long as the gadget is able to form approximate $t$-designs. Then it is proved in Ref.\cite{stochastic1984} that there exists a symmetric doubly stochastic matrix with this prescribed positive spectrum. That is, there is a basis under which $\tilde{P}$ is symmetric and doubly stochastic.
	
	Since $P, \tilde{P}$ shares the same unit eigenspace, each row or column sum of $P$ is also equal to one. Entries of $P$ under this basis are real and bounded within $[-1,1]$. The reason is, $P,\tilde{P}$ are all real and symmetric under standard matrix units spanning $\text{End}((\mathbb{C}^2)^{\otimes t})$. Since the matrix transforming $\tilde{P}$ into a symmetric and doubly stochastic can be real. Applying it to $\tilde{P}$, we can see the first fact. Since eigenvalues of $P$ are also bounded within $(0,1]$, these entries cannot be outside the interval. As a caveat, only matrix entries of $\tilde{P}$ are ensured to be nonnegative and this is sufficient in the following argument.
	
	Let $\T_{t,(i,j)}^{\Gadget}(r,s)$ denote the $(r,s)$-entry of $\T_{t,(i,j)}^{\Gadget}$, as analogous to parameters $a,b,c$ defined in Eq~.\eqref{eq:abc}. Then we consider
	\begin{align}
		\frac{1}{1 - \epsilon} = \max_{\tilde{k}_x, \tilde{k}_y, \tilde{k}_z, } \Big\{ \frac{1 + \T_{t,(i,j)}^{\Gadget}(r,s)}{1 + \tilde{\T}_{t,(i,j)}^{\Gadget}(r,s)}, \ 1 \Big\}
	\end{align}
	Since each row or column sum of either $P$ or $\tilde{P}$ is equal to one and since entries of $\tilde{P}$ are all nonnegative,
	\begin{align}
		\langle (I-P)v,v \rangle = \frac{1}{2}\sum_{r \neq s} P(r,s)(v_r - v_s)^2 
		\leq \frac{1}{1-\epsilon} \langle (I-\tilde{P})v,v \rangle = \frac{1}{1-\epsilon} \frac{1}{2}\sum_{r \neq s} \tilde{P}(r,s)(v_r - v_s)^2. 
	\end{align}
	In the sense of positive semidefiniteness, this implies that 
	\begin{align}\label{eq:Perturb-PSD}
		& I - P \leq \frac{1}{1-\epsilon} (I - \tilde{P}) \Leftrightarrow
		I - \T_{t,(i,j)}^{\Gadget} \leq \frac{1}{1-\epsilon} (I - \tilde{\T}_{t,(i,j)}^{\Gadget}) \\
		\implies & I - \T_{t,(i,j)}^{\Gadget} = \frac{1}{\vert E(\mathcal{G}) \vert} \sum_{(i,j) \in E(\mathcal{G})}  (I - \T_{t,(i,j)}^{\Gadget}) \notag \\
		& \hspace{15mm} \leq \frac{1}{1-\epsilon} \frac{1}{\vert E(\mathcal{G}) \vert} \sum_{(i,j) \in E(\mathcal{G})}  (I - \tilde{\T}_{t,(i,j)}^{\Gadget}) = \frac{1}{1-\epsilon} (I - \tilde{\T}_{t,\mathcal{G}}^{\Gadget}).
	\end{align}
	Therefore,
	\begin{align}
		(1 - \epsilon) \Delta( \T_{t,\mathcal{G}}^{\Gadget} ) \leq \Delta( \tilde{\T}_{t,\mathcal{G}}^{\Gadget} ).
	\end{align}
	By definition, $\epsilon$ is given by the perturbation of matrix entries $\tilde{T}_{t,(i,j)}^{\Gadget})$. It is evaluated locally on 2 qubits and hence independent of the system size $n$, the choice of the edge $(i,j)$ and the graph $\mathcal{G}$. It depends on $t$ and can be written as $O(\delta)$ after taking Taylor expansions.
\end{proof}

The core inequality \eqref{eq:Perturb-PSD} proved in Theorem~\ref{prop:Perturb} refines \eqref{eq:Ensemble-PSD} for cases of gadgets since we are able to compute the constants $\epsilon$ here as long as we have an explicit basis. Specifically, in the case of 2-design, let the parameters defined in Eq.~\eqref{eq:abc} be denoted as $\tilde{a}, \tilde{b}, \tilde{c}$ after perturbation. We can obtain a more refined analysis on the scaling of $\epsilon$ through explicit expansions of $a,b,c$. To be precise, let
\begin{align}
	\frac{1}{1 - \epsilon} = \max_{	\tilde{k}_x, \tilde{k}_y, \tilde{k}_z } \Big\{ b/\tilde{b}, \ c/\tilde{c}, \ 1 \Big\}.
\end{align}
By Eq.~\eqref{eq:abc}, the Taylor expansion formula implies that
\begin{align}
	\tilde{b} = & b + \frac{2}{9}\Big( \sin(4 k_x) (\cos(4 k_y) + \cos(4 k_z) ) \delta k_x  \\
	& + \sin(4 k_y) ( \cos(4 k_x) + \cos (4 k_z)) \delta k_y + 
	\sin(4 k_z) (\cos(4 k_x) + \cos (4 k_y)) \delta k_z \Big) + O(\delta^2), \notag  \\
	\tilde{c} = & c + \frac{1}{12}\Big( \sin(4 k_x) (2 - \cos(4 k_y) - \cos(4 k_z) )  \delta k_x  \\
	& + \sin(4 k_y) (2 - \cos(4 k_x) - \cos (4 k_z)) \delta k_y + \sin(4 k_z) (2 - \cos(4 k_x) - \cos (4 k_y)) \delta k_z \Big) + O(\delta^2). \notag
\end{align}
For the special cases when $(k_x,k_y,k_z) = (\frac{\pi}{4}, \frac{\pi}{4}, 0)$ corresponding to $\iSWAP$ and $(k_x,k_y,k_z) = (\frac{\pi}{4}, 0, 0)$ corresponding to $\CNOT$, the total derivatives of $b$ and $c$ vanish and hence $\epsilon = O(\delta^2)$.

We further make a plot to show how the gap changes with the KAK coefficients. We vary the value of $k_z$ from $0$ to $\frac{\pi}{4}$, with a step size of $\frac{\pi}{40}$. For each value of $k_z$, we plot the gap and optimal probability for different value of $k_x$ and $k_y$. From Fig.~\ref{fig:kak-gap} we observe that the gap changes smoothly as a function of the KAK parameters, which gives evidence for the robustness under gate perturbation.

\begin{figure}[h]
	\centering
	\includegraphics[width=\textwidth]{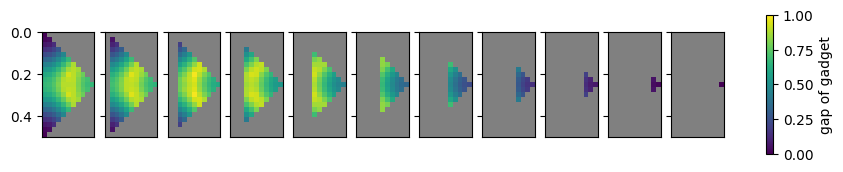}
	\includegraphics[width=\textwidth]{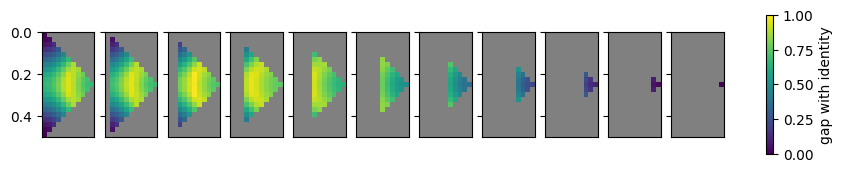}
    \includegraphics[width=\textwidth]{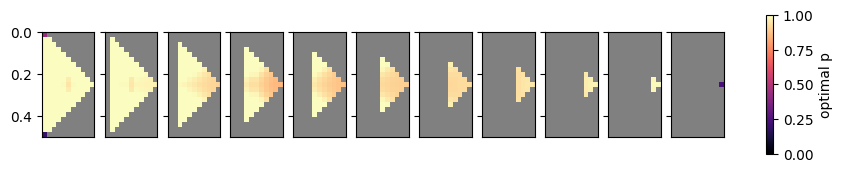}
	\caption{{The gap of $T_t^{\Gadget}$ at $t=3$ for all 2-qubit gates in the Weyl chamber. Here, we plotted the gap corresponding to the gate, as well as the maximum gap obtained by picking the gate with probability $p$ and the identity operator with probability $1-p$, optimized over all $p$. The value of $p$ corresponding to the optimal gap is also plotted. The optimal $p$ is not continuous on the corners, as the gate becomes the identity with these KAK parameters, and the gap is always 0 regardless of $p$.}}
    \label{fig:kak-gap}
\end{figure}


\subsection{Some conjectures and further implications}\label{sec:conjectures}

Here we collect various interesting conjectures that emerge from our analysis and have not been fully resolved. These are regarding the unique properties of $\iSWAP$ in the formation of unitary 2-design based on Theorem \ref{thm:iSWAP1}, Theorem \ref{thm:iSWAP2} and our numerical results presented in Table \ref{tab:differentGraphs} and \ref{tab:brickwork}:

\begin{conjecture}
	Suppose $n \geq 3$. With respect to any fixed connected graph $\mathcal{G}$, for any 2-local unitary circuit ensemble $\mathcal{E}$ provided that $\T_{2,\mathcal{G}}^{\mathcal{E}}$ is Hermitian, 
	\begin{align}
		\Delta( \T_{2,\mathcal{G}}^{\mathcal{E}} ) \leq \Delta( \T_{2,\mathcal{G}}^{\iSWAP} ). 
	\end{align}
\end{conjecture}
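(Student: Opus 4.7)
The plan is to split the argument into two regimes according to whether a positive-semidefinite (PSD) comparison applies. First, as in Theorem~\ref{thm:iSWAP2}, any ensemble $\mathcal{E}$ with Hermitian $\T_{2,\mathcal{G}}^{\mathcal{E}}$ reduces via a Haar-sandwich and Cauchy interlacing to a convex mixture of individual gadget moment operators; it therefore suffices to prove $\lambda_2(\T_{2,\mathcal{G}}^{U,\Gadget}) \ge \lambda_2(\T_{2,\mathcal{G}}^{\iSWAP})$ for every 2-qubit gate $U$. For $n$ large enough in the sense of Corollary~\ref{cor:Eigenvalues3}, this eigenvalue lives inside $\mathrm{span}\{u_0,u_1\}^{\otimes n}$, where by Lemma~\ref{lemma:SU(2)}
\begin{align}
\T_{2,\mathcal{G}}^{U,\Gadget}\big|_{\mathrm{span}\{u_0,u_1\}^{\otimes n}} = I + (1 - a(U))\,A_{\mathcal{G}} + c(U)\,B_{\mathcal{G}},
\end{align}
with $A_{\mathcal{G}}, B_{\mathcal{G}}$ Hermitian, negative semidefinite, graph-dependent, and KAK-independent; in particular $B_{\mathcal{G}}$ has kernel exactly $\mathrm{Sym}^n$ by the connectedness of $\mathcal{G}$.

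Within the KAK region where the edge-wise PSD comparison $\T_{2,(i,j)}^{U,\Gadget} \ge \T_{2,(i,j)}^{\iSWAP}$ of Section~\ref{sec:PSD} holds---equivalently $3a(U) - 4c(U) \ge 1/3$ combined with the universal $a(U) \ge 5/9$ of Lemma~\ref{lemma:iSWAP2}, which is precisely the orange region~\eqref{eq:Region}---averaging over the edges of $\mathcal{G}$ preserves PSD order, so $\T_{2,\mathcal{G}}^{U,\Gadget} \ge \T_{2,\mathcal{G}}^{\iSWAP}$ and hence $\lambda_2(\T_{2,\mathcal{G}}^{U,\Gadget}) \ge \lambda_2(\T_{2,\mathcal{G}}^{\iSWAP})$ hold uniformly in the connected graph $\mathcal{G}$. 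This recovers Theorem~\ref{thm:iSWAP1} for every graph and accounts for $\B$, $\CNOT$, $\SQSW$, $\SQiSW$, $\chi$, and the vast majority of the Weyl chamber. The remaining difficulty is with gates outside this region---most notably $\QFT$---for which the spectral inequality must be proved directly. On $K_n$ the non-PSD regime was handled in Theorem~\ref{thm:iSWAP2} via the $S_n$-irrep decomposition of $\mathrm{span}\{u_0,u_1\}^{\otimes n}$: the second eigenvector lives in the highest-spin irrep $W_{(n)}$ where $B|_{W_{(n)}} = 0$ and only $a(U) \ge 5/9$ is needed. For general $\mathcal{G}$, however, the fictional $S_n$-symmetry is broken to $\mathrm{Aut}(\mathcal{G})$, and the fully symmetric subspace $\mathrm{Sym}^n$ is not invariant under the edge-averaged $A_{\mathcal{G}}$ (except when $\mathrm{Aut}(\mathcal{G})$ acts transitively on edges), so one cannot directly restrict to a sector on which $B_{\mathcal{G}}$ vanishes.

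This last step is the principal obstacle. Two plausible routes to close it are: (i) adapt the $S_n$-irrep recursion of Lemma~\ref{lemma:iSWAP1} to the Clebsch--Gordan decomposition of $(\mathbb{C}^2)^{\otimes n}$ under $\mathrm{Aut}(\mathcal{G}) \times \SU(2)^{\mathrm{diag}}$, using the $\mathrm{Aut}(\mathcal{G})$-orbit structure of the edge set of $\mathcal{G}$ to bound the leading eigenvalue of $A_{\mathcal{G}}$ on each isotypic block in terms of the graph-Laplacian spectrum of $\mathcal{G}$; or (ii) perform a Dirichlet-form comparison in the spirit of Section~\ref{sec:Markov} between $\T_{2,\mathcal{G}}^{U,\Gadget}$ and $\T_{2,\mathcal{G}}^{\iSWAP}$, exploiting the strict negativity of $B_{\mathcal{G}}$ on $(\mathrm{Sym}^n)^{\perp}$ together with variational test vectors tailored to the graph structure of $\mathcal{G}$. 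Establishing the required bound uniformly in $\mathcal{G}$ and $U$ is expected to need a substantive new ingredient beyond the $S_n$-based arguments used so far; the numerics throughout Tables~\ref{tab:differentGraphs}--\ref{tab:brickwork} are fully consistent with the conjecture, with $\iSWAP$ strictly dominating on every non-complete graph tested.
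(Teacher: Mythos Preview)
This statement is Conjecture~1 in Section~\ref{sec:conjectures}: the paper does not prove it and explicitly presents it as open. There is therefore no paper proof to compare against. Your proposal is, appropriately, not a proof but an accurate map of what the paper does establish---Theorem~\ref{thm:iSWAP1} for gates in the PSD-comparable region~\eqref{eq:Region}, and Theorem~\ref{thm:iSWAP2} for the complete graph---together with a correct diagnosis of the remaining obstruction: on a general connected $\mathcal{G}$ the $S_n$-symmetry exploited in Lemma~\ref{lemma:iSWAP1} collapses to $\mathrm{Aut}(\mathcal{G})$, so the highest-spin block $W_{(n)}$ is no longer invariant under $A_{\mathcal{G}}$ and the mechanism that removes the $c$-dependence is lost. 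Your decomposition $I + (1-a)A_{\mathcal{G}} + cB_{\mathcal{G}}$ and the claimed negative semidefiniteness of $A_{\mathcal{G}},B_{\mathcal{G}}$ are correct generalizations of Lemma~\ref{lemma:SU(2)} (the local identity~\eqref{eq:GadgetPauli1} sums edge-wise).

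One point deserves tightening. The step ``it therefore suffices to prove $\lambda_2(\T_{2,\mathcal{G}}^{U,\Gadget}) \ge \lambda_2(\T_{2,\mathcal{G}}^{\iSWAP})$ for every $U$'' does not follow from Cauchy interlacing alone, since $\lambda_2$ of a convex mixture can fall strictly below the minimum of the individual $\lambda_2$'s when the second eigenvectors vary with $U$. What rescues it is precisely the linearity in $(1-a(U),c(U))$ that you invoke in the next line: the mixture $\int_{\mathcal{E}} \T_{2,\mathcal{G}}^{U,\Gadget}\,dU$ restricted to $\mathrm{span}\{u_0,u_1\}^{\otimes n}$ equals $I + (1-\bar a)A_{\mathcal{G}} + \bar c\,B_{\mathcal{G}}$ with $\bar a = \int a(U)\,dU \ge 5/9$ and $\bar c = \int c(U)\,dU$, hence is again of gadget form. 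Strictly, you then need the per-gadget bound for all $(a,c)$ in the convex hull of the achievable KAK region rather than only for actual gates, but this is a cosmetic enlargement. In the paper's $K_n$ proof this linearity manifests as a \emph{common} second eigenvector in $W_{(n)}$ for all gadgets; that shortcut is another casualty of losing $S_n$. Your two suggested routes (graph-Laplacian control of $A_{\mathcal{G}}$ on $\mathrm{Aut}(\mathcal{G})$-isotypics; Dirichlet-form comparison) are plausible programmes but, as you acknowledge, neither is executed---which is consistent with the conjectural status in the paper.
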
 

Based on our results in Table \ref{tab:differentGraphs}, we further conjecture that

\begin{conjecture}
	When $n \geq 5$, given any gadget associated with 2-local gates family with the parameter $a = \frac{5}{9}$ (see Eq.~\eqref{eq:abc}) on the complete graph $K_n$,
	\begin{align}
		\Delta( \T_{2,K_n}^{\Gadget} ) = \Delta( \T_{2,K_n}^{\iSWAP} ). 
	\end{align}
	For instance, gadgets associated with gates $\iSWAP$, $\B$ or $\CNOT$ satisfy the requirement. All of them are anticipated to attain the fastest convergence speed towards unitary 2-designs for sufficiently large $n$.
\end{conjecture}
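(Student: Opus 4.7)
The plan is, after reducing by Theorem~\ref{thm:iSWAP2} (which already gives $\lambda_2(\T_{2,K_n}^{\iSWAP}) \leq \lambda_2(\T_{2,K_n}^{\Gadget})$) and Corollary~\ref{cor:Eigenvalues3} (so that both spectral gaps are controlled by their respective second-largest eigenvalues), to establish the reverse inequality $\lambda_2(\T_{2,K_n}^{\Gadget}) \leq \lambda_2(\T_{2,K_n}^{\iSWAP})$ whenever $a = 5/9$ and $n \geq 5$. By Lemma~\ref{lemma:Eigenvalues2} this second-largest eigenvalue is attained on $\mathrm{span}\{u_0,u_1\}^{\otimes n}$, and Lemma~\ref{lemma:SU(2)} gives $\T_{2,K_n}^{\Gadget} \vert_{\mathrm{span}\{u_0,u_1\}^{\otimes n}} = (1-a)A + cB + I^{\otimes n}$. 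The key observation is that $B \vert_{W_{(n)}} = 0$ since $C_{(n)} = 1$ in Eq.~\eqref{eq:SnCharacter1} (every transposition fixes the symmetric irrep); hence on $W_{(n)}$ the operator depends only on $a$, and for $a = 5/9$ it matches $\T_{2,K_n}^{\iSWAP} \vert_{W_{(n)}}$ identically, so its second-largest eigenvalue equals that of $\iSWAP$, which by Lemma~\ref{lemma:iSWAP1} is the global $\lambda_2(\T_{2,K_n}^{\iSWAP})$.

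It then suffices to show that on each remaining isotypic component $W_{(n-r,r)} \otimes \mathrm{1}^{(n-r,r)}$ with $r \geq 1$ the largest eigenvalue does not exceed $\lambda_2(\T_{2,K_n}^{\iSWAP} \vert_{W_{(n)}})$. Schur's lemma gives $B \vert_{W_{(n-r,r)}} = (C_{(n-r,r)} - 1) I$, so $\T_{2,K_n}^{\Gadget} \vert_{W_{(n-r,r)}} = \T_{2,K_n}^{\iSWAP} \vert_{W_{(n-r,r)}} + (c - 1/3)(C_{(n-r,r)} - 1) I$. Setting $x_i = \cos 4 k_i$, the constraint $a = 5/9$ forces $x_1 x_2 + x_2 x_3 + x_3 x_1 = -1$, whence $(x_1 + x_2 + x_3)^2 = (x_1^2 + x_2^2 + x_3^2) - 2 \leq 1$ and so $c = \frac{1}{6}(1 - x_1 - x_2 - x_3) \in [0, 1/3]$. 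The worst case is $c = 0$ (the $\CNOT$ branch), producing an upward shift of $\frac{1}{3}(1 - C_{(n-r,r)})$ relative to $\iSWAP$ on $W_{(n-r,r)}$; for $r = 1$ this shift equals $2/(3(n-1))$, while for $r \geq 2$ it grows by only a bounded factor whereas $\lambda_1(A \vert_{W_{(n-r,r)}})$ becomes substantially more negative by iterating Eq.~\eqref{eq:UpperBoundA}, so the binding case is $r = 1$.

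The crux is therefore to sharpen the estimate $\lambda_1(A \vert_{W_{(n-1,1)}}) \leq (1 - 2n)/(n(n-1))$ used in Lemma~\ref{lemma:iSWAP1}, which is loose because it adds separate extremal bounds on $A \vert_{W_{(n-2)}}$ and $\sqrt{3}\rho(X) + \rho(Z)$ attained on incompatible vectors. I would write $A \vert_{W_{(n-1,1)}}$ explicitly as an $(n-1) \times (n-1)$ tridiagonal matrix in the $|j = n/2 - 1, m\rangle$ basis, using the same $\SU(2)$ ladder-operator matrix elements deployed for $W_{(n)}$ in the proof of Lemma~\ref{lemma:iSWAP1}, and bound its top eigenvalue via a variational Rayleigh quotient on a trial vector analogous to the $v$ built in Eq.~\eqref{eq:Bound1}. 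The main obstacle is quantitative: one must push $\lambda_1(A \vert_{W_{(n-1,1)}}) \leq -\Theta(1/n)$ with a leading constant large enough to simultaneously absorb the worst-case $c = 0$ shift of $2/(3(n-1))$ and the $\Theta(1/n)$ deficit appearing in the lower bound on $\lambda_2(\T_{2,K_n}^{\iSWAP} \vert_{W_{(n)}})$ from Eq.~\eqref{eq:Bound1}, exactly from $n = 5$ onward. The sharp numerical transition visible in Table~\ref{tab:differentGraphs} --- $\CNOT$ strictly worse than $\iSWAP$ at $n = 4$ but tied from $n = 5$ --- strongly suggests this sharpening is achievable, though clean closed-form control of the subleading corrections in the tridiagonal diagonalization will be delicate.
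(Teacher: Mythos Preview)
Your reduction is exactly the one the paper outlines in its discussion immediately following the conjecture: since $B\vert_{W_{(n)}}=0$, the restriction to $W_{(n)}$ depends only on $a$, so any gadget with $a=5/9$ matches $\iSWAP$ there; the remaining task is to show that the largest eigenvalue of $\frac{4}{9}A + cB + I$ on $W_{(n-r,r)}$, $r\geq 1$, never exceeds $\lambda_2$ on $W_{(n)}$. Your identification of the extremal case $c=0$ (via the neat observation that $a=5/9$ forces $\sum x_i x_j = -1$, hence $|\sum x_i|\leq 1$ and $c\in[0,1/3]$) is correct, and it reduces the conjecture precisely to the paper's ``stronger version of Lemma~\ref{lemma:iSWAP1}'': that $\lambda_1(A\vert_{W_{(n-1,1)}}) \leq \lambda_2(A\vert_{W_{(n)}})$.

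The gap is in the final step, and it is the same gap the paper explicitly flags as the reason the statement remains a conjecture. First, a variational Rayleigh quotient on a trial vector in $W_{(n-1,1)}$ yields a \emph{lower} bound on $\lambda_1(A\vert_{W_{(n-1,1)}})$, not the upper bound you need; the argument in Lemma~\ref{lemma:iSWAP1} upper-bounds this quantity via Weyl's inequality applied to Eq.~\eqref{eq:UpperBoundA}, not via a trial vector. Second, and more fundamentally, the paper records that numerically $\lambda_2(A\vert_{W_{(n)}})$ and $\lambda_1(A\vert_{W_{(n-1,1)}})$ become asymptotically close as $n\to\infty$, so any strategy that bounds them separately with $\Theta(1/n)$ slack --- which is what both the existing Rayleigh bound~\eqref{eq:Bound1} and the Weyl bound~\eqref{eq:Bound2} carry --- cannot close the inequality. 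Sharpening either bound individually will not suffice; one seemingly needs a direct comparison (e.g.\ an exact relation or interlacing between the two tridiagonal spectra) rather than two independent estimates. Your closing sentence correctly senses that the subleading terms are delicate, but the method you sketch does not overcome the obstruction the paper identifies.
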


This conjecture can be proven if a stronger version of Lemma \ref{lemma:iSWAP1}, that when $n \geq 5$, the second largest eigenvalue of 
\begin{align}
	\begin{aligned}
		A = & \frac{2}{n(n-1)} \Big( \frac{\sqrt{3}}{4} \Big( (n-1) \rho(X) - \frac{1}{2}(\rho(X) \rho(Z) + \rho(Z) \rho(X) ) \Big) \\
		& \hspace{15mm} + \frac{1}{4} \Big( (n-1) \rho(Z) + \rho(Z) \rho(Z) \Big) \Big) 
		- \Big( 1 + \frac{1}{2(n-1)} \Big) I^{\otimes n}
	\end{aligned}
\end{align}
can always be found within $W_{(n)}$ rather than all other $\SU(2)$ irreps with fewer total spins, holds. It should be noted that the comparison in \eqref{eq:BoundComparison} fails here because we cannot include the second term $-\frac{2}{3(n-1)}$ from \eqref{eq:Bound2}. Numerically, $\lambda_2(A \vert_{W_{(n)}})$ and $\lambda_2(A \vert_{W_{(n-1,1)}})$ tend to be more and more closer asymptotically and thus a more refined method is necessary to rigorously prove the conjecture. We leave this for future work.

Finally, beyond random circuits built on graphs, our numerical results in Section \ref{sec:BW-WL} may also indicates that:
\begin{conjecture}
	When $n \geq 9$, using the brickwork model
	\begin{align}
		\Delta( \T_{t,\mathrm{BW}}^{\Gadget} ) \leq \Delta( \T_{t,\mathrm{BW}}^{\iSWAP} ).
	\end{align}
	That is, in the generation of 2-designs, using the gadget associated with $\iSWAP$ is still optimal for the brickwork model.
\end{conjecture}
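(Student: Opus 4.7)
The plan is to adapt the framework developed for the complete-graph case (Theorem~\ref{thm:iSWAP2}) to the layered brickwork architecture. The core obstacle is that the brickwork moment operator lacks the full $S_n$ permutation symmetry enjoyed by the complete graph, so the clean $\SU(2)$-irrep decomposition used in Lemma~\ref{lemma:iSWAP1} does not apply directly. Nevertheless, the single-qubit Haar twirls baked into each gadget should still suffice to project the relevant spectral data into the invariant subspace $\mathcal{V}^{\otimes n} := \mathrm{span}\{u_0, u_1\}^{\otimes n}$.

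First, I would establish a brickwork analogue of Lemma~\ref{lemma:Eigenvalues2}: for $n$ above some threshold, the second-largest eigenvalue of $\T_{2,\mathrm{BW}}^{\Gadget}$ is attained on $\mathcal{V}^{\otimes n}$. The argument parallels the graph case---any basis element with a tensor factor in the orthogonal complement $\mathcal{S}$ at a qubit touched by some gadget is annihilated by the corresponding single-qubit Haar projector, so the restriction of the brickwork operator to any subspace outside $\mathcal{V}^{\otimes n}$ effectively kills some of the layer's contributions and hence has a strictly smaller top eigenvalue. The concrete threshold $n \geq 9$ reported in Table~\ref{tab:brickwork} is expected to emerge as the crossover point beyond which this leading subspace analysis dominates.

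Next, on $\mathcal{V}^{\otimes n}$, I would apply Lemma~\ref{lemma:SU(2)} edgewise to write each restricted local factor as $I + (1-a) A_{(i,j)} + c B_{(i,j)}$, with $B_{(i,j)}$ a transposition and $A_{(i,j)}$ the corresponding two-site Pauli combination. Within each brickwork layer these factors commute (disjoint support), so the restriction of $L_0$ and $L_1$ is an ordered product of commuting operators. A Cauchy-interlacing step analogous to Eq.~\eqref{eq:Cauchy}, inserting auxiliary full-layer single-qubit Haar twirls between consecutive brickwork layers (which is licit because the gadget already carries such twirls at its boundaries), should yield
\begin{align}
\lambda_2(\T_{2,\mathrm{BW}}^{\Gadget}) \geq \lambda_2\bigl( (\T_{2,1}^{\U(2)}\cdots \T_{2,n}^{\U(2)}) \, \T_{2,\mathrm{BW}}^{\Gadget} \, (\T_{2,1}^{\U(2)}\cdots \T_{2,n}^{\U(2)}) \bigr),
\end{align}
and on the twirled operator the dependence on the KAK coefficients factors through the single parameter $a$. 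Lemma~\ref{lemma:iSWAP2} ($a(U) \geq 5/9$) then forces iSWAP to be extremal within the reduced problem.

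The hard part will be the non-Hermiticity of $L_0 L_1$ (we actually analyze $\sqrt{L_1}\,L_0\,\sqrt{L_1}$) and, more fundamentally, the coupling between layers. Unlike $K_n$, the staggered brickwork structure correlates neighbouring pairs across layers, so the scalar action of permutations on $\SU(2)$ irreps used in Lemma~\ref{lemma:iSWAP1} is no longer exact. A plausible workaround is (a) to first verify that $\lambda_2$ is attained on the highest-weight irrep $W_{(n)}$---where transpositions act trivially and the layer structure collapses, recovering a complete-graph-style analysis---and (b) to bound the contribution of lower-weight irreps by the inductive $\SU(2)$ reduction in Eq.~\eqref{eq:UpperBoundA}. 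Pinning down the small-$n$ crossover precisely at $n=9$ is the most delicate point: it likely requires sharp control of the $W_{(n-1,1)}$ contribution mirroring Eq.~\eqref{eq:Bound2}, since the gaps of $\B$ and $\iSWAP$ only become ordered correctly beyond this system size.
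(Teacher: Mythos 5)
This statement appears in the paper only as an \emph{open conjecture} (Section~\ref{sec:conjectures}), supported solely by the numerical evidence in Table~\ref{tab:brickwork}; the paper offers no proof. Your sketch is therefore being compared against nothing, and the real question is whether it closes the gaps that led the authors to leave this open. It does not, and the central obstacle is structural rather than technical.

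The lynchpin of the $K_n$ proof (Lemmas~\ref{lemma:iSWAP1}--\ref{lemma:iSWAP3} and Theorem~\ref{thm:iSWAP2}) is that $\T_{2,K_n}^{\Gadget}\vert_{\mathcal{V}^{\otimes n}}$ commutes with the full permutation action of $S_n$ on tensor factors, hence block-diagonalizes over the $\SU(2)\times S_n$ isotypic components $W_{(n-r,r)}$, and on the top block $W_{(n)}$ the operator reduces to $I+(1-a)A\vert_{W_{(n)}}$, i.e.\ depends only on $a$. Your steps (a) and (b) implicitly assume the brickwork operator admits the same decomposition: you propose to ``verify that $\lambda_2$ is attained on $W_{(n)}$'' and to ``bound the contribution of $W_{(n-1,1)}$.'' But $L_0\vert_{\mathcal{V}^{\otimes n}}$ and $L_1\vert_{\mathcal{V}^{\otimes n}}$ are products of local terms $T_{2,(i,i+1)}^{\Gadget}$, each commuting only with $S_{\{i,i+1\}}\times S_{\text{rest}}$, so neither layer nor their product commutes with $S_n$. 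Consequently, $W_{(n)}$ and $W_{(n-1,1)}$ are \emph{not invariant subspaces} of $\T_{2,\mathrm{BW}}^{\Gadget}\vert_{\mathcal{V}^{\otimes n}}$; speaking of the spectrum ``on $W_{(n)}$'' or the ``contribution of $W_{(n-1,1)}$'' is not well-defined, and the crucial reduction to the single parameter $a$ has no analogue here.

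There is a second gap even if one waives the symmetry issue. In the paper's Theorem~\ref{thm:iSWAP2}, the Cauchy-interlacing inequality $\lambda_2(\T) \ge \lambda_2(P\T P)$ gives a lower bound on a \emph{generic} ensemble's $\lambda_2$, and the chain closes because $\lambda_2(\T_{2,K_n}^{\iSWAP})$ is computed \emph{exactly} on $W_{(n)}$, so that the lower bound for the generic ensemble equals the exact value for $\iSWAP$. Your proposed step produces only $\lambda_2(\T_{BW}^{\Gadget}) \ge \lambda_2(P\T_{BW}^{\Gadget}P)$ for the generic gadget and, symmetrically, $\lambda_2(\T_{BW}^{\iSWAP}) \ge \lambda_2(P\T_{BW}^{\iSWAP}P)$ for $\iSWAP$ --- both inequalities in the same direction --- so even if $P\T_{BW}^{\Gadget}P$ depended only on $a$, you could not deduce $\lambda_2(\T_{BW}^{\Gadget}) \ge \lambda_2(\T_{BW}^{\iSWAP})$ without an \emph{upper} bound or an exact formula for the $\iSWAP$ side, which the non-commuting layered product structure makes much harder than in the graph case. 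Finally, the threshold $n\ge 9$ is not derived: it is read off from Table~\ref{tab:brickwork}, and you correctly flag it as ``the most delicate point'' but supply no mechanism that would pin it down. In short, the heuristic is reasonable, but every load-bearing step in it remains unproven for precisely the reasons this remained a conjecture in the paper.
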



\section{Multiqubit gates and hypergraphs}\label{sec:multiqubit}

In this section, we extend the consideration to more general gates with higher locality and provide some cursory discussion. We first provide a general upper bound on the spectral gap of moment operators defined using $r$-local gates. Then we numerically compare the efficiency of several 3-qubit gates in generating 2-designs on a 3-qubit system.

\subsection{General bound on gap}\label{sec:r-local}

Analogous to our estimation at the end of Section \ref{sec:graphs}, we first prove a general lower bound on the $r$-local circuits on hypergraphs. Precisely, let $\mathcal{E}$ be a generic ensemble defined on some hypergraph $\mathcal{G}$ whose edges are tuples $(i_1,...,i_r)$ of distinct sites. Given any site $i_s$, its node degree is thus defined by all nodes as long as they are contained in one tuple standing for an edge. Let $n$ denote the site for which it has the minimal node degree $d_{\min}$ in the hypergraph. Then by the same method in deriving \eqref{eq:TrivialBound} 
\begin{align}\label{eq:TrivialBound2} 
	\begin{aligned}
		\lambda_2( \T_{t,\mathcal{G}}^{\mathcal{E}} ) 
		& = \frac{1}{\vert E(\mathcal{G}) \vert} \lambda_2\Big( \sum_{(i_1,...,i_r) \in E(\mathcal{G}) \setminus N(n)} \T_{t,(i_1,...,i_r)}^{\mathcal{E}}  +  \sum_{(i_1,...,i_r) \in N(n)} \T_{t,(i_1,...,i_r)}^{\mathcal{E}} \Big) \\ 
		& \geq \frac{ \vert E(\mathcal{G}) \setminus N(n) \vert - \vert N(n) \vert }{\vert E(\mathcal{G}) \vert}  
		= 1 - \frac{2 d_{\min}}{\vert E(\mathcal{G}) \vert}.
	\end{aligned}
\end{align}
By definition, $\vert E(\mathcal{G}) \vert \geq \frac{d_{\min}}{r} n$ and hence the RHS from above is universally lower bounded by $1 - \frac{2r}{n}$ (cf. the 2-local case below \eqref{eq:TrivialBound}). As long as the locality is constant, the best scaling of the spectral gap is still inverse linear with respect to $n$ by comparing to the 2-local case. Therefore, there is at most a constant speed up on the convergence.

\subsection{Results of important 3-qubit gates}\label{sec:3-local}

Due to the lack of KAK-type decomposition, it is not immediately clear how to carry out a systematic analysis for higher-locality gates, let alone identifying the optimal solutions. Here we specifically study various important 3-qubit gates through the matrix representations of the corresponding moment operators and eigenvalues as we did in Section \ref{sec:3-4-designs}. 

\begin{example}\label{example:3-local}
	We consider a 3-local unitary ensemble, which we still call an ironed gadget, defined by sampling (1) three 1-local Haar random unitaries on three qubits, (2) a 3-local unitary and (3) three more 1-local Haar random unitaries. The scheme gives the following $t$-th moment operator: 
	\begin{align}
		\T_{t,(i,j,k)}^{\Gadget} = \T_{t,i}^{\U(2)} \T_{t,j}^{\U(2)} \T_{t,k}^{\U(2)} \frac{1}{2} \Big( U^{\otimes t} \otimes \bar{U}^{\otimes t} + U^{\dagger \otimes t} \otimes U^{T \otimes t} \Big) \T_{t,i}^{\U(2)} \T_{t,j}^{\U(2)} \T_{t,k}^{\U(2)},
	\end{align}
	By our earlier discussion in Example \ref{example:LocalOperator}, when $t = 2$, unit eigenbasis of $\T_{t,i}^{\U(2)}$ can be given by Eq.~\eqref{eq:1qBasis2}. Effectively, $\T_{t,(i,j,k)}^{\Gadget}$ is determined by its action on the eight-dimensional subspace $\{u_0, u_1\}^{\otimes 3}$. For instance, $\T_{2,(i,j,k)}^{\CCZ}$ can be represented by:
	\begin{align}\label{eq:CCZ}
		T_2^{\CCZ} = 
		\begin{pNiceMatrix}
			\Block[borders={bottom,right,tikz=dashed}]{1-1}{} 1 & 0 & 0 & 0 & 0 & 0 & 0 & 0 \\
			0 & \Block[borders={bottom,top,right,left,tikz=dashed}]{3-3}{} \frac{1}{2} & 0 & 0 & \frac{1}{6 \sqrt{3}} & 0 & \frac{1}{6 \sqrt{3}} & \frac{1}{18} \\
			0 & 0 & \frac{1}{2} & 0 & \frac{1}{6 \sqrt{3}} & \frac{1}{6 \sqrt{3}} & 0 & \frac{1}{18} \\
			0 & 0 & 0 & \frac{1}{2} & 0 & \frac{1}{6 \sqrt{3}} & \frac{1}{6 \sqrt{3}} & \frac{1}{18} \\
			0 & \frac{1}{6 \sqrt{3}} & \frac{1}{6 \sqrt{3}} & 0 & \Block[borders={bottom,top,right,left,tikz=dashed}]{3-3}{} \frac{4}{9} & \frac{1}{18} & \frac{1}{18} & \frac{1}{3 \sqrt{3}} \\
			0 & 0 & \frac{1}{6 \sqrt{3}} & \frac{1}{6 \sqrt{3}} & \frac{1}{18} & \frac{4}{9} & \frac{1}{18} & \frac{1}{3 \sqrt{3}} \\
			0 & \frac{1}{6 \sqrt{3}} & 0 & \frac{1}{6 \sqrt{3}} & \frac{1}{18} & \frac{1}{18} & \frac{4}{9} & \frac{1}{3 \sqrt{3}} \\
			0 & \frac{1}{18} & \frac{1}{18} & \frac{1}{18} & \frac{1}{3 \sqrt{3}} & \frac{1}{3 \sqrt{3}} & \frac{1}{3 \sqrt{3}} & \Block[borders={left,top,tikz=dashed}]{1-1}{}  \frac{11}{18}
		\end{pNiceMatrix},
	\end{align}
	where the dashed lines highlight the actions of $\T_{2,(i,j,k)}^{\CCZ}$ on binary strings of $\{u_0, u_1\}^{\otimes 3}$ with the same Hamming weights: from top to bottom and left to right, we arrange the basis elements as
	\begin{align}
		\begin{aligned}
			& \ket{u_0,u_0,u_0}, \quad \ket{u_1,u_0,u_0}, \quad \ket{u_0,u_1,u_0}, \quad \ket{u_0,u_0,u_1}, \\
			& \ket{u_1,u_1,u_0}, \quad \ket{u_0,u_1,u_1}, \quad \ket{u_1,u_0,u_1}, \quad \ket{u_1,u_1,u_1}.
		\end{aligned}
	\end{align} 
	Basis elements sharing the same Hamming weight are invariant under the action of $\rho(Z)$ defined in Eq.~\eqref{eq:SU(2)action}. Eq.~\eqref{eq:CCZ}, as well as any summation of local moment operators acting on a connected hypergraph of $n$ qubits, breaks the invariance to ensure that the corresponding moment operator admits exactly two unit eigenvalues (because $\dim \text{Comm}_2(\U(2^n)) = 2$). The eigenvalues of the above matrix are $(1, 1, \frac{5}{9}, \frac{5}{9}, \frac{4}{9},\frac{2}{9}, \frac{1}{3}, \frac{1}{3} )$. As a reminder, the gadget associated with $\Toffoli$ has the same moment operator as above.
	
	As a comparison, $\frac{1}{3}( \T_{2,(i,j)}^{\iSWAP} + \T_{2,(j,k)}^{\iSWAP} + \T_{2,(k,l)}^{\iSWAP} )$ can be represented by
	\begin{align}\label{eq:iSWAP-CCZ}
		T_2^{\iSWAP'} =
		\begin{pNiceMatrix}
			\Block[borders={bottom,right,tikz=dashed}]{1-1}{} 1 & 0 & 0 & 0 & 0 & 0 & 0 & 0 \\
			0 & \Block[borders={bottom,top,right,left,tikz=dashed}]{3-3}{} \frac{1}{3} & \frac{1}{9} & \frac{1}{9} & \frac{2}{9 \sqrt{3}} & 0 & \frac{2}{9 \sqrt{3}} & 0 \\
			0 & \frac{1}{9} & \frac{1}{3} & \frac{1}{9} & \frac{2}{9 \sqrt{3}} & \frac{2}{9 \sqrt{3}} & 0 & 0 \\
			0 & \frac{1}{9} & \frac{1}{9} & \frac{1}{3} & 0 & \frac{2}{9 \sqrt{3}} & \frac{2}{9 \sqrt{3}} & 0 \\
			0 & \frac{2}{9 \sqrt{3}} & \frac{2}{9 \sqrt{3}} & 0 & \Block[borders={bottom,top,right,left,tikz=dashed}]{3-3}{} \frac{5}{27} & \frac{1}{9} & \frac{1}{9} & \frac{4}{9 \sqrt{3}} \\
			0 & 0 & \frac{2}{9 \sqrt{3}} & \frac{2}{9 \sqrt{3}} & \frac{1}{9} & \frac{5}{27} & \frac{1}{9} & \frac{4}{9 \sqrt{3}} \\
			0 & \frac{2}{9 \sqrt{3}} & 0 & \frac{2}{9 \sqrt{3}} & \frac{1}{9} & \frac{1}{9} & \frac{5}{27} & \frac{4}{9 \sqrt{3}} \\
			0 & 0 & 0 & 0 & \frac{4}{9 \sqrt{3}} & \frac{4}{9 \sqrt{3}} & \frac{4}{9 \sqrt{3}} &  \Block[borders={left,top,tikz=dashed}]{1-1}{} \frac{5}{9}
		\end{pNiceMatrix}.
	\end{align} 
	Taking a closer look at Eqs.~\eqref{eq:CCZ} and \eqref{eq:iSWAP-CCZ}, we find that moment operators of 2-local gadgets are able to transit one basis element, except $\ket{u_0,u_0,u_0}$, to another one different by at most one Hamming weight. While 3-local gadgets transit basis elements different by at most two Hamming weights. The eigenvalues of the above matrix are $(1, 1, \frac{5}{9}, \frac{8}{27}, \frac{8}{27}, 0, 0, -\frac{1}{27} )$.
	
	This explains the intuition, from the perspective of Markov chain theory~\cite{Levin2009}, that ensembles defined by 3-local gadgets have the opportunity to mix/converge faster to 2-designs. However, as mentioned after Eq.~\eqref{eq:TrivialBound2}, there can be at most a constant speed up. For example, one can check, like \eqref{eq:iSWAP-U(4)}, that
	\begin{align}
		\frac{19}{12} (I - T_2^{\CCZ}) \geq ( I - T_2^{\iSWAP'} ) \geq  \frac{4}{5} (I - T_2^{\CCZ}).
	\end{align} 
	Therefore, on any hypergraph $\mathcal{H G}$, 
	\begin{align}
		\frac{19}{12} \Delta(\T_{2,\mathcal{H G}}^{\CCZ} ) \geq \Delta(\T_{2,\mathcal{H G}}^{\iSWAP} ) \geq  \frac{4}{5} \Delta(\T_{2,\mathcal{H G}}^{\CCZ} ).
	\end{align}
\end{example} 

We can compute other examples for typical 3-local gates like $\CSWAP$, $\CiSWAP$, $\Peres$ and $\Margolus$. The matrices representing 3-local second moment operators with their eigenvalues are listed in Table~\ref{tab:3-local}: 

\begin{table}[]
	\centering
	\resizebox{\columnwidth}{!}{%
		\begin{tabular}{c|c|c|c}
			\hline\hline
			Gate & Matrix representation & Spectrum & Spectral gap \\ 
			\hline\hline
			$\CSWAP$ & 	$T_2^{\CSWAP} = \resizebox{25em}{10em}{  
			$\begin{pmatrix}
				1 & 0 & 0 & 0 & 0 & 0 & 0 & 0 \\
				0 & \frac{1}{2} & 0 & 0 & 0 & 0 & 0 & \frac{1}{6} \\
				0 & 0 & \frac{1}{4} & \frac{1}{4} & \frac{1}{4\sqrt{3}} & 0 & \frac{1}{4\sqrt{3}} & 0 \\
				0 & 0 & \frac{1}{4} & \frac{1}{4} & \frac{1}{4\sqrt{3}} & 0 & \frac{1}{4\sqrt{3}} & 0 \\
				0 & 0 & \frac{1}{4 \sqrt{3}} & \frac{1}{4\sqrt{3}} & \frac{1}{4} & 0 & \frac{1}{4} & \frac{1}{3 \sqrt{3}} \\
				0 & 0 & 0 & 0 & 0 & \frac{2}{3} & 0 & \frac{1}{3 \sqrt{3}} \\
				0 & 0 & \frac{1}{4 \sqrt{3}} & \frac{1}{4\sqrt{3}} & \frac{1}{4} & 0 & \frac{1}{4} & \frac{1}{3 \sqrt{3}} \\
				0 & \frac{1}{6} & 0 & 0 & \frac{1}{3 \sqrt{3}} & \frac{1}{3 \sqrt{3}} & \frac{1}{3 \sqrt{3}} & \frac{11}{18}
			\end{pmatrix}$ }$ 
		    & $(1, 1, \frac{1}{18}(10 + \sqrt{10}), \frac{1}{6}(2 + \sqrt{2}), \frac{1}{18}(10 - \sqrt{10}),  \frac{1}{6}(2 - \sqrt{2}), 0, 0 )$ & 0.269 \\ 
		    \hline
			$\CiSWAP$ & $T_2^{\CiSWAP} = \resizebox{25em}{10em}{  
				$\begin{pmatrix}
					1 & 0 & 0 & 0 & 0 & 0 & 0 & 0 \\
					0 & \frac{1}{2} & 0 & 0 & 0 & 0 & 0 & \frac{1}{6} \\
					0 & 0 & \frac{1}{4} & \frac{1}{12} & \frac{1}{4\sqrt{3}} & \frac{1}{6\sqrt{3}} & \frac{1}{12\sqrt{3}} & \frac{1}{18} \\
					0 & 0 & \frac{1}{12} & \frac{1}{4} & \frac{1}{12\sqrt{3}} & \frac{1}{6\sqrt{3}} & \frac{1}{4\sqrt{3}} & \frac{1}{18} \\
					0 & 0 & \frac{1}{4\sqrt{3}} & \frac{1}{12\sqrt{3}} & \frac{1}{4} & \frac{1}{18} & \frac{7}{36} & \frac{7}{18\sqrt{3}} \\
					0 & 0 & \frac{1}{6\sqrt{3}} & \frac{1}{6\sqrt{3}} & \frac{1}{18} & \frac{5}{9} & \frac{1}{18} & \frac{2}{9\sqrt{3}} \\
					0 & 0 & \frac{1}{12\sqrt{3}} & \frac{1}{4\sqrt{3}} & \frac{7}{36} & \frac{1}{18} & \frac{1}{4} & \frac{7}{18\sqrt{3}} \\
					0 & \frac{1}{6} & \frac{1}{18} & \frac{1}{18} & \frac{7}{18\sqrt{3}} & \frac{2}{9\sqrt{3}} & \frac{7}{18\sqrt{3}} & \frac{31}{54}
				\end{pmatrix}$ }$ 
			& $(1, 1, \frac{1}{6}(2 + \sqrt{2}), \frac{1}{54}(20 + \sqrt{22}), \frac{1}{54}(20 - \sqrt{22}), \frac{2}{9}, \frac{1}{6}(2 - \sqrt{2}), 0)$ & 0.431 \\
			\hline
			$\Peres$ & 	$T_2^{\Peres} = \resizebox{25em}{10em}{  
				$\begin{pmatrix}
					1 & 0 & 0 & 0 & 0 & 0 & 0 & 0 \\
					0 & \frac{1}{3} & 0 & 0 & 0 & 0 & 0 & \frac{2}{9} \\
					0 & 0 & \frac{1}{3} & 0 & \frac{2}{3\sqrt{3}} & 0 & 0 & 0 \\
					0 & 0 & 0 & \frac{1}{3} & 0 & 0 & \frac{2}{3\sqrt{3}} & 0 \\
					0 & 0 & \frac{2}{3\sqrt{3}} & 0 & \frac{1}{9} & 0 & \frac{2}{9} & \frac{4}{9\sqrt{3}} \\
					0 & 0 & 0 & 0 & 0 & \frac{5}{9} & 0 & \frac{4}{9\sqrt{3}} \\
					0 & 0 & 0 & \frac{2}{3\sqrt{3}} & \frac{2}{9} & 0 & \frac{1}{9} & \frac{4}{9\sqrt{3}} \\
					0 & \frac{2}{9} & 0 & 0 & \frac{4}{9\sqrt{3}} & \frac{4}{9\sqrt{3}} & \frac{4}{9\sqrt{3}} & \frac{13}{27}
				\end{pmatrix} $  }$
			& $(1, 1, \frac{1}{27}(11 + 2\sqrt{10}), \frac{5}{9}, \frac{1}{9}(1 + 2\sqrt{2}),  \frac{1}{27}(11 - 2\sqrt{10}), \frac{1}{9}(1 - 2\sqrt{2}), -\frac{1}{3} )$ & 0.358 \\
			\hline
			$\Margolus$ & $T_2^{\Margolus} = \resizebox{25em}{10em}{  
				$\begin{pmatrix}
					1 & 0 & 0 & 0 & 0 & 0 & 0 & 0 \\
					0 & \frac{1}{2} & 0 & 0 & 0 & 0 & \frac{1}{3\sqrt{3}} & \frac{1}{9} \\
					0 & 0 & \frac{1}{2} & 0 & \frac{1}{6\sqrt{3}} & \frac{1}{6\sqrt{3}} & 0 & \frac{1}{18} \\
					0 & 0 & 0 & \frac{1}{6} & 0 & \frac{1}{6\sqrt{3}} & \frac{1}{2\sqrt{3}} & \frac{1}{18} \\
					0 & 0 & \frac{1}{6\sqrt{3}} & 0 & \frac{1}{6} & \frac{1}{18} & \frac{1}{9} & \frac{11}{18\sqrt{3}} \\
					0 & 0 & \frac{1}{6\sqrt{3}} & \frac{1}{6\sqrt{3}} & \frac{1}{18} & \frac{1}{3} & \frac{1}{18} & \frac{4}{9\sqrt{3}} \\
					0 & \frac{1}{3\sqrt{3}} & 0 & \frac{1}{2\sqrt{3}} & \frac{1}{9} & \frac{1}{18} & \frac{5}{18} & \frac{5}{18\sqrt{3}} \\
					0 & \frac{1}{9} & \frac{1}{18} & \frac{1}{18} & \frac{11}{18\sqrt{3}} & \frac{4}{9\sqrt{3}} & \frac{5}{18\sqrt{3}} & \frac{13}{27}
				\end{pmatrix}$ }$ 
		    & $(1, 1, 0.547, 0.407, 0.318, 0.215, -0.141, -0.087)$ & 0.453 \\
			\hline
		\end{tabular}
	}
	\caption{Matrix representations of $T_2^{\Gadget}$ for some typical 3-local gates.}
	\label{tab:3-local}
\end{table}

To summarize, the 3-local gadget associated with $\Margolus$ has the largest spectral gap and is thus supposed to possess the best efficiency among the typical 3-qubit gates studied above. It could also be better than gates built from 2-qubit gates like $\iSWAP$.


\section{Clifford + phase gate} \label{sec:large-n}

Here we consider a practically motivated case of gate sets composed of Clifford gates and a diagonal phase gate. Remarkably, we establish various analytical solutions. Our distribution $\nu$ has probability $p$ and $1-p$ for Clifford gates and diagonal gates respectively. To be more precise, with probability $p/2$ we choose uniformly randomly from all $n$-qubit Clifford gates, and with probability $(1-p)/4$ we choose the diagonal gate $G \equiv \operatorname{diag}(1,e^{i\phi})$ {acting on the first qubit}. Note that since SWAP gates are contained in the Clifford group, our gate set is still universal. Similarly, with probability $(1-p)/4$ we choose the diagonal gate $G^\dagger$ acting only on the first qubit. Finally with probability $1/2$ we choose identity. Since Clifford group forms exact 3-designs on qubits~\cite{webb2015clifford,zhu2016clifford,zhu2017multiqubit} with efficient sampling methods~\cite{CliffordSampling2014,CliffordSampling2021}, we here focus on the cases of 4- and 5-designs. {For any $n\ge 3$ for $t=4$, and $n\ge 4$ for $t=5$, we found that $\phi=\pi/4$ (i.e. the gate $G$ is the $T$ gate) is one of the optimal solutions for the fastest convergence, and we derived an analytical result for the optimal value of $p$. The condition $n\ge t-1$ is required for the states invariant under Clifford group to be linearly independent.}

\subsection{Case of 4-design} 

As mentioned in Section \ref{sec:framework}, the 4-th moment $\T_4^{U(2^n)}$ is a projector onto the space spanned by permutations from symmetric group $S_4$ acting on $((\mathbb{C}^2)^{\otimes n})^{\otimes 4}$. Explicitly, they can be written as $\ket{\Psi_{\pi}} \equiv \ket{\psi_{\pi}}^{\otimes n}$ where
\begin{align}
	|\psi_\pi\> \equiv \frac{1}{4}\sum_{i \in \{0,1\}^4 } |i\> \otimes |\pi(i)\>
\end{align}
and $\pi \in S_4$ is a permutation. The dimension of this space is $4!=24$ for any $n\ge 2$. 

It is known~\cite{GNW17} that the moment operator $T_4^{\text{Cliff}}$, averaged over Clifford group, is a projector onto a 30 dimensional subspace. Besides the 24 states mentioned above, there are 6 other states defined in the following way. We define $s_1,\ldots,s_4$ as the 8-bit strings $10011001,01010101$, $11110000,00001111$. The state $\ket{T}$ is then defined as a uniform superposition of all bit strings in the subspace of $\F_2^8$ spanned by $\{s_1,\ldots,s_4\}$
\begin{align}
	\ket{T} = \frac{1}{4}\sum_{x_i \in \{0,1\}} \ket{x_1 s_1 + \ldots + x_4 s_4}
\end{align}
where the addition inside bracket is done bit-wise. The extra states we want are $\ket{\Phi_\pi} \equiv \ket{\phi_{\pi}}^{\otimes n}$ with $\pi \in S_4$ and
\begin{align}
	\ket{\phi_\pi} \equiv (\pi \otimes I) \ket{T}.
\end{align}

Note that in~\cite{GNW17} the permutations could be applied to the last 4 qubits as well, but we have
\begin{align}
	(\pi \otimes \pi)\ket{T} = \ket{T} 
	\implies (\sigma \otimes \pi)\ket{T} = (\pi^{-1}\sigma  \otimes I)\ket{T} 
\end{align}
in our case, so we can set the second permutation to be identity. To get the counting right, note that $\ket{\phi_\pi} = \ket{T}$ when $\pi \in \{(1), (12)(34),(13)(24),(14)(23)\}$, so the number of states equals to the number of cosets of the subgroup above, which is $4!/4=6$. The 30 states $|\psi_\pi\>$ and $|\phi_{\pi}\>$ are linearly independent, given that $n \ge t-1 = 3$~\cite{GNW17}. 

Now we consider the action of $G^{\otimes 4,4} = G^{\otimes 4} \otimes \bar{G}^{\otimes 4}$ on $|\phi_\pi\>$ where $G=\operatorname{diag}(1,e^{i\theta})$ is the diagonal gate. The component $\ket{11110000}$ and $\ket{00001111}$ will get a phase of $e^{4i\theta}$ and $e^{-4i\theta}$ respectively, while other components are unaffected. Therefore
\begin{align}
	\frac{G^{\otimes 4,4}+(G^\dagger)^{\otimes 4,4}}{2} \ket{\phi_\pi} = \ket{\phi_\pi} + \frac{\cos(4\theta)-1}{4} (\ket{11110000} + \ket{00001111}).
\end{align}
We want to find a linear combination of $\ket{\phi_\pi}$ and $\ket{11110000} + \ket{00001111}$ that is orthogonal to $\ket{\phi_\pi}$, which is
\begin{align}
	\ket{\phi_\pi'} \equiv \frac{1}{\sqrt{7}}\left( \ket{\phi_\pi} - 2( \ket{11110000} + \ket{00001111} )\right).
\end{align}
Now we define 
\begin{align}
	\ket{\Phi_\pi'} \equiv \ket{\phi_\pi'} \otimes \ket{\phi_\pi}^{\otimes (n-1)}
\end{align}
and it can be seen that the all the states $\ket{\Psi_\pi}$, $\ket{\Phi_\pi}$ and $	\ket{\Phi_\pi'}$ are approximately orthogonal in the large $n$ limit.

By construction of our single-step distribution $\nu$,
\begin{align}
	\T_4^{\text{Cliff}+\phi} = \frac{p}{2} T_4^{\text{Cliff}} + \frac{p-1}{4}(G_1^{\otimes 4,4}+(G_1^\dagger)^{\otimes 4,4}) + \frac{I}{2}
\end{align}
where $G_1$ means the gate $G = \operatorname{diag}(1,e^{i\theta})$ acting on the first qubit. It is not hard to see that the space $P$ spanned by the set of states $S \equiv \{ \ket{\Psi_\pi} \} \cup \{ \ket{\Phi_\pi} \} \cup \{ \ket{\Psi_\pi'} \}$ is an invariant subspace of $T_4^{\mathcal{E}}$. The states $\ket{\Psi_\pi}$ are obviously unit eigenstates for $T_4^{\text{Cliff}+\phi}$, while on each pair of $\ket{\Phi_\pi}$ and $\ket{\Phi_\pi'}$ the operator $T_4^{\text{Cliff}+\phi}$ is represented as
\begin{align}\label{eq:matrix2d}
    \begin{pmatrix}
        \frac{1}{2} + \frac{1}{2}\left(p+(1-p)\frac{c+7}{8}\right) &  -\frac{\sqrt{7}(p-1)}{16}(c-1) \\
        -\frac{\sqrt{7}(p-1)}{16}(c-1) &
        \frac{1}{2} + \frac{(1-p)(1+7c)}{16}
    \end{pmatrix},
\end{align}
where $c = \cos4\theta$. With a direct diagonalization we can find the gap $\Delta(p,c)$ in this subspace,
\begin{align} \label{eq:gap-4}
    \Delta(p,c) = \frac{1}{4} - \frac{c}{4}(1-p)-\frac{1}{8}\sqrt{4 c^2 (p-1)^2-2 c (p-4) (p-1)+2 p (p-1)+4},
\end{align}
and by taking the derivative w.r.t. $p$ we can find the maximum gap within this subspace and the optimal value of $p$,
\begin{align}
	\tilde \Delta(c) = \frac{1-c}{8 \left(-2 c+ \sqrt{14(1-c)}+4\right)},\quad \tilde p(c) = \frac{4 c^2+ \sqrt{14(1-c)} c-5 c+1}{4 c^2-2 c+2}.
\end{align}

Their plots against $c$ are shown below.
\begin{figure}[ht]
    \centering
    \includegraphics[width=0.4\textwidth]{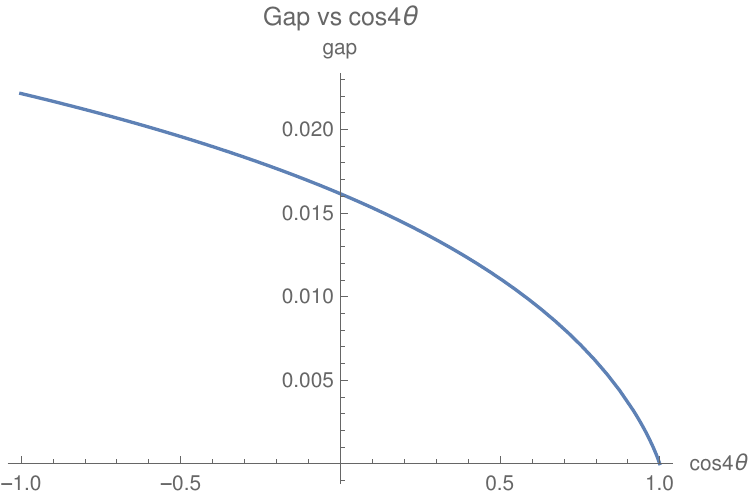}
    \includegraphics[width=0.4\textwidth]{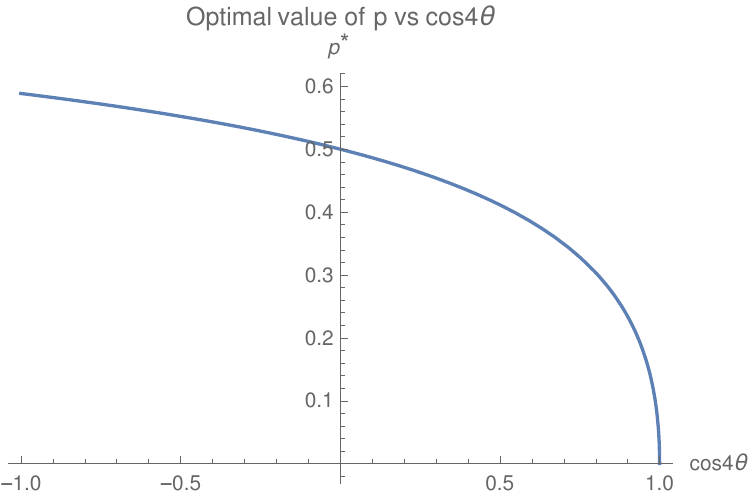}
\end{figure}
The largest gap is obtained at $c=-1$, and the $T$ gate with $\theta = \exp(i\pi/4)$ is one such case.  This is somewhat intuitive because $T$ is the most ``magical'' in this class.

Outside of the subspace $P$ the Clifford integral is 0, so the largest eigenvalue of $T_4^{\text{Cliff}+\phi}$ is at most $1-\frac{p}{2}$. It is easy to show that
\begin{align}
	\Delta(p,c) \ge \frac{p}{2}
\end{align}
for all values of $c$, so $\Delta(p,c)$ is indeed the global spectral gap.

Note that the calculations above are carried out on a set of non-orthogonal basis. When we switch to a set of orthonormal basis, the matrix undergoes a similarity transformation, which has no effect on the eigenvalues. The means that the gap we obtained is exact for all values of $n$.

\subsection{Higher designs}

The $t=5$ case is closely related to the $t=4$ case. We define the strings $s_1',\ldots,s_5'$ to be 1001010010, 0101001010, 0000100001, 0000011110, 1111000000, and the state
\begin{align}
	\ket{T'} = \frac{\sqrt{2}}{8}\sum_{x_i \in\{0,1\}} \ket{x_1 s_1 + \ldots x_5 s_5},
\end{align}
and all the states that span the subspace to which the Clifford average projects are
\begin{align}
	\ket{\Phi_{\sigma,\pi}} \equiv \sigma \otimes \pi \ket{T'}
\end{align}
where $\sigma,\pi \in S_5$. Note that $\ket{T'}$ is just a product state of $\ket{T}$ and a bell pair on 5th and 10th qubit. We could see that the operator $T_5^{\text{Cliff}+\phi}$ should take the same form as the $t=4$, and as a result the gap is given by Eq.~(\ref{eq:gap-4}). Note that in order to ensure the states above are linearly independent, we need $n \ge t-1 = 4$.

For $t$-designs with $t\ge 6$, the method above no longer works, as we will have to diagonalize a high dimensional matrix analytically, which is generally impossible.

\section{Discussion and outlook}

In this work, we explored the stochastic quantum circuit framework and examined the design generation features of various mathematically or experimentally motivated settings, which yield a number of interesting results and insights about the efficiency of different quantum gates and circuit architectures. In particular, we studied the ironed gadget model in depth, leveraging various nontrivial mathematical insights and techniques along the way. This leads to a systematic mathematical  theory for understanding the efficiency of entangling gates and circuit architectures, for instance, identifying certain gates with their exceptional convergence performance such as the newfound $\chi$ and $\iSWAP$ gates.
We expect the results to offer timely value for both theoretical study and experimental design of quantum circuits. 

As mentioned, our framework draws crucial motivation from its relevance to practical limitations of quantum computation architectures. For instance, the implementation of certain gadget models is expected to be directly beneficial for design generation and scrambling experiments. Beyond convergence, they are expected to exhibit all important behaviours of conventional random circuits such as OTOC, entanglement, and complexity dynamics, anticoncentration, etc.
More generally, our efficiency results for gates and circuit topologies are anticipated to provide useful guidelines for experimental efforts, facilitating the design and optimization of quantum experiments. We expect further interplay between these theoretical analyses and experiments to be particularly fruitful.

Our theory gives rise to an abundance of interesting mathematical problems, some of which we have left open for further exploration. A few eminent conjectures from the gadget model are presented in Section~\ref{sec:conjectures}. In particular, it would be interesting to settle the universal optimality conjectures of $\iSWAP$.
Moreover, for high-order designs, the situation such as particularly efficient gates remain less understood. In fact, even on a single qubit, our numerical computations in Section \ref{sec:one-qubit} indicate heterogeneous and complicated behaviours in the choice of efficient gates for different $t$. Rigorous investigations would be interesting for future work. Moreover,  there are other circuit structures that can nontrivially accelerate the convergence speed, e.g., using high dimensional lattices~\cite{harrow2023approximate} or small random circuits~\cite{Schuster2024lowDepth}. It would be interesting to study efficient gate sets based on these models to further improve the performance. Beyond these, extending these results to qudits or continuous variable (CV) models are also valuable directions for future research. Exploring these generalizations could lead to new insights and applications, enhancing our understanding of quantum randomness and circuit architectures.

\section*{Acknowledgements}
We thank Fei Yan for valuable conversations.
ZL and ZWL are supported in part by a startup funding from YMSC, Tsinghua University, and NSFC under Grant No.~12475023.

\bibliography{convergence}
\bibliographystyle{unsrt}

\end{document}